\newtheorem{thm}{Theorem}
\newtheorem{Assumption}{Assumption}
\newtheorem{Lemma}{Lemma}
\newtheorem{Corollary}{Corollary}
\definecolor{forestgreen}{rgb}{0.0, 0.27, 0.13}
\newcommand{\x}{\mathbf{x}}
\newcommand{\y}{\mathbf{y}}
\newcommand{\Y}{\mathbf{Y}}
\newcommand{\X}{\mathbf{X}}
\renewcommand{\P}{\mathbb{P}}
\newcommand{\I}{\mathbb{I}}
\newcommand{\E}{\mathbb{E}}
\newcommand{\V}{\mathbb{V}}
\newcommand{\R}{\mathbb{R}}
\newcommand{\ourmethod}{\texttt{TRUST}}
\newcommand{\ourmethodpp}{\texttt{TRUST++}}
\begin{document}

\title{Conformal Calibration of Statistical Confidence Sets}

\author{\name Luben M. C. Cabezas \email lucruz45.cab@gmail.com \\
       \addr Department of Statistics and Institute of Mathematics and Computer Science\\
       Federal University of São Carlos and University of São Paulo\\
       São Carlos, SP 13565-905 and 13566-590, Brazil
       \AND
       \name Guilherme P. Soares \email gui.pedri.soares@gmail.com \\
       \addr Institute of Mathematics and Computer Science\\
       University of São Paulo\\
       São Carlos, SP 13566-590, Brazil
       \AND
       \name Thiago R. Ramos \email  thiagorr@ufscar.br\\
       \addr Department of Statistics\\
       Federal University of São Carlos\\
       São Carlos, SP 13565-905, Brazil
       \AND
       \name Rafael B. Stern \email  rbstern@gmail.com\\
       \addr Institute of Mathematics and Statistics\\
       University of São Paulo\\
       São Paulo, SP 05508-090, Brazil
       \AND
       \name Rafael Izbicki \email  rafaelizbicki@gmail.com\\
       \addr Department of Statistics\\
       Federal University of São Carlos\\
       São Carlos, SP 13565-905, Brazil
}

\maketitle

\begin{abstract}
Constructing valid confidence sets is a crucial task in statistical inference, yet traditional methods often face challenges when dealing with complex models or limited observed sample sizes. These challenges are frequently encountered in modern applications, such as Likelihood-Free Inference (LFI). In these settings, confidence sets may fail to maintain a confidence level close to the nominal value.
In this paper, we introduce two novel methods, \ourmethod\ and \ourmethodpp, for calibrating confidence sets to achieve distribution-free conditional coverage.  
 These methods rely entirely on simulated data from the statistical model to perform calibration.
Leveraging insights from conformal prediction techniques adapted to the statistical inference context, our methods ensure both finite-sample local coverage and asymptotic conditional coverage as the number of simulations increases, even if $n$ is small. They effectively handle nuisance parameters and provide computationally efficient uncertainty quantification for the estimated confidence sets.
This allows users to assess whether additional simulations are necessary for robust inference.
Through theoretical analysis and experiments on models with tractable and intractable likelihoods, we demonstrate that our methods outperform existing approaches, particularly in small-sample regimes. This work bridges the gap between conformal prediction and statistical inference, offering practical tools for constructing valid confidence sets in complex models.
\end{abstract}

\section{Introduction}

Confidence sets are fundamental tools in statistical inference, allowing researchers to constrain the value of a parameter \(\theta \in \Theta\) based on observed data \(\x \in \mathcal{X}\). A confidence set \(R(\x) \subset \Theta\) is considered valid from a frequentist perspective if it satisfies the condition
\begin{equation}
\label{eq:cond_coverage}
\P\left( \theta \in R(\X)|\theta\right) =  1-\alpha   \ \ \forall \theta \in \Theta, 
\end{equation} 
where \(\alpha \in (0,1)\) is a predefined significance level. This means that \(R\) must achieve the correct coverage regardless of the true value of \(\theta\).

Traditional methods for constructing confidence sets, such as those detailed in standard textbooks \citep{degroot2012probability, schervish2012theory, casella2024statistical}, are often
inadequate when applied to complex modern models. For instance, constructing confidence sets for mixture models is notably challenging because standard asymptotic results may not hold \citep{chen2009hypothesis, wichitchan2019hypothesis}. Additionally, many traditional methods rely on asymptotic distributions, making them unsuitable for problems with small sample sizes ($n$). 

The challenges become more difficult in Likelihood-Free Inference (LFI) scenarios, where the statistical model is implicitly defined by a complex simulator of $\X|\theta$ and the likelihood function is intractable \citep{izbickiLeeSchafer, lueckmann2019likelihood, izbicki2019abc, papamakarios2019likelihood, cranmer2020frontier}.
In such cases, test statistics must be estimated, and as a result, traditional methods for constructing confidence sets often perform poorly \citep{dalmasso2021likelihood}.

This work aims to calibrate confidence sets to achieve conditional coverage (Equation \(\ref{eq:cond_coverage}\)), even for challenging models. We focus on confidence sets defined as:
\begin{align}
\label{eq:confidence}
R(\X) := \left\{\theta \in \Theta \mid \tau(\X,\theta) \geq C_\theta \right\},   \end{align}
where the cutoff \(C_\theta\) is chosen to ensure conditional coverage, and \(\tau\) measures the plausibility that \(\X\) was generated from \(\theta\) (in the language of hypothesis tests, $\tau$ is a test statistic). We assume \(\tau\) is fixed (see Sections \ref{sec:review_confidence} and \ref{sec:review_lfi} for details on its selection) and provide tools for calibrating \(C_\theta\).

Our calibration process only assumes the availability of a simulated dataset of independent pairs, \(\{(\theta_1, \X_1), \ldots, (\theta_B, \X_B)\}\), where each \(\theta_b\) is drawn from some reference distribution \(r(\theta)\) and each \(\X_b\) is generated from the statistical model with parameters \(\theta_b\), i.e., \(\X_b \sim \X \mid  \theta_b\). We assume that $\tau(\X,\theta)|\theta$ is strictly continuous for every $\theta$.
Note that each \(\mathbf{X}_b\) represents the entire dataset that may be observed, rather than a single sample point.

\textbf{Novelty.}  
While several methods have been proposed to calibrate \( C_\theta \) for general problems (Section \ref{sec:related_work}),
 our method offers several new contributions:

\begin{enumerate}
    \item \textbf{Distribution-free guarantees with $B$-asymptotic conditional coverage.}  
    Our approach constructs confidence sets \( \widehat R_B(\X) := \{\theta \in \Theta \mid \tau(\X,\theta) \geq \widehat C_\theta \}    \) with finite-sample local coverage:    
    $$\P\left(\theta \in \widehat{R}_B(\X)|\theta \in A \right)=1-\alpha,$$ 
    where $A$ is a subset of $\Theta$ chosen so that 
    $$\P\left(\theta \in \widehat{R}_B(\X)|\theta \in A \right) \approx \P\left(\theta \in \widehat{R}_B(\X)|\theta  \right).$$ 
    Additionally, our method provides $B$-asymptotic conditional coverage:    
    $$\lim_{B \longrightarrow \infty} \P\left(\theta \in \widehat{R}_B(\X)|\theta \right)=1-\alpha.$$

    Thus, our method is robust to poor estimates of \(C_{\theta}\), offering distribution-free guarantees while maintaining \(B\)-asymptotic validity. 
    Notably, this coverage is \emph{not} asymptotic with respect to the size of the observed dataset \( \mathbf{x} \), \( n \), as is common with traditional asymptotic approximations. Instead, it relies solely on the number of simulations, \( B \), which can generally be increased given sufficient computational resources. Furthermore, the set \(A\)  is chosen to ensure that local coverage approximates conditional coverage.
    
    \item \textbf{Uncertainty quantification.}  
In most cases, except for rare instances where \( C_\theta \) can be computed directly (Section \ref{sec:related_work}), current methods provide only an estimate \( \widehat{C}_\theta \) of the true \( C_\theta \). As a result, the confidence intervals produced are only approximations of the true intervals which inherently contain errors. However, these methods do not offer a way to quantify the uncertainty surrounding these estimated intervals. In contrast, our method explicitly constructs confidence sets that provide reliable uncertainty quantification around the estimated confidence intervals. Furthermore, this uncertainty quantification is computationally efficient and does not rely on costly procedures such as bootstrapping. Moreover, it  allows users to assess whether additional simulations are necessary for robust inference.
    
    \item \textbf{Effective handling of nuisance parameters.}  
Most models involve nuisance parameters, which are not of direct interest but can complicate inference. Current methods often struggle with properly handling nuisance parameters, particularly when they are numerous, as they typically require expensive numerical maximization algorithms in high-dimensional spaces to ensure coverage \citep{dalmasso2021likelihood}. In contrast, our method significantly reduces the computational burden by transforming the high-dimensional maximization problem into one of finding the maximum over a small, finite set of points, allowing for much more efficient optimization.
\end{enumerate}

\begin{figure}[!htb]
    \centering
    \subfloat[\centering 95\% confidence intervals. Blue dotted lines represent the lower and upper bounds for the oracle confidence interval.]{{\includegraphics[width=0.8\textwidth]{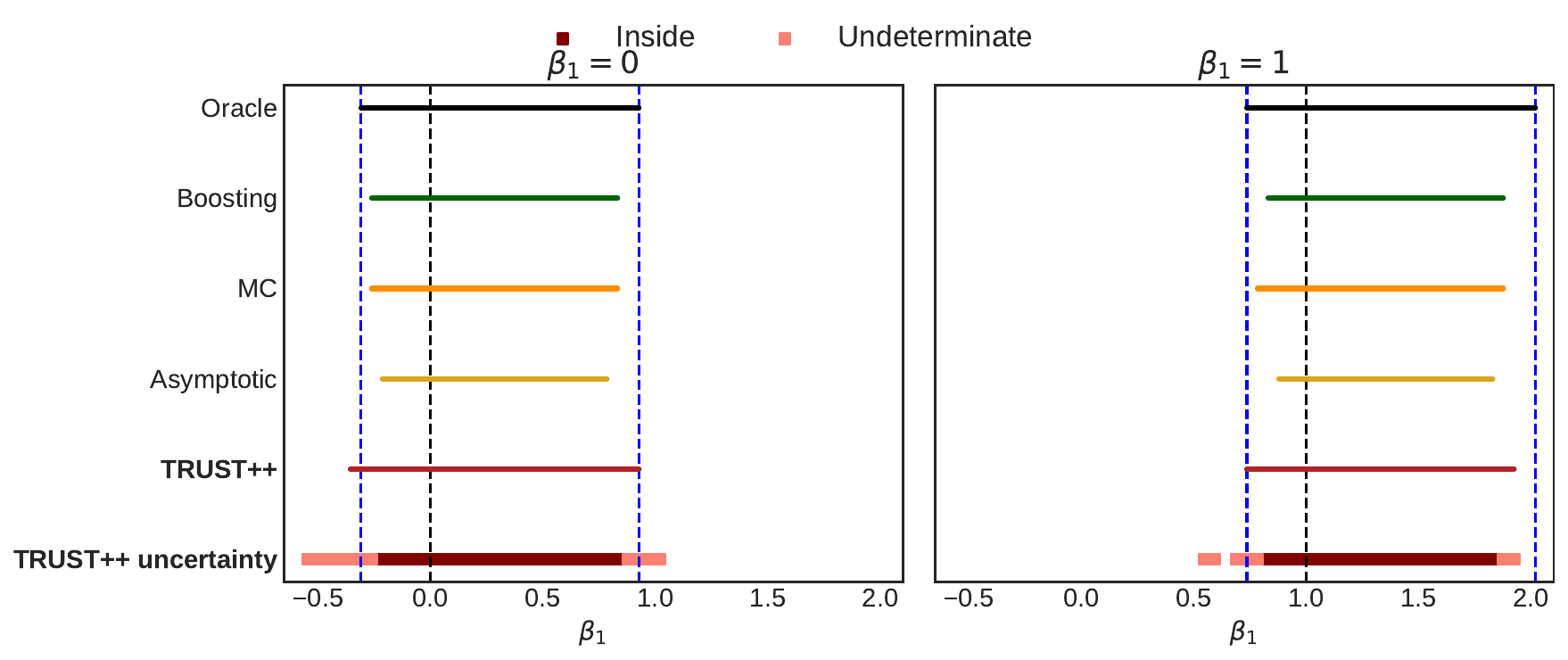}}}%
    \qquad
    \subfloat[\centering Deviance from oracle coverage.]{{\includegraphics[width=0.8\textwidth]{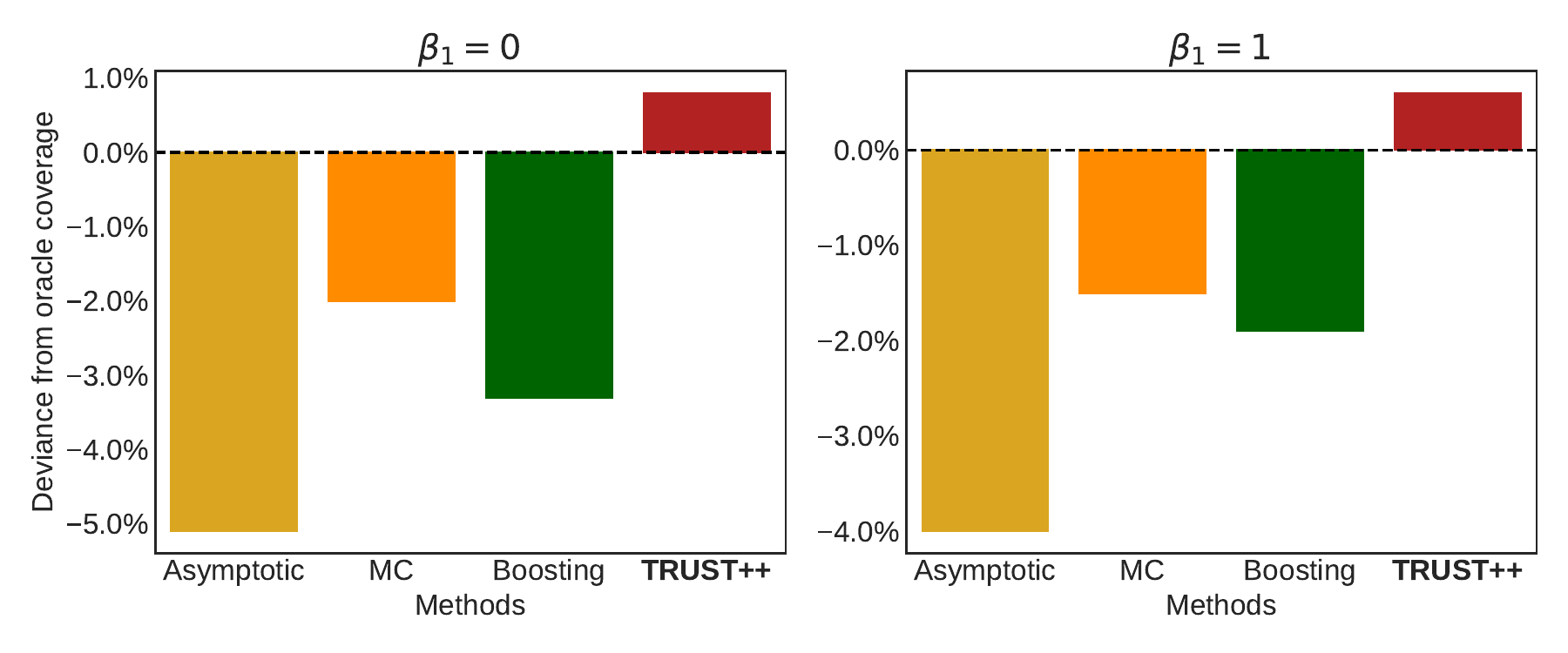}}}%
    \caption{Confidence intervals and deviance from oracle coverage comparisons for the GLM example with a sample size of $n = 50$. Our method closely approximates the oracle in terms of both the estimated confidence intervals (top) and coverage probabilities (bottom), while also providing uncertainty quantification for the intervals. This feature highlights regions where the true parameter value is more likely to lie. If the undetermined region is large, users can increase the number of simulations to reduce its size and refine the results. }%
    \label{fig:example_CI_and_prob_coverage_GLM}%
\end{figure}

Figure \ref{fig:example_CI_and_prob_coverage_GLM} compares \ourmethodpp\ with other methods for constructing confidence intervals for the slope coefficient in the generalized linear model explored in Section \ref{sec:nuisanceExamples}. Despite the presence of 3 nuisance parameters, \ourmethodpp\ produces intervals that are closer to the oracle confidence set, both in terms of size and empirical coverage. It even outperforms traditional GLM confidence sets based on $\chi^2$ approximations, as well as Monte Carlo-based intervals.

Our tools are developed using insights from conformal prediction techniques (see Section \ref{sec:related_work} for related work). However, while conformal methods focus on generating prediction sets for new labels $Y$, which are random variables, statistical inference aims to create confidence sets for a fixed parameter $\theta$ that generated the data $\x$. Thus, conformal techniques need to be adapted to fit this purpose. In particular, asymptotic conditional coverage, which is essential for valid sets, is not guaranteed by standard conformal methods.

We provide two methods:
\begin{itemize}
\item \ourmethod, which uses regression trees to offer a fast approximation to $C_\theta$.
\item \ourmethodpp, which enhances \ourmethod \ by employing bagging of trees to deliver more accurate estimates. The extension from trees to forests is non-trivial due to the need for finite sample guarantees.
\end{itemize}

The remaining of this work is organized as follows: Section \ref{sec:related_work} contains a discussion on related work. Section \ref{sec:methodology}   details the methodology, including the proposed methods and estimation procedures. Section \ref{sec:theory} presents the theoretical results. Section \ref{sec:implementation} discusses implementation details. Section \ref{sec:applications} provides applications to examples with tractable likelihood functions, likelihood-free inference, and nuisance parameter problems. Section \ref{sec:final_remarks} offers final remarks. Additional materials are provided in Appendices \ref{appendix:experiment_details} through \ref{appendix:proofs}, including experimental details, supplementary algorithms, and complete theoretical proofs. Finally, all code and supplementary materials are available in the GitHub repository:
 \href{https://github.com/Monoxido45/CSI}{https://github.com/Monoxido45/CSI}

\subsection{Relation to Other Work }
\label{sec:related_work}

\subsubsection{Classical Confidence Sets}
\label{sec:review_confidence}

Confidence sets typically follow the form described in Equation \ref{eq:confidence}. The function $\tau(\X,\theta)$ can be a pivotal quantity, which is a quantity whose distribution does not depend on $\theta$ (e.g., $\sigma^{-1}(X-\mu)$ when $X \sim N(\mu,\sigma)$ and $\sigma$ is known). Another example of $\tau(\X,\theta)$ is test statistics used for testing the null hypothesis that the data was generated from $\theta$, such as the likelihood ratio statistic $f(\x|\theta)/\sup_{\theta' \in \Theta} f(\x|\theta')$ \citep{lehmann1986testing}.

A key challenge is computing $C_\theta$. In simple cases, $C_\theta$ can be determined in closed form. For instance, when $\tau$ is a pivotal quantity, $C_\theta$ does not depend on $\theta$ and is typically straightforward to find \citep{casella2024statistical}. More often, asymptotic derivations are employed, which are effective only for large sample sizes $n$ and for certain test statistics \citep{van2000asymptotic}. However, these approximations require regularity conditions that frequently fail \citep{algeri2019searching}. Similarly, bootstrap  methods can be used, but these are  effective only for large sample sizes $n$ and also require regularity conditions \citep{van2000asymptotic}.

Our methods instead provide consistent (as $B$ increases) estimates of $C_\theta$  for any fixed $\tau$,
 even for small sample sizes $n$. We only require  a simulated dataset $\{(\theta_1,\X_1),\ldots,(\theta_B,\X_B)\}$.

\subsubsection{Likelihood-Free Inference}
\label{sec:review_lfi}

Likelihood-Free Inference (LFI) problems are problems in which  the statistical model lack a tractable analytical distribution for \(\mathbf{X}|\theta\), but it is still possible to generate simulations from it. For a comprehensive review, see \citet{cranmer2020frontier}. In such situations, $\tau$ needs to be defined from the simulated set as well. For instance, one may use $\tau(\x,\theta_0)$ as \(\widehat{\V}^{-1}[\theta|\x] (\widehat{\E}[\theta|\x]-\theta_0)^2\), where the conditional mean and variance are estimated using the simulated dataset \citep{masserano2023simulator}. For other possibilities, see e.g. \citet{brehmer2020mining,dalmasso2021likelihood}.

In LFI, once $\tau$ has been defined, $C_\theta$ is often estimated via Monte Carlo simulations for each fixed $\theta$, but this approach becomes prohibitively expensive in higher-dimensional spaces. Alternatively, \citet{dalmasso2020confidence, dalmasso2021likelihood, masserano2023simulator} observed that $C_\theta$ is the $1-\alpha$ conditional quantile of $\tau(\X,\theta)$ on $\theta$, and thus used quantile regressors to estimate $C_\theta$ using the simulated dataset $\{(\theta_1, \X_1), \ldots, (\theta_B, \X_B)\}$, where $\theta_b$ is drawn from a reference distribution \(r(\theta)\) and \(\X_b\) is drawn from the statistical model at $\theta_b$. These methods recover $C_\theta$ as \(B \to \infty\), but they do not offer finite-sample guarantees.


There is also a substantial body of work on Bayesian LFI, but these methods do not achieve conditional in the sense of Eq.~\ref{eq:cond_coverage}) even as \(B \to \infty\) \citep{hermans2021trust}.

\subsubsection{Conformal Prediction}
\label{sec:related_conformal}

Conformal methods have recently emerged as an approach to using data to obtain valid prediction regions under minimal assumptions \citep{vovk2005algorithmic,shafer2008tutorial,angelopoulos2023conformal}. Specifically, given exchangeable labeled data $\{(\X_1, Y_1), \ldots, (\X_m, Y_m)\}$, these methods create prediction regions \(R(\x)\) such that
\begin{align}
\label{eq:marginalcoverage}
\P\left(Y_{m+1} \in R(\X_{m+1}) \right) \geq 1 - \alpha.    
\end{align}
The first step in a standard conformal method involves creating a conformity score \(\hat{s}: \mathcal{X} \times \mathcal{Y} \to \mathbb{R}\), which measures how well the outputs of a regression model predict new labels \(Y \in \mathcal{Y}\). In a regression setting, a standard choice for the conformity score is the regression residual, given by \(\hat{s}(\x, y) = |y - \widehat{\E}[Y|\x]|\). The prediction region then takes the form \(R(\x) = \{y : \hat{s}(\x, y) \leq t\}\), where \(t\) is chosen to be the adjusted \( (1 - \alpha)\)-quantile of the residuals \(\hat{s}(\x, y)\) evaluated on a holdout set not used to train \(\hat{s}\).

The function \(\tau\) used to construct confidence sets can be interpreted as a conformal score \(\widehat{s}\), and $\theta$ can be interpreted as $Y$. However, statistical theory is typically employed to design \(\tau\) in a way that yields optimal confidence sets, leveraging the known distribution \(f(\mathbf{x} \mid \theta)\). Therefore, standard conformal scores are suboptimal for designing \(\widehat{s}\) for confidence sets. Additionally, most conformal scores   only deal with univariate \(Y\), whereas \(\theta\) is generally multivariate (see however \citet{dheur2024distribution} for approaches that can be used for multivariate $Y$'s).

Another key difference between prediction intervals and confidence sets is that \(\theta\) is not random, whereas \(Y\) is. Thus, marginal coverage (Eq.~\ref{eq:marginalcoverage}) is insufficient for statistical inference problems; typically, coverage for all fixed \(\theta\)'s, corresponding to conditional coverage (Eq.~\ref{eq:cond_coverage}), is also desired. This is because the distribution $r(\theta)$, used to sample $\theta$ on the training set, is often arbitrary; it does not encode any randomness observed in the real world.
Unfortunately, it is well-known that conditional coverage can't be achieved without strong assumptions \citep{lei2014distribution}. While many conformal approaches aim to achieve conditional coverage asymptotically, they only control coverage conditional on the features \(\x\), and not \(y\). The exception to this are label-conditional conformal prediction methods \citep{vovk2014conformal,vovk2016criteria,sadinle2019least,ding2023class}, but these apply only to classification problems.

Finally, we note that conformal sets have recently been applied to statistical inference \citep{patelvariational,baragatti2024approximate}. While these methods are highly useful, they are designed for a Bayesian framework, where the focus is on controlling marginal coverage over the prior distribution.

\subsubsection{Partition-Based Conformal Methods}

Our approach to asymptotically controlling conditional coverage is closely related to the works of \citet{vovk2005algorithmic,lei2014distribution,bostrom2020mondrian,izbicki2020flexible,izbicki2022cd,vovk2022algorithmic}. These methods first partition the feature space and then calibrate \( t \) using the conformal approach described in the previous section for each partition element separately. In particular, \ourmethod\ is closely related to LoCart \citep{cabezas2025regression}, which creates a data-driven partition of \(\mathcal{X}\) using regression trees designed to ensure that
\[
\P\left(Y_{n+1} \in R(\X_{n+1}) \mid \X_{n+1}\right) \approx 1 - \alpha.
\]
However, instead of partitioning the feature space \(\mathcal{X}\), we partition the parameter space \(\Theta\) to achieve coverage conditions on \(\theta\).

\subsubsection{Conformal Predictive Distributions}
\label{sec:predDist}

Conformal Predictive Distributions (CPDs) are a recent extension of the conformal prediction framework \citep{Vovk2019, Vovk2020,jonkers2024cpd}. These aim to model the entire predictive distribution of $Y$ in a non-parametric way. 
If a conformal $\hat{s}$ 
is isotonic and balanced \citep{Vovk2020}, the CPD for a new test instance $\X_{n+1}$ is defined as:
\begin{align}
\label{eq:CPD}
    Q((\X_{n + 1}, y), u) := \frac{1}{n} \sum_{i = 1}^n \I(\hat{s}(\X_i, Y_i) \leq \hat{s}(\X_{n + 1}, y)) + \frac{u}{n}  \; ,
\end{align}
for each $y \in \mathbb{R}$ and where $u \sim U(0,1)$ serves as a correction term to ensure the continuity of $Q((X_{n + 1}, y), u)$. 

In simple terms, $Q$ provides an estimated probability distribution for the test labels $y$ based on conformity scores. Specifically, if the data is i.i.d., $Q$ is valid, meaning that $Q((\X_{n + 1}, y), u)$ follows a uniform distribution \citep{Vovk2019}, and thus $Q$ is a valid cumulative distribution function. Leveraging the validity of $Q$, we construct $1-\alpha$ confidence CP prediction intervals by utilizing the specific $\alpha/2$ and $1 - \alpha/2$ percentiles of the CPD \citep{Bostroem2021}.

In this work, we use Conformal Predictive Distributions to estimate the distribution of the test statistic \(\tau(\mathbf{X}, \theta)\). This approach provides the desired distribution-free properties when estimating \(C_\theta\). Additionally, our CPDs allow for the computation of p-values for hypotheses of interest. However, directly using the definition from Eq.~\ref{eq:CPD} for statistical inference problems will not lead to consistent intervals for any fixed \(\theta\). The percentiles of \(Q\) would not depend on \(\theta\), which is the same issue discussed in Section~\ref{sec:related_conformal}. To address this, we employ a local CPD based on a partition of \(\Theta\). We note that \citet{Bostroem2021} also constructs CPDs based on partitions, but these partitions are made in \(\mathcal{X}\) rather than in \(\Theta\). While this approach is valuable for prediction tasks, it does not address the problem of statistical inference.

\section{Methodology}
\label{sec:methodology}

Fix a test statistic $\tau$, and
let  $H(\cdot|\theta)$ denote the 
distribution of   $\tau(\X,\theta)$ given $\theta$, that is, for each $\tau_0 \in \mathbb{R}$, let
$$H(\tau_0|\theta):=\P\left( \tau(\X,\theta)\leq \tau_0|\theta\right).$$

Now, notice that for the confidence set from Equation \ref{eq:confidence} to have a $(1-\alpha)$ confidence level, $C_\theta$ must satisfy
$$1-\alpha = \P(\theta \in R(\X) | \theta) = \P(\tau(\X, \theta) \geq C_\theta| \theta).$$
In other words, we need $\P(\tau(\X, \theta) \leq C_\theta) = \alpha$, which implies that $C_\theta$ is connected to the CDF $H$ by
$$C_\theta = H^{-1}(\alpha | \theta),$$
where $H^{-1}(\cdot | \theta)$ denotes the generalized inverse of $H(\cdot | \theta)$, given by
$$H^{-1}(\lambda) = \inf \{ t \in \R : H(t | \theta) \geq \lambda \}.$$
This suggests we derive confidence sets by estimating $H$ and then using the plugin cutoff  
\[
\widehat C_{\theta,B}=\widehat H^{-1}_{B}(\alpha|\theta),
\]  
where $\widehat H_B$ is the adjusted empirical distribution function defined as  
\[
\widehat H_B(t|\theta) := \frac{1}{B+1}\left(\sum_{b=1}^B \mathbf{1}\{\tau(\X^{(b)},\theta)\leq t\} + 1\right).
\]
The estimated confidence set will then be
$$ \widehat R_B(\X) := \left\{\theta \in \Theta \mid \tau(\X,\theta) \geq \widehat C_{\theta.B} \right\}.$$
The estimated conditional CDF $\widehat H_{B}$ can also be used to test statistical hypothesis of the type
$H_0:\theta=\theta_0$. Specifically, p-values
can be estimated via
$$\widehat  H_B(\tau(\x_{\text{obs}},\theta_0)|\theta_0),$$
where $\x_{\text{obs}}$ represents the observed data\footnote{Indeed, under the null hypothesis, 
$
  \P \left( H( \tau(\X_{\text{obs}},\theta_0))|\theta_0) \leq \alpha|\theta_0 \right)=\P \left(\tau(\X_{\text{obs}},\theta_0) \leq H^{-1}\left( \alpha|\theta_0\right) |\theta_0 \right) 
  = H\left(H^{-1}(\alpha) \right) 
  =\alpha,
$
and therefore 
$\widehat  H_B(\tau(\x_{\text{obs}},\theta_0)|\theta_0)$ will be a valid p-value if $H$ is consistently estimated as $B \longrightarrow \infty$.
}.

We build on this by introducing a novel approach to estimate $H(\cdot|\theta)$, which ensures robust properties.

\subsection{Estimating $H(\cdot|\theta)$}

Our first estimator of $H(\cdot | \theta)$, \ourmethod \ (Tree-based Regression for Universal Statistical Testing),   is constructed by partitioning $\Theta$. First, we show how any partition of $\Theta$ can be used to form an estimator of $H$. Following that, we discuss how to optimally select such a partition.

\subsubsection{Partition-based Estimate}
\label{sec:partitioned_based_estimate}

Let $\mathcal{A}$ be any fixed partition of $\Theta$
and \(\{(\theta_1, \X_1), \ldots, (\theta_B, \X_B)\}\) be the simulated dataset from the statistical model.
The cumulative distribution function $H(\cdot|\theta)$ can be estimated using the empirical cumulative distribution of the $\theta_b$ values from the simulated set that fall into the same partition element as $\theta$. Formally, for each $t \in \mathbb{R}$, we define
\begin{align}
\label{eq:empirical}
    \widehat{H}_B(t|\theta, \mathcal{A}) := \frac{1}{|I_{A(\theta)}|+1}\left(\sum_{b \in I_{A(\theta)}}\I\left(\tau(\X_b, \theta_b) \leq t\right)+1\right),
\end{align}
where $A(\theta)$ represents the element of $\mathcal{A}$ containing $\theta$, and $I_{A(\theta)} = \{b \in \{1,\ldots,B\} : \theta_b \in A(\theta)\}$ is the set of indices for all $\theta_b$ values that fall into $A(\theta)$. 

With this partition-based construction of $\widehat{H}$, the plugin cutoff $\widehat{C}_{\theta,B} = \widehat{H}^{-1}_B(\alpha|\theta)$, used to estimate confidence sets, corresponds to the adjusted $\alpha$-quantile of the values $\{\tau(\X_b, \theta_b) : b \in I_{A(\theta)}\}$. 
 Also, \(\widehat{H}_B\) is essentially a conditional predictive conformal distribution (Section \ref{sec:predDist}), with the exception that we do not add the uniform distribution used to ensure continuity as it is not needed here.
This method therefore aligns with a partition-based conformal approach. 
However, unlike standard partition-based conformal methods, the partition $\mathcal{A}$ must be selected to approximate $H(\alpha|\theta)$ well. The next section discusses how to choose $\mathcal{A}$ to achieve this goal.

\subsubsection{Choosing the partition: \ourmethod}
\label{sec:choosingA}

\ourmethod \ uses the estimator in Equation \ref{eq:empirical} for a specific
 partition of $\Theta$, $\mathcal{A}$. This partition is chosen
 by a data-driven optimization process designed to yield  an accurate estimate of $H$. Specifically, \ourmethod\ 
seeks for the coarsest partition 
such that all $\theta$'s that fall into the same element   will share a similar conditional distribution $H(\cdot|\theta)$.  This
choice will guarantee that  
\begin{align*}
    \P\left( \theta \in \widehat R_B(\X)|\theta\right)  \approx \P\left( \theta \in \widehat R_B(\X)|\theta \in A \right) =1-\alpha.
\end{align*}

In practice, \ourmethod \ creates a regression tree that uses $\theta$ as input and outputs $\tau(\x,\theta)$
using the simulated data
$(\theta_1,\tau(\X_1,\theta_1)),\ldots,(\theta_B,\tau(\X_B,\theta_B)),$
which is derived from the original dataset. 
This tree effectively partitions $\Theta$, and this partition has the desired property above  \citep[Theorem 1]{meinshausen2006quantile}.
 
Although our theoretical results assume that the data used to construct the optimal partition is separate from the data used to build the empirical CDF in Equation \ref{eq:empirical} (refer to Section \ref{sec:theory} for details), in practice, we do not perform this data split, as it did not yield any performance gains in our experiments. Algorithm \ref{alg:trust} outlines the complete \ourmethod\ procedure, and Figure \ref{fig:trust_flowchart} provides a graphical visualization of it.

\RestyleAlgo{ruled}
\SetKwComment{Comment}{/* }{ */}

\begin{algorithm}[hbt!]
\caption{\ourmethod \ algorithm}\label{alg:trust}
\KwData{Simulated dataset  $(\X_1, \theta_1), \dots, (\X_B, \theta_B)$; a grid $\Theta_{grid} \subseteq \Theta$}
\KwResult{The confidence set $ \widehat R_B(\X)$}
compute $\mathcal{I} = (\theta_1, \tau(\X_1, \theta_1)), \ldots, (\theta_B, \tau(\X_B, \theta_B))$\;
split $\mathcal{I}$ in $\mathcal{I}_{train}$ and $\mathcal{I}_{cal}$, where $\mathcal{I}_{train} \cap \mathcal{I}_{cal} = \emptyset$\;
fit a decision tree $\widehat g$ with $\mathcal{I}_{train}$: 
$\widehat g(\theta) \approx \mathbb{E}[\tau(\X, \theta) \mid \theta]$\;
$\widehat g$ will create a partition $\widehat{\mathcal{A}}_\text{train} = \{A_1, \dots, A_K\}$ of $\Theta$\;
\For{$\theta \in \Theta_{grid}$}{
  $A(\theta) \gets$ element of $\widehat{\mathcal{A}}_\text{train}$ where $\theta$ falls\; 
  $J_{A(\theta)} \gets \{i\in [\mathcal{I}_{cal}]: \theta_i \in A(\theta)\}$\;
  $\widehat{C}_{\theta,B} \gets$ adjusted $\alpha$-quantile of $\{\tau(X_c,\theta_c)\}_{c \in J_{A(\theta)}}$\;
}
\Return{$ \widehat R_B(\X) = \left\{\theta^* \in \Theta_{grid} \mid \tau(\X,\theta^*) \geq \widehat C_{\theta^*,B} \right\}$}
\end{algorithm}

\begin{figure}[!htb]
    \centering{{\includegraphics[width=0.8\textwidth]{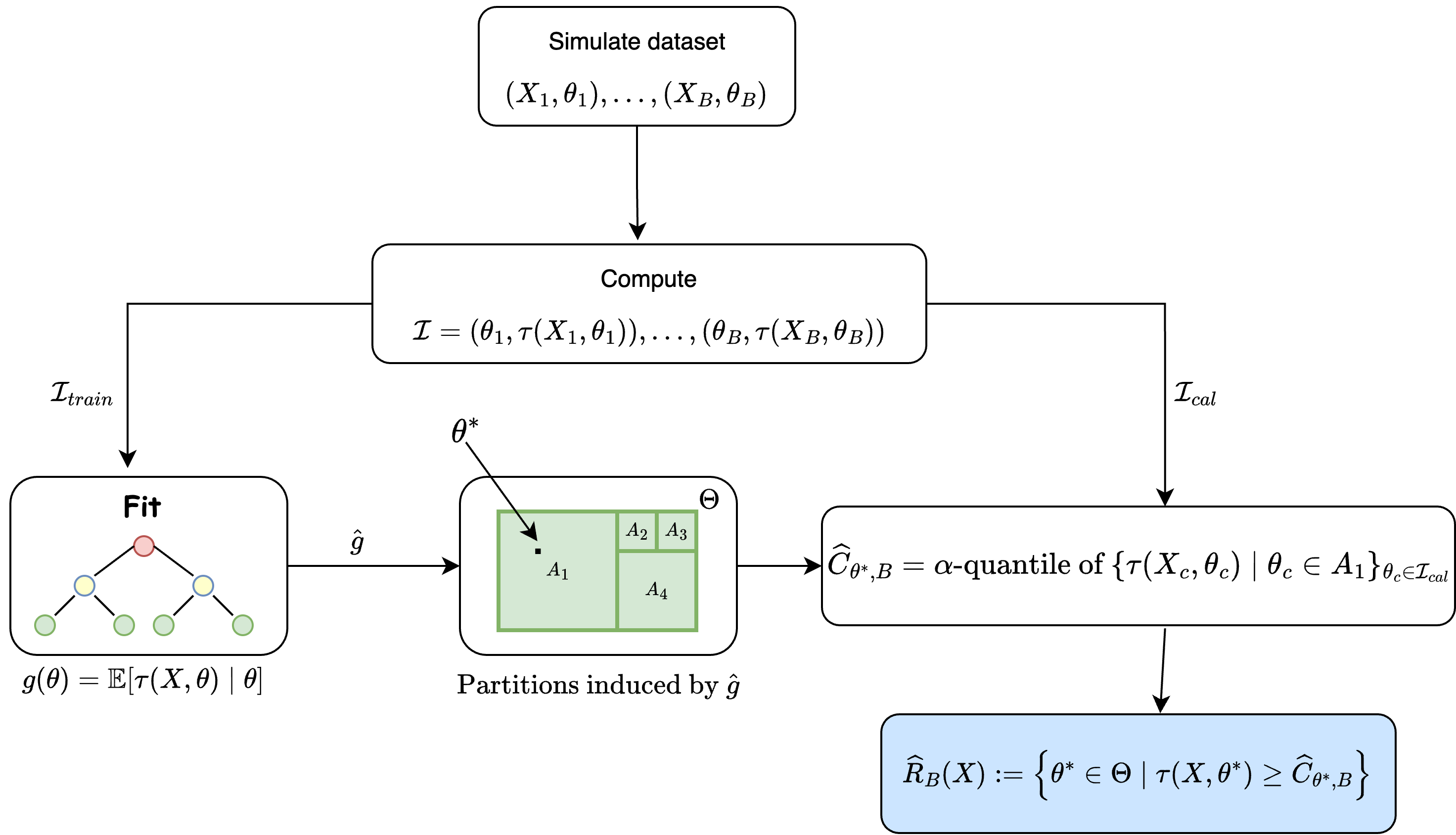}}}%
    \qquad
    \caption{Schematic flowchart of \ourmethod. Over a simulated dataset $(\X_1, \theta_1), \dots, (\X_B, \theta_B)$, we compute the statistic $\tau$ and create the set $\mathcal{I}$, which is splitted between $\mathcal{I}_{train}$ and $\mathcal{I}_{cal}$. $\mathcal{I}_{train}$ is used to train a regression tree that predicts $\tau(\X, \theta)$ using $\theta$. The tree partitions the parametric space in $A_1, \dots, A_k$. If a $\theta^*$, from the grid, falls inside the $A_1$ partition then, in order to calculate the cutoff $\widehat C_{\theta^*, B}$, we will compute the adjusted $\alpha$-quantile of all $\tau(\X, \theta)$'s from which $\theta$ belongs both to $\mathcal{I}_{cal}$ and $A_1$. Finally, we use the calculated cutoff to find the confidence set $\widehat R_{B}$.} %
    \label{fig:trust_flowchart}%
\end{figure}

\subsection{From trees to forests: \ourmethodpp}

Since a single regression tree often lacks expressiveness, we extend our approach to random forests. However,  averaging the \(\widehat{C}_{\theta, B}\) values obtained across trees does not ensure distribution-free guarantees \citep{cabezas2025regression}.
Instead, we use a random forest to design a new partition of $\Theta$, and then apply the methodology described in Section \ref{sec:partitioned_based_estimate}. Specifically, we first create $K$ \ourmethod \ regression trees, each on a different bootstrap sample of the simulated dataset $\{(\theta_1,\X_1),\ldots,(\theta_B,\X_B)\}$. Next, let $\rho(\theta',\theta)$ denote Breiman's proximity measure \citep{breiman2001random}, which counts the number of times $\theta'$ and $\theta$ appear in the same leaf across the $T$ trees. Finally, we define the partition $\mathcal{A}$ as the partition induced by the equivalence relation
$$\theta \sim \theta' \iff  \rho(\theta',\theta)=K.$$
That is, two $\theta$'s belong to the same element of $\mathcal{A}$ if and only if  they fall into the same leaves on all trees.
Thus, each element $A \in \mathcal A$ consists of parameter values that tend to co-occur in the same leaves across multiple trees in the random forest, and therefore share a similar value of $H(\cdot|\theta)$.

We also use a generalized version of \ourmethodpp \ that uses 
$$ A(\theta):=\{\theta' \in \Theta \mid \rho(\theta', \theta)\geq M \},$$
where $M$ is a tuning parameter 
in Equation \ref{eq:empirical}. $M$ can be interpreted as the minimal number of trees required to vote for $\theta$ and $\theta'$ to be in the same partition element. See Section \ref{sec:implementation} for details on how we choose it in practice.
Although this version does not always control coverage for $M<K$ \citep{guan2023localized}, since it does not induce a partition,  in practice it  has good performance. 

\ourmethodpp\ can also be framed as a specific instance of the LCP framework by utilizing a box kernel with the Breiman proximity measure  replacing the commonly used Euclidean distance.  This adaptation allows us to leverage the conformalization procedures from \cite{guan2023localized} and \cite{hore2023conformal} to transform \ourmethodpp\ into a proper conformal method, thereby ensuring exact coverage control. In this setting, the tuning parameter $M$ serves as a kernel bandwidth, regulating the granularity of the partition $\mathcal{A}$.

Algorithm \ref{alg:trustpp} summarizes the non-conformalized version of our method, which we name \ourmethodpp, and Figure \ref{fig:trustpp_flowchart} shows the flowchart of the procedure.

\begin{algorithm}[hbt!]
\caption{\ourmethodpp \ algorithm}\label{alg:trustpp}
\KwData{Simulated dataset  $(\X_1, \theta_1), \dots, (\X_B, \theta_B)$; a grid $\Theta_{grid} \subseteq \Theta$}
\KwResult{The confidence set $ \widehat R_B(\X)$}
compute $\mathcal{I} = (\theta_1, \tau(\X_1, \theta_1)), \ldots, (\theta_B, \tau(\X_B, \theta_B))$\;
split $\mathcal{I}$ in $\mathcal{I}_{train}$ and $\mathcal{I}_{cal}$, where $\mathcal{I}_{train} \cap \mathcal{I}_{cal} = \emptyset$\;
draw $K$ bootstrapped samples from $\mathcal{I}_{train}$\;
fit $K$ \ourmethod \ trees, each one with a different sample (Alg.~\ref{alg:trust})\;

\For{$\theta \in \Theta_{grid}$}{
$A(\theta) \gets \{\theta' \in \Theta \mid \rho(\theta', \theta) \geq M\}$, where $\rho(\theta', \theta)$ is the Breiman's proximity measure and M a tuning parameter\;
$J_{A(\theta)} \gets \{i\in [\mathcal{I}_{cal}]: \theta_i\in A(\theta)\}$\;
$\widehat{C}_{\theta,B} \gets$ adjusted $\alpha$-quantile of $\{\tau(\X_c,\theta_c)\}_{c \in J_{A(\theta)}}$\;
}

\Return{$ \widehat R_B(\X) = \left\{\theta^* \in \Theta_{grid} \mid \tau(\X,\theta^*) \geq \widehat C_{\theta^*,B} \right\}$}
\end{algorithm}

\begin{figure}[!htb]
    \centering{{\includegraphics[width=0.8\textwidth]{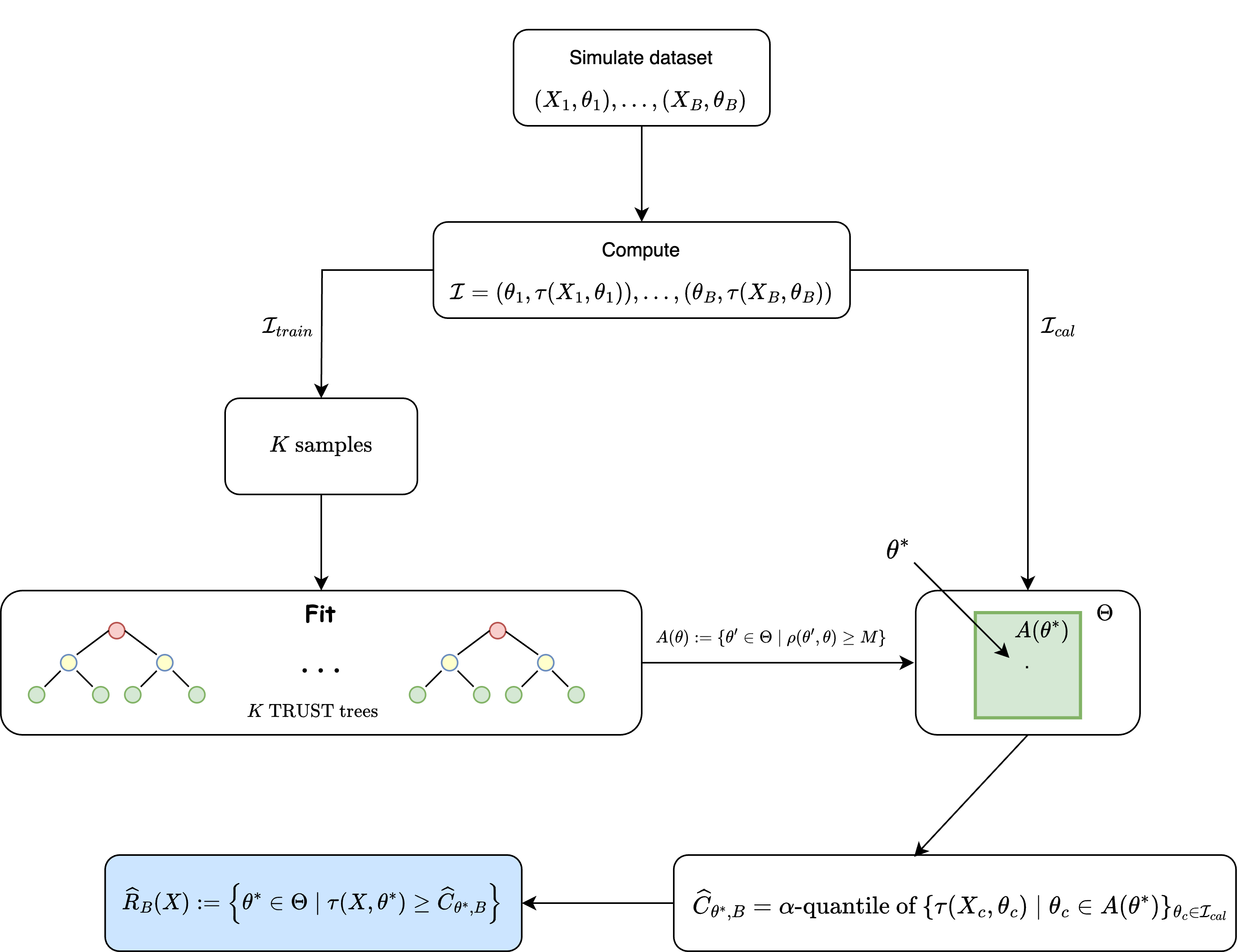}}}%
    \qquad
    \caption{Schematic flowchart of \ourmethodpp. Over a simulated dataset $(\X_1, \theta_1), \dots, (\X_B, \theta_B)$, we compute the statistic $\tau$ and create the set $\mathcal{I}$. Then, we draw K bootstrapped samples over the split $\mathcal{I}_{train}$ and use each sample to fit a different \ourmethod\ tree. The set $A(\theta)$ is calculated using Breiman's proximity measure, $\rho$, which defines whether two $\theta, \theta'$ belong to the same partition element. The cutoff $\widehat C_{\theta^*, B}$ is then calculated as the $\alpha-$quantile of all $\tau$ statistics of their respective $\theta$ that falls over $A(\theta)$, for the $\theta$'s that belong to $\mathcal{I}_{cal}$. Lastly, the estimated confidence set $\widehat R_{B}$ can be calculated using $\widehat C_{\theta^*, B}$.}%
    \label{fig:trustpp_flowchart}%
\end{figure}

\subsection{Nuisance parameters}

Statistical models often have nuisance parameters, which are parameters that are not of direct interest but must be accounted for in the analysis.
 Suppose that the vector of parameters can be decomposed \(\theta = (\mu, \nu)\), where \(\mu \in M\) are the parameters of interest and \(\nu \in N\) are nuisance parameters. Our goal is to construct a prediction region for \(\mu\) alone.

Let \(\tau(\X, \mu)\) be a metric that evaluates how plausible it is that \(\mu\) generated \(\X\). Confidence sets typically have the following form
\[ R(\x) := \left\{\mu : \tau(\x, \mu) \geq {C}_{\mu} \right\}. \]
To control conditional coverage for every $\mu$ and $\nu$, we must set 
\begin{align}
\label{eq:oracleNuisance}
    {C}_{\mu} = \inf_{\nu} {C}_{\mu, \nu},
\end{align}
where \({C}_{\mu, \nu}\) controls conditional coverage at \((\mu, \nu)\) \citep{dalmasso2021likelihood}. We now describe how to estimate ${C}_{\mu}$ using our proposed methods.

Let \(H(\cdot | \mu, \nu)\) be the distribution of \(\tau(\X, \mu)\) conditional on \((\mu, \nu)\). We use the same methodology, \ourmethod\ or \ourmethodpp, to estimate \(H\), namely by using trees to regress $\tau(\X, \mu)$ on \((\mu, \nu)\). Let \(\widehat{H}_B\) be this estimate. Then, we estimate \({C}_{\mu}\) via
\begin{align}
\label{eq:cNuisance}
\widehat{C}_{\mu, B} = \inf_{\nu} \widehat{H}_{B}^{-1}(1 - \alpha | \mu, \nu).    
\end{align}

Although computing the infimum in this equation might seem computationally expensive, the problem is simplified because \(\widehat{H}_B\) is computed using a   partition of \(\Theta\). This means that the minimization only requires evaluating \(\widehat{H}_{B}^{-1}(1 - \alpha | \mu, \nu)\) for a finite number of \(\nu\) values, one for each element of the partition  that  covers $\mu$. For example, in the case of \ourmethod, this approach only requires minimization over the leaves associated with the given \(\mu\).

In \ourmethodpp, we exploit the fact that only the values of $\nu$ that appear in the tree splits are relevant when computing the infimum in Equation \ref{eq:cNuisance}. Specifically, let $\mathbb{A} \subset N$ represent the set of nuisance parameter values involved in the splits of the trees generated by \ourmethodpp. Then, for each fixed $\mu$, we can express the infimum as
$$
\inf_{\nu} \widehat{H}_{B}^{-1}(1 - \alpha | \mu, \nu) = \min_{\nu \in \mathbb{G}} \widehat{H}_{B}^{-1}(1 - \alpha | \mu, \nu),
$$
where $\mathbb{G} = \{\nu \pm \epsilon : \nu \in \mathbb{A} \}$, and $\epsilon$ is a vector of small numbers that perturbs the nuisance parameter values slightly. This works because, for any value of \(\nu\), there exists a corresponding \(\nu\) in the grid \(\mathbb{G}\) such that both values fall into the same leaf of the tree. Consequently, it suffices to compute the minimum over \(\nu \in \mathbb{G}\). To ensure the $\epsilon$-neighborhoods around each $\nu$ value do not overlap, $\epsilon$ is chosen as one-third of the minimum coordinate-wise distance between all distinct $\nu \in \mathbb{A}$. 

To reduce the number of grid values, we may select $\mathbb{A}$ to include only the most relevant nuisance values—those associated with the splits closer to the root of the trees \citep{hastie2009elements}. This approach greatly decreases the computational burden of gridding $N$, which other methods require, and consequently limits their applicability to low-dimensional settings with few nuisance parameters.


\subsection{Uncertainty About the Confidence Sets}\label{sec:uncertainty_confidence_set}


Our confidence set \( \widehat R_B(\X) := \{\theta \in \Theta \mid \tau(\X,\theta) \geq \widehat C_\theta \}    \) is an estimate of the oracle set
\(  R(\X) := \{\theta \in \Theta \mid \tau(\X,\theta) \geq  C_\theta \}\), as is the estimate from the other approaches described in Section \ref{sec:related_work}.
 In this section, we explore another advantage of \ourmethod \ 
and \ourmethodpp:
  They offer a computationally efficient way to approximate the uncertainty associated with these sets, which we describe in what follows.

Recall that, for each $\theta$, $C_\theta$ is estimated by computing the adjusted
$\alpha$-quantile of the set
$T_{A(\theta)} = \{\tau(\X_b, \theta_b) : b \in I_{A(\theta)}\}$, while $C_\theta$ itself is the $\alpha$-quantile of the distribution of $\tau(\X, \theta) | \theta$. Let $\tau_{A(\theta)}^{(i)}$ be the $i$-th order statistics of $T_{A(\theta)}$ and define $Z := |\{ \delta \in T_{A(\theta)}: \delta \leq C_{\theta} \}|$, where $Z$ represents the number of statistics in $T_{A(\theta)}$ that are less than or equal to the cutoff $C_{\theta}$. Note that if the statistics in $T_{A(\theta)}$ are identically distributed, then $Z \sim \text{Binomial}(n, 1 - \beta)$. Using this result, we adapt the method by \cite{hahn2011statistical} for constructing confidence intervals for quantiles to create a $1 - \beta$ confidence set for $C_\theta$. This confidence set is defined as:
\begin{align}
\label{eq:cutoff_confidence_interval}
[\tau_{A(\theta)}^{(l)}, \tau_{A(\theta)}^{(u)}],
\end{align}
where \( l \) and \( u \) are selected to satisfy constraints ensuring both the validity and informativeness of the confidence interval. Specifically, \( u \) and \( l \) are selected such that \( l \leq u \leq |I_{A(\theta)}| \)  are as close as possible and satisfy the restriction
\begin{align}
\label{eq:restriction_CI}
    \P(l \leq Z \leq u - 1) \geq 1 - \beta \;,
\end{align}
yielding a valid confidence interval. This is because
\begin{align*}
\P(\tau_{A(\theta)}^{(l)} \leq C_\theta \leq \tau_{A(\theta)}^{(u)}) = \P(l \leq Z \leq u - 1) \geq 1 - \beta \; .
\end{align*}
Thus, by choosing $u$ and $l$ as narrowly as possible satisfying the restriction from Eq. \ref{eq:restriction_CI} we get the thinnest valid confidence interval for $C_{\theta}$. Additionally, despite the i.i.d assumption about statistics in $T_{A(\theta)}$ being not strictly true, the values in $T_{A(\theta)}$ are approximately identically distributed due to the way the partition was chosen (Section \ref{sec:choosingA}).


For each $\theta \in \Theta$, let $(\widehat{C}^L_\theta, \widehat{C}^U_\theta)$ represent the resulting confidence interval for $C_\theta$. The uncertainty can be propagated to the estimated sets by calculating the following three sets:
\begin{align*}
  &\mathcal{I}(\X)=\{\theta \in \Theta \mid \tau(\X,\theta) \geq  \widehat{C}^U_\theta \},  \\
  &\mathcal{O}(\X)=\{\theta \in \Theta \mid \tau(\X,\theta) \leq  \widehat{C}^L_\theta \}, \text{ and }\\
  &\mathcal{U}(\X)=\{\theta \in \Theta \mid \widehat{C}^L_\theta \leq \tau(\X,\theta) \leq  \widehat{C}^U_\theta \},
\end{align*}
where $\mathcal{I}(\X)$ represents the set of parameter values confidently inside the confidence interval, $\mathcal{O}(\X)$ contains values confidently outside, and $\mathcal{U}(\X)$ includes those where the status remains uncertain.
In the terminology of 3-way hypothesis testing \citep{berg2004no,esteves2016logical,Izbicki2025REACT},  $\mathcal{I}(\X)$  is the acceptance region,  $\mathcal{O}(\X)$ is the rejection region, and  $\mathcal{U}(\X)$ is the agnostic region.
The size of $\mathcal{U}(\X)$ can be decreased by increasing the number of simulations used to estimate $C_\theta$, $B$.
The next theorem shows that this approach controls the probability of incorrect conclusions.

\begin{thm}
\label{thm:uncertainty_coverage}
    Fix $\x \in \mathcal{X}$ and $\theta \in \Theta$. Let 
    $(\widehat{C}^L_\theta, \widehat{C}^U_\theta)$ represent a $(1-\beta)$-level confidence interval for $C_\theta$ such that $\P\left(C_\theta \leq \widehat{C}^L_\theta \right)=\P\left(\widehat{C}^U_\theta \leq C_\theta \right)$.
    Then 
    $$\P\left(\theta \in \mathcal{I}(\x) |\theta \notin R(\x) \right)\leq \beta/2$$
    and
    $$\P\left(\theta \in \mathcal{O}(\x)|\theta \in R(\x) \right)\leq \beta/2.$$
\end{thm}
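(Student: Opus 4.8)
The plan is to prove the two inequalities symmetrically, starting from the definitions of $\mathcal{I}(\x)$ and $\mathcal{O}(\x)$ and translating the events about set membership into events about the estimated cutoffs $(\widehat{C}^L_\theta, \widehat{C}^U_\theta)$ relative to the oracle cutoff $C_\theta$. For the first inequality, I would fix $\x$ and $\theta$ and observe that $\theta \notin R(\x)$ means, by the definition of the oracle set $R(\x)=\{\theta: \tau(\x,\theta)\geq C_\theta\}$, that $\tau(\x,\theta) < C_\theta$. At the same time, $\theta \in \mathcal{I}(\x)$ means $\tau(\x,\theta) \geq \widehat{C}^U_\theta$. The key observation is that $\x$ and $\theta$ are fixed, so $\tau(\x,\theta)$ is a deterministic number; the only randomness comes from the simulated data entering $(\widehat{C}^L_\theta, \widehat{C}^U_\theta)$.

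The central step is a chain of implications. On the event $\{\theta\notin R(\x)\}\cap\{\theta\in\mathcal{I}(\x)\}$ we have simultaneously $\widehat{C}^U_\theta \leq \tau(\x,\theta) < C_\theta$, so this event is contained in $\{\widehat{C}^U_\theta < C_\theta\}$, equivalently $\{\widehat{C}^U_\theta \leq C_\theta\}$ up to the continuity conventions. Therefore
\begin{align*}
\P\left(\theta \in \mathcal{I}(\x)\mid \theta \notin R(\x)\right)
&= \frac{\P\left(\theta \in \mathcal{I}(\x),\ \theta \notin R(\x)\right)}{\P\left(\theta \notin R(\x)\right)}
\leq \frac{\P\left(\widehat{C}^U_\theta \leq C_\theta\right)}{\P\left(\theta \notin R(\x)\right)}.
\end{align*}
Since $\theta\notin R(\x)$ is a deterministic statement for fixed $\x,\theta$, its probability is either $0$ or $1$; in the nontrivial case it equals $1$, so the bound reduces to $\P(\widehat{C}^U_\theta \leq C_\theta)$. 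The hypothesis that $(\widehat{C}^L_\theta, \widehat{C}^U_\theta)$ is a symmetric two-sided $(1-\beta)$ confidence interval, i.e. $\P(C_\theta \leq \widehat{C}^L_\theta)=\P(\widehat{C}^U_\theta \leq C_\theta)$ together with $\P(\widehat{C}^L_\theta \leq C_\theta \leq \widehat{C}^U_\theta)\geq 1-\beta$, then forces each tail to have probability at most $\beta/2$, giving $\P(\widehat{C}^U_\theta \leq C_\theta)\leq \beta/2$. The second inequality follows by the mirror-image argument: $\theta\in R(\x)$ gives $\tau(\x,\theta)\geq C_\theta$, while $\theta\in\mathcal{O}(\x)$ gives $\tau(\x,\theta)\leq \widehat{C}^L_\theta$, so the joint event lies in $\{C_\theta \leq \widehat{C}^L_\theta\}$, whose probability is the other tail, again at most $\beta/2$.

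The main subtlety to handle carefully, rather than a genuine obstacle, is the conditioning on a deterministic event. Because $\x$ and $\theta$ are both fixed, the events $\{\theta\in R(\x)\}$ and $\{\theta\notin R(\x)\}$ are not random, so the conditional probabilities in the statement must be read as conditioning on which of these deterministic cases holds; the inequalities are trivially satisfied (the conditioning event has probability zero, so the statement is vacuous) in the case that does not occur, and reduce to the one-sided tail bound in the case that does. I would state this explicitly to avoid any appearance of conditioning on a null event, and note that the continuity assumption on $\tau(\X,\theta)\mid\theta$ ensures the boundary events $\{\tau(\x,\theta)=C_\theta\}$ and the strict-versus-weak inequality distinctions carry no probability mass, so the translation between $<$ and $\leq$ is harmless. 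The remaining ingredient—that the symmetric interval splits the error budget into two $\beta/2$ tails—is immediate from the definition of the confidence interval and needs only the assumed equality of the two tail probabilities.
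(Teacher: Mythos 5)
Your proof is correct and follows essentially the same route as the paper's: translate $\theta\in\mathcal{I}(\x)$ and $\theta\notin R(\x)$ into the inequalities $\widehat{C}^U_\theta\leq\tau(\x,\theta)$ and $\tau(\x,\theta)\leq C_\theta$, deduce that the joint event forces $\widehat{C}^U_\theta\leq C_\theta$, and bound that tail by $\beta/2$ using the symmetry hypothesis, with the mirror argument for $\mathcal{O}(\x)$. Your explicit handling of the deterministic conditioning event and of the strict-versus-weak inequality is a welcome clarification the paper glosses over (it even writes an equality where only an inclusion of events, hence an inequality, holds), but it does not change the argument.
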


This method can be readily extended to the setting with nuisance parameters. Let $\eta^*$ be the configuration of nuisance parameters for which $\widehat{C}_{\mu, B}$ is computed (Equation \ref{eq:cNuisance}). By construction, $\widehat{C}_{\mu, B}$ is the adjusted $\alpha$-quantile of the set
$T_{A(\mu, \eta^*)} = \{\tau(\X_b, \mu_b,\eta_b) : b \in I_{A(\mu,\eta^*)}\}$. Thus, we can derive a confidence set for the optimal cutoff using the same approach as in Eq. \ref{eq:cutoff_confidence_interval}, based now on the set $T_{A(\mu, \eta^*)}$.

\section{Theoretical Results}
\label{sec:theory}

The key aspect of our method lies in the fact that achieving a good approximation of \( H(\cdot|\theta) \) allows the partitions in \ourmethod{} and \ourmethodpp{} to effectively capture the local behavior of the test statistic \( \tau \).

Building on this, our theoretical framework relies on ensuring that \( \widehat{H}_B \) closely approximates \( H \) in both \ourmethod{} and \ourmethodpp{}. This idea is formalized in the following assumption:

\begin{Assumption}\label{assumption:strong_consistency}
Let \( \widehat{H}_B \) represent the approximation of \( H \) under either \ourmethod{} or \ourmethodpp{}. For any \( \varepsilon > 0 \) and \( \delta > 0 \), there exists a \( B_0 \in \mathbb{N} \) such that, for all \( B \geq B_0 \),
\begin{align*}
    \mathbb{P}\left(\sup_{t \in \mathbb{R}, \theta' \in \Theta} \left|\widehat{H}_B(t | \theta') - H(t | \theta')\right| \leq \varepsilon \right) \geq 1 - \delta.
\end{align*}
\end{Assumption}

This assumption ensures that, with high probability, \( \widehat{H}_B \) approximates \( H \) as \( B \) increases. This result is supported by theoretical findings, such as those of \cite{meinshausen2006quantile, consistency_gabor}, on the consistency of tree-based models. To guarantee this consistency, certain conditions regarding the distribution of covariates and the structure of the trees are necessary. In tree construction, it is important that node sizes are balanced, with the proportion of observations in each node decreasing as the total number of observations grows. Additionally, each variable must have a minimum probability of being selected for node splitting, and splits should ensure a balanced distribution of observations between subnodes. These assumptions are quite reasonable and enable the conditional distribution estimates to be consistent with the true distribution, providing a solid theoretical foundation for our method.

\subsection{Partition  Coverage Guarantees}

As discussed in Section \ref{sec:partitioned_based_estimate}, given any fixed partition \( \mathcal{A} \) of \( \Theta \), the cumulative distribution function \( H(\cdot|\theta) \) can be estimated empirically using the values of \( \theta_b \) that fall within the same partition element as \( \theta \). Although a data-agnostic partitioning approach does not guarantee that \( \widehat{H} \) will closely approximate \( H \), we can still ensure that the plugin cutoff \( \widehat{C}_{\theta,B} = \widehat{H}^{-1}_B(\alpha|\theta) \), corresponding to the \( \alpha \)-quantile of the values \( \{\tau(\X_b, \theta_b) : b \in I_{A(\theta)}\} \), achieves the desired coverage when conditioned on the partition element. This result is formalized in the following theorem.

\begin{thm}[Partition-Based Coverage]
\label{thm:partition-based-coverage}
Let \(\{(\theta_1, \X_1), \dots, (\theta_B, \X_B)\}\) be an i.i.d. simulated dataset with a fixed partition \(\mathcal{A}\) of \(\Theta\), where each \(\theta_b\) is drawn from a reference distribution \(r(\theta)\) and each \(\X_b\) is generated according to the statistical model with parameters \(\theta_b\). For a test statistic \(\tau(\X, \theta)\), consider the confidence set constructed by our method:
\[
\widehat{R}_B(\X) = \{\theta \in \Theta \mid \tau(\X, \theta) \geq \widehat{C}_{\theta, B}\},
\]
where \(\widehat{C}_{\theta, B} = \widehat{H}^{-1}_B(\alpha | \theta)\) represents the adjusted \(\alpha\)-quantile of the values \(\{\tau(\X_b, \theta_b) : b \in I_{A(\theta)}\}\), conditioned on \(\theta\) belonging to the partition element \(A(\theta)\). 
Then, we have
\[
\P(\theta \in \widehat{R}_B(\X) \mid \theta \in A(\theta)) \geq 1 - \alpha,
\]
ensuring the desired coverage probability of the confidence set, conditioned on the partition element.
 
\end{thm}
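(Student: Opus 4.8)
The plan is to recognize the cutoff construction as a split-conformal calibration carried out separately within each partition element, and to establish the coverage bound by the standard rank/exchangeability argument applied conditionally on the element $A(\theta)$. Throughout I treat $(\theta, \X)$ as a fresh pair drawn from the same generative process $\theta \sim r$, $\X \sim f(\cdot \mid \theta)$, independent of the simulated dataset, and I condition on the event that $\theta$ lands in a fixed element $A \in \mathcal{A}$.

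First I would argue exchangeability. The subtle point is that the calibration scores $\tau(\X_b, \theta_b)$ for $b \in I_A$ are \emph{not} identically distributed at the level of a fixed parameter, since each $\theta_b$ carries its own conditional law $H(\cdot \mid \theta_b)$. However, each score is a fixed function of the entire pair $(\theta_b, \X_b)$, and conditional on the index set $I_A = \{b : \theta_b \in A\}$ the pairs $\{(\theta_b, \X_b) : b \in I_A\}$ are i.i.d.\ draws from the restricted law $r(\cdot \mid \theta \in A)\, f(\cdot \mid \theta)$. The test pair, conditioned on $\theta \in A$, has exactly this same law and is independent of the dataset. Hence, conditional on $\theta \in A$ and on $|I_A| = m$, the $m+1$ scores $\{\tau(\X_b, \theta_b) : b \in I_A\} \cup \{\tau(\X, \theta)\}$ are i.i.d., and in particular exchangeable. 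The assumed strict continuity of $\tau(\X, \theta) \mid \theta$ guarantees that these scores are almost surely distinct, so there are no ties to handle.

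Next I would translate the cutoff into an order statistic and invoke the rank argument. With $m$ calibration scores the adjusted empirical CDF $\widehat{H}_B(\cdot \mid \theta)$ is a step function taking value $(k+1)/(m+1)$ on $[S_{(k)}, S_{(k+1)})$, where $S_{(1)} < \cdots < S_{(m)}$ are the ordered calibration scores; consequently $\widehat{C}_{\theta, B} = \widehat{H}_B^{-1}(\alpha \mid \theta) = S_{(k^*)}$ with $k^* = \lceil \alpha(m+1)\rceil - 1$ whenever $k^* \geq 1$. By exchangeability and a.s.\ distinctness the rank $R$ of the test score $\tau(\X, \theta)$ among all $m+1$ scores is uniform on $\{1, \dots, m+1\}$, and the failure event $\{\tau(\X, \theta) < \widehat{C}_{\theta, B}\} = \{\tau(\X,\theta) < S_{(k^*)}\}$ is exactly $\{R \leq k^*\}$. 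Therefore
\[
\P\big(\tau(\X, \theta) < \widehat{C}_{\theta, B} \mid \theta \in A,\ |I_A| = m\big) = \frac{k^*}{m+1} = \frac{\lceil \alpha(m+1)\rceil - 1}{m+1} < \alpha .
\]
The degenerate cases are benign: if $\alpha(m+1) \leq 1$ (in particular if $m = 0$) then $\widehat{H}_B(t\mid\theta) \geq 1/(m+1) \geq \alpha$ for all $t$, so $\widehat{C}_{\theta,B} = -\infty$ and coverage is trivially $1$.

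Finally, since the bound $\P(\theta \in \widehat{R}_B(\X) \mid \theta \in A,\ |I_A| = m) \geq 1 - \alpha$ holds for every value of $m$, I would average over the conditional distribution of $|I_A|$ (a function of the dataset only, hence independent of the test pair) to remove the conditioning on $m$, yielding $\P(\theta \in \widehat{R}_B(\X) \mid \theta \in A) \geq 1 - \alpha$ for the fixed element $A$, and thus the stated claim. I expect the main obstacle to be the exchangeability step: one must resist comparing scores at a \emph{fixed} $\theta$, where they are genuinely non-exchangeable because of the varying $H(\cdot \mid \theta_b)$, and instead see that exchangeability is a pair-level property recovered only after conditioning on the partition element, which is precisely what makes the within-element empirical quantile valid.
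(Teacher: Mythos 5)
Your proposal is correct and follows essentially the same route as the paper, which simply invokes the standard split-conformal argument (citing \citet{cabezas2025regression}, Theorem 2): you have written out in full exactly that argument, namely that conditioning on the partition element restores exchangeability of the calibration and test scores, after which the rank-uniformity bound on the adjusted empirical quantile gives coverage at least $1-\alpha$. Your explicit treatment of the degenerate case $\alpha(m+1)\leq 1$ and of the averaging over $|I_A|$ is a welcome level of detail that the paper leaves implicit.
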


This result applies broadly to any partition, yet it carries particular significance for those formed using \ourmethod{} and \ourmethodpp{}. For these methods, we specifically anticipate the  relationship
\begin{align*}
1 - \alpha=\P\left( \theta \in \widehat{R}_B(\X) \mid \theta \in A(\theta) \right) \approx \P\left( \theta \in \widehat{R}_B(\X) \mid \theta \right),
\end{align*}
which suggests that, with these methods, the probability that $\theta$ lies within the estimated region $\widehat{R}_B(\X)$, given its association with $A(\theta)$, closely aligns with the nominal confidence level $(1 - \alpha)$. In the following section, we will state the theorem formally. Also, notice that although the local guarantee depends on the choice of $r(\theta)$ (and thus has a Bayesian flavor), the conditional one does not.

Beyond the conditional coverage established above, these results also imply marginal coverage:

\begin{Corollary}[Marginal Coverage]
\label{cor:partition-based-marginal-coverage}
Let \(\{(\theta_1, \X_1), \dots, (\theta_B, \X_B)\}\) be an i.i.d. simulated dataset with a fixed partition \(\mathcal{A}\) of \(\Theta\), where each \(\theta_b\) is drawn from a reference distribution \(r(\theta)\) and each \(\X_b\) is generated according to the statistical model with parameters \(\theta_b\). For a test statistic \(\tau(\X, \theta)\), consider the confidence set constructed by our methods. Then
$
\P(\theta \in \widehat{R}_B(\X)) \geq 1 - \alpha.
$
\end{Corollary}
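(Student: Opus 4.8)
The plan is to derive Corollary~\ref{cor:partition-based-marginal-coverage} directly from Theorem~\ref{thm:partition-based-coverage} by integrating the conditional (partition-element) coverage guarantee against the reference distribution $r(\theta)$. The key observation is that the partition $\mathcal{A}$ is a finite (or at most countable) collection of disjoint sets whose union is $\Theta$, so the events $\{\theta \in A\}$ for $A \in \mathcal{A}$ form a partition of the sample space for $\theta \sim r(\theta)$. By the law of total probability, the marginal coverage decomposes as a weighted average of the per-element coverage probabilities.

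First I would write
\begin{align*}
\P(\theta \in \widehat{R}_B(\X)) = \sum_{A \in \mathcal{A}} \P\left(\theta \in \widehat{R}_B(\X) \mid \theta \in A\right)\, \P(\theta \in A),
\end{align*}
where the sum ranges over all elements of the partition $\mathcal{A}$ that carry positive probability under $r$, and $\P(\theta \in A) = \int_A r(\theta)\,d\theta$. Next I would invoke Theorem~\ref{thm:partition-based-coverage}, which guarantees that each conditional term satisfies $\P(\theta \in \widehat{R}_B(\X) \mid \theta \in A) \geq 1 - \alpha$ for every partition element $A$. Since each weight $\P(\theta \in A)$ is nonnegative and the weights sum to one, substituting the lower bound gives
\begin{align*}
\P(\theta \in \widehat{R}_B(\X)) \geq \sum_{A \in \mathcal{A}} (1 - \alpha)\, \P(\theta \in A) = (1 - \alpha) \sum_{A \in \mathcal{A}} \P(\theta \in A) = 1 - \alpha,
\end{align*}
which is exactly the desired marginal bound.

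The only genuine subtlety, and the step I would be most careful about, is the precise sense in which Theorem~\ref{thm:partition-based-coverage} is being applied and the measurability of the conditioning. In the corollary the marginal probability is taken over the joint law of $\theta \sim r(\theta)$, its associated $\X \mid \theta$, and the simulated calibration dataset used to build $\widehat{C}_{\theta,B}$; one must ensure that the conditional statement in the theorem holds for the same joint randomness so that the tower property applies cleanly. Provided the partition $\mathcal{A}$ is fixed (data-agnostic, as assumed in the theorem's hypotheses), this is immediate: conditioning on $\{\theta \in A\}$ is well defined and the theorem's bound holds uniformly over $A$, so the law of total probability goes through without complication. I would note in passing that no continuity or consistency assumption on $\widehat{H}_B$ is needed here—this is a pure finite-sample consequence of the conditional guarantee—and that the same argument degrades gracefully to a countable partition by monotone convergence if $\mathcal{A}$ is not finite.
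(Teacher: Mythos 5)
Your proof is correct and is exactly the argument the paper intends: the corollary follows from Theorem~\ref{thm:partition-based-coverage} by the law of total probability over the partition elements, with the per-element lower bound $1-\alpha$ averaged against the weights $\P(\theta \in A)$, which sum to one. The paper leaves this derivation implicit (stating only that the partition-based result ``implies'' marginal coverage), and your write-up, including the remark that no consistency assumption on $\widehat H_B$ is needed, fills in precisely that standard step.
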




\subsection{\ourmethod{} and \ourmethodpp{} Conditional  Coverage Guarantees}
\label{sec:coverage_guarantees}


Both \ourmethod{} and \ourmethodpp{} employ regression trees that satisfy the consistency Assumption \ref{assumption:strong_consistency}, as guaranteed by established  results for tree-based models \citep{meinshausen2006quantile, consistency_gabor}. With these partitions, we not only guarantee partition-based coverage, as indicated in Theorem \ref{thm:partition-based-coverage}, but also ensure that both methods asymptotically achieve optimal conditional coverage, as formalized in the next theorem.

\begin{thm}[$B$-Asymptotic Conditional Coverage]\label{thm:asympt_conditional}
Let \(\{(\theta_1, \X_1), \dots, (\theta_B, \X_B)\}\) be an i.i.d. simulated dataset, where each \(\theta_b\) is drawn from a reference distribution \(r(\theta) > 0\) and each \(\X_b\) is generated according to the statistical model with parameters \(\theta_b\). For a test statistic \(\tau(\X, \theta)\), consider the confidence set constructed by either \ourmethod{} or \ourmethodpp{}:
\[
\widehat{R}_B(\X) \coloneq \{\theta \in \Theta \mid \tau(\X, \theta) \geq \widehat{C}_{\theta, B}\},
\]
where \(\widehat{C}_{\theta, B}\) is the cutoff determined by the partition, representing the adjusted \(\alpha\)-quantile of the distribution of \(\tau(\X_b, \theta_b)\) values within the partition containing \(\theta\). 
Then, if Assumption \ref{assumption:strong_consistency} holds, both \ourmethod{} and \ourmethodpp{} are asymptotically consistent, that is:
\[
\lim_{B \to \infty} \P\left(\theta \in \widehat{R}_B(\X) \mid \theta \right) = 1 - \alpha.
\]
\end{thm}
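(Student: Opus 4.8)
The plan is to reduce the conditional statement $\P(\theta \in \widehat{R}_B(\X)\mid\theta)$ to a statement about how well $\widehat{H}_B$ approximates the true conditional CDF $H$, and then invoke Assumption~\ref{assumption:strong_consistency}. First I would fix an arbitrary $\theta \in \Theta$ and rewrite the event of interest in terms of the cutoff: by definition of $\widehat{R}_B$,
\[
\P\left(\theta \in \widehat{R}_B(\X)\mid\theta\right) = \P\left(\tau(\X,\theta)\geq \widehat{C}_{\theta,B}\mid\theta\right) = 1 - H\!\left(\widehat{C}_{\theta,B}^-\,\middle|\,\theta\right),
\]
treating $\widehat{C}_{\theta,B}$ as (asymptotically) deterministic given the simulated set and using strict continuity of $\tau(\X,\theta)\mid\theta$ to ignore the distinction between $<$ and $\leq$. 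So it suffices to show $H(\widehat{C}_{\theta,B}\mid\theta) \to \alpha$ in probability as $B\to\infty$, since then the right-hand side tends to $1-\alpha$.

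Next I would control the gap between $\widehat{C}_{\theta,B}=\widehat{H}_B^{-1}(\alpha\mid\theta)$ and the oracle $C_\theta = H^{-1}(\alpha\mid\theta)$. The key inequality chain is the standard quantile-from-uniform-CDF-bound argument: on the high-probability event from Assumption~\ref{assumption:strong_consistency} where $\sup_{t,\theta'}|\widehat{H}_B(t\mid\theta')-H(t\mid\theta')|\leq\varepsilon$, evaluating at $t=\widehat{C}_{\theta,B}$ gives $|H(\widehat{C}_{\theta,B}\mid\theta)-\widehat{H}_B(\widehat{C}_{\theta,B}\mid\theta)|\leq\varepsilon$. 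Then I would argue that $\widehat{H}_B(\widehat{C}_{\theta,B}\mid\theta)$ is itself within $O(1/|I_{A(\theta)}|)$ of $\alpha$, because $\widehat{C}_{\theta,B}$ is by construction the adjusted empirical $\alpha$-quantile of the values in the partition element; the generalized inverse of a step function lands the CDF value within one jump of the target level. Combining these, $H(\widehat{C}_{\theta,B}\mid\theta)$ is within $\varepsilon + o(1)$ of $\alpha$ on an event of probability at least $1-\delta$, and since $\varepsilon,\delta$ are arbitrary this yields convergence in probability.

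The final step is to convert the convergence of $H(\widehat{C}_{\theta,B}\mid\theta)$ into convergence of the coverage probability itself. Here I would use the continuity of $H(\cdot\mid\theta)$ together with the bounded convergence theorem (or dominated convergence, noting the integrand is a probability bounded by $1$): since $H(\widehat{C}_{\theta,B}\mid\theta)\to\alpha$ in probability and $1-H(\widehat{C}_{\theta,B}\mid\theta)$ is uniformly bounded, the expectation $\E[1-H(\widehat{C}_{\theta,B}\mid\theta)]=\P(\theta\in\widehat{R}_B(\X)\mid\theta)$ converges to $1-\alpha$. The requirement $r(\theta)>0$ enters to guarantee that $|I_{A(\theta)}|\to\infty$ as $B\to\infty$, so that the adjustment term $1/(|I_{A(\theta)}|+1)$ vanishes and the empirical quantile is genuinely well-behaved in every partition element.

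The main obstacle I anticipate is the interplay between the randomness of the partition (the tree structure depends on the data) and the evaluation of $H$ at the data-dependent cutoff. Assumption~\ref{assumption:strong_consistency} is stated as a \emph{uniform} bound over all $t$ and $\theta'$ simultaneously, which is precisely what lets me evaluate the bound at the random point $\widehat{C}_{\theta,B}$ without a separate concentration argument; so the cleanest route is to lean entirely on that uniformity rather than trying to re-derive consistency of the data-driven partition. The subtler point requiring care is ensuring that the adjusted empirical quantile's deviation from $\alpha$ (the $1/(|I_{A(\theta)}|+1)$ correction) genuinely vanishes, which is where $r(\theta)>0$ and the balanced-node conditions underlying the cited tree-consistency results do the work of forcing the local sample size to diverge.
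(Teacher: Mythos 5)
Your proof is correct, and it takes a genuinely different route from the paper's. The paper detours through the \emph{local} coverage quantity $\P(\tau(\X,\theta')\leq t\mid \theta'\in A(\theta))$: a dedicated lemma identifies it with $\E[H(t\mid\theta')\mid\theta'\in A(\theta)]$, and the uniform bound from Assumption \ref{assumption:strong_consistency} is applied twice (once to replace $H(\cdot\mid\theta')$ by $\widehat H_B(\cdot\mid\theta')$, which is constant on $A(\theta)$, and once to pass back to $H(\cdot\mid\theta)$), giving $\lvert\P(\tau(\X,\theta')\leq t\mid\theta'\in A(\theta))-H(t\mid\theta)\rvert\leq 2\varepsilon+\delta$ for all $t$; the conclusion then follows because the empirical quantile pins the local coverage at the nominal level (the partition-based coverage of Theorem \ref{thm:partition-based-coverage}). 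You instead evaluate $H$ directly at the plug-in cutoff $\widehat C_{\theta,B}=\widehat H_B^{-1}(\alpha\mid\theta)$, using one application of the uniform bound together with the one-jump property of the generalized inverse of a step CDF, and convert to the stated coverage via bounded convergence. Your route is more elementary --- it needs neither Lemma \ref{lemma:expectation_to_conditional_prob} nor the structural fact that $\widehat H_B(t\mid\theta')=\widehat H_B(t\mid\theta)$ on $A(\theta)$ --- and it is more careful about the point that $\P(\theta\in\widehat R_B(\X)\mid\theta)$ is an expectation of $1-H(\widehat C_{\theta,B}\mid\theta)$ over the calibration randomness, which the paper's final ``equivalent to'' step glosses over (the paper plugs a random $t$ into a fixed-$t$ bound). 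What the paper's route buys is an explicit quantitative link between local and conditional coverage, which is the conceptual message of the section. One caveat on your side: $r(\theta)>0$ alone does not force $\lvert I_{A(\theta)}\rvert\to\infty$, since the leaves could shrink with $B$; the paper patches this mid-proof with an explicit extra assumption ($\P(\theta'\in A(\theta))\geq\gamma$, imported from the tree-consistency literature). You correctly flag this as the delicate point, and in fact it can be closed without extra assumptions: a step function with jumps of size $1/(\lvert I_{A(\theta)}\rvert+1)$ cannot uniformly $\varepsilon$-approximate the continuous CDF $H(\cdot\mid\theta)$ unless $1/(\lvert I_{A(\theta)}\rvert+1)\leq 2\varepsilon$, so on the good event of Assumption \ref{assumption:strong_consistency} your correction term is already $O(\varepsilon)$.
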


This shows that the partition constructed by \ourmethod{} and \ourmethodpp{} is designed so that local coverage closely approximates conditional coverage. Moreover, unlike conventional asymptotic approaches, which typically rely on the sample size of the observed dataset \( \mathbf{x} \), \( n \), our approach is independent of \( n \). Instead, it depends solely on the number of simulations \( B \), which can be scaled up given sufficient computational resources. This framework enables both \ourmethod{} and \ourmethodpp{} to achieve robust, distribution-free guarantees and asymptotically attain optimal conditional coverage as \( B \) increases.

In the appendix, we discuss the key results required for the theoretical framework, provide intuition for why our methods work, and present the proofs of the main results.

\section{Implementation Details and Tuning Parameters}
\label{sec:implementation}

To implement both \ourmethod{} and \ourmethodpp, we use the efficient decision tree and random forest implementations provided by \textit{scikit-learn}. Since our approaches use regression trees to partition $\Theta$, managing tree growth is crucial to prevent empty or redundant partition elements. This is achieved through both pre and post-pruning. The pre-pruning is executed in both methods by fixing the \texttt{min\_samples\_split} hyperparameter to a value, such as 100 or 300 samples, which enables us to obtain well-populated leaves resulting in more accurate estimations of $H(.|\theta)$ in our framework. 

In \ourmethod, we additionally apply post-pruning to remove extra leaves and nodes using cost-complexity pruning, balancing partition complexity with predictive performance by reducing the amount of less informative partition elements. Since post-pruning can reduce variability across regression trees and lessen the benefits of ensembling different partitions, it is not applied in \ourmethodpp. For all other hyperparameters of the decision tree and random forest algorithms, we use \textit{scikit-learn}'s default settings, except for the number of trees in the random forest, which we set to 200.


Our default setting for the tuning parameter $M$ is $M = K/2$, placing \ourmethodpp{} in a majority-vote regime. In this setup, $\theta'$ is included in $A(\theta)$ only if the majority of trees vote for $\theta'$ and $\theta$ to be in the same leaf. While this choice is intuitive and performs well in practice (see Section \ref{sec:lfi_problems}), it is not universally optimal, as some problems may require larger neighborhoods around $\theta$ due to the potential approximate non-invariance of statistics. 

To address this, we propose a straightforward grid-search algorithm for optimizing $M$ using a small, additional simulated validation grid. The main idea is to select $M$ from a fine grid between $0$ and $K$ that minimizes estimated deviation from conditional coverage. This is done by calculating coverage across the validation grid with a batch of statistics simulated at each fixed grid point and then computing the mean absolute error of the estimated coverage relative to the nominal confidence level $1 - \alpha$. Section \ref{sec:applications} provides further details on this process. Alternatively, coverage can be estimated using the LF2I diagnostic module \citep{dalmasso2021likelihood}, which also leverages an additional simulation set. Algorithm \ref{alg:tune_M} in Appendix \ref{appendix:alg_tune_M} outlines the tuning procedure for $M$.

\section{Applications}
\label{sec:applications}

In this section, we compare the coverage performance of \ourmethod \ and \ourmethodpp \ (both tuned and majority-votes versions) to the other state-of-the-art competing methods on tractable likelihood, likelihood-free, and nuisance parameter settings.

To assess performance, we use several statistic and likelihood/simulator combinations across multiple sample sizes $n$ and simulation budgets $B$ in both likelihood-based and likelihood-free scenarios. In the nuisance parameter setting, we benchmark our methods with two examples: one using likelihood-free inference and another based on likelihood. We set a common confidence level of $1 - \alpha = 0.95$ across all experiments. Comparisons are made against the following baseline methods:

\begin{itemize}
    \item \textbf{Gradient Boosting Quantile Regression (Boosting)}: implemented in the \textit{scikit-learn} library \citep{pedregosa2011scikit}, we use it to estimate $C_{\theta}$ through the $(1- \alpha)$ conditional quantile of the $\tau(\X, \theta)$ given $\theta$ \citep{dalmasso2020confidence}. We fix the maximum depth as 3, the tolerance for early stopping as 15 and the maximum iteration as 100. The remaining hyperparameters are fixed as \textit{scikit-learn}'s default. To control tree growth in each iteration, we limit the maximum depth to 3 (instead of the default, which is unlimited) and employ a weak learner ensemble approach, which is well-suited for boosting methods \citep{freund1997decision, hastie2009elements}.

    
    \item \textbf{Monte-Carlo (MC)}: implemented with an equally spaced grid over $\Theta$, we simulate  $n_{MC}$ statistics for each grid element and estimate $C_{\theta}$ using the $(1 - \alpha)$-quantile of these simulations at the closest grid point. For multi-dimensional $\Theta$, the grid is formed by combining equally spaced one-dimensional grids obtained along each coordinate of $\Theta$. To ensure comparability with other methods, the uni-dimensional grid size is set to $\left\lceil \frac{B}{n_{MC}}^{1/d} \right\rceil$, where $d = \dim \Theta$. This guarantees that the Monte-Carlo simulation budget in multi-dimensional problems will be close to the fixed budget $B$. We set in all cases $n_{MC} = 500$.
    \item \textbf{Asymptotic}: this approach relies on classic asymptotic results for certain test statistics. It estimates $C_{\theta}$ as the $(1-\alpha)$-quantile of the statistic's asymptotic invariant distribution. Since estimated statistics in the LFI setting lack invariant asymptotic approximations, this method is only applied for comparison in some scenarios with tractable likelihoods.
    \end{itemize}

To evaluate the conditional coverage performance of each approach in both likelihood-based and likelihood-free scenarios, we compute the Mean Absolute Error (MAE) of each method's conditional coverage concerning the confidence level $(1 - \alpha)$. We begin by estimating the conditional coverage of each method through the simulation of several statistics inside an evaluation grid. Let $\Theta_{\text{grid}}$ denote such a grid. We compute the coverage for each $\theta' \in \Theta_{\text{grid}}$ at level $\alpha$ for any cutoff estimation method $\widehat{C}$ as follows:
\begin{align}
\label{eq:cover_cutoff}
    \text{cover}_{\alpha}(\widehat{C}, \theta') := \frac{1}{n_{\text{sim}}} \sum_{i = 1}^{n_{\text{sim}}} \I \left( \tau(\theta', \X_{i}^{(\theta')}) \leq \widehat{C}_{\theta'} \right) \;,
\end{align}
where $n_{\text{sim}}$ represents the number of simulated statistics and $\X_i^{(\theta')}$ denotes the $i$-th observation simulated under $\theta'$. We then define the MAE of the method's conditional coverage relative to the nominal confidence level  $(1 - \alpha)$ as:
\begin{align}
\label{eq:MAE_cutoff}
    \text{MAE}(\widehat{C}, \alpha) := \frac{1}{|\Theta_{\text{grid}}|} \sum_{\theta' \in \Theta_{\text{grid}}} \left| \text{cover}_{\alpha}(\widehat{C}, \theta') - (1 - \alpha) \right| \;.
\end{align}
The MAE serves as an effective metric for quantifying the extent to which each method $\widehat{C}$ deviates from the nominal level of conditional coverage.

In the nuisance scenarios, we assess performance by measuring each method’s coverage deviation from the oracle coverage, defined as follows:
\begin{align}
    d_{\alpha}(\widehat{C}, C) := \frac{1}{|\Theta_{grid}|} \sum_{\theta' \in \Theta_{grid}} \left| \text{cover}_{\alpha}(\widehat{C}, \theta') - \text{cover}_{\alpha}(C, \theta') \right| \; ,
\end{align}
where $C$ is the oracle   cutoff specified in Eq. \ref{eq:oracleNuisance}. This oracle cutoff controls coverage across all parameters— main, and nuisance —by accounting for worst-case variations in the nuisance parameter. Consequently, we compare each method's coverage against the oracle rather than the nominal level, as the oracle typically over-covers for the parameters of interest to ensure reliable coverage across all nuisance parameters.

We replicate each experiment 50 times for likelihood-based, 30 times for likelihood-free, and 15 times for nuisance examples, and compute the average MAE or deviation from the oracle and its standard error. We identify the top methods in each case as those with the lowest average error, followed by any other methods with performance not significantly different from the best (determined by overlapping 95\% asymptotic confidence intervals for the MAE). All detailed MAE and standard errors visualizations are available in Appendix \ref{appendix:exp_results}.

\subsection{Examples with tractable  likelihood function}
\label{sec:tractable_likelihood}
In this section, we apply our methods to construct confidence sets on three standard statistical models. We always assume the data is $\X=(X_1,\ldots,X_n)$, where the $X_i$'s are i.i.d. To generate the data, we use the Normal model with fixed variance, the Gaussian mixture model (GMM), and the Lognormal model with both mean and scale as parameters. For the choice of $\tau$, we explore the following statistics: Likelihood ratio \citep{drton2009likelihood}; Kolmogorov-Smirnov \citep{marsaglia2003evaluating}; Frequentist Bayes Factor \citep{dalmasso2021likelihood} and E-value \citep{pereira1999evidence}. The detailed setup is described in Appendix \ref{appendix:tractable_ll}.

To compare all methods, we consider $n = 10, 20, 50, 100$ and $B= 1000, 5000, 10000, 15000$ for all combinations of statistic and likelihood. Figure \ref{fig:all_comparissons} provides a summary of each method's performance, while  Appendix \ref{appendix:trac_results} shows a detailed comparison of the models' performance for each of the statistics in each setting. 
Figure \ref{fig:all_comparissons}  highlights several key performance differences among the methods:




\begin{figure}[ht]
    \centering
    \includegraphics[scale = 0.165]{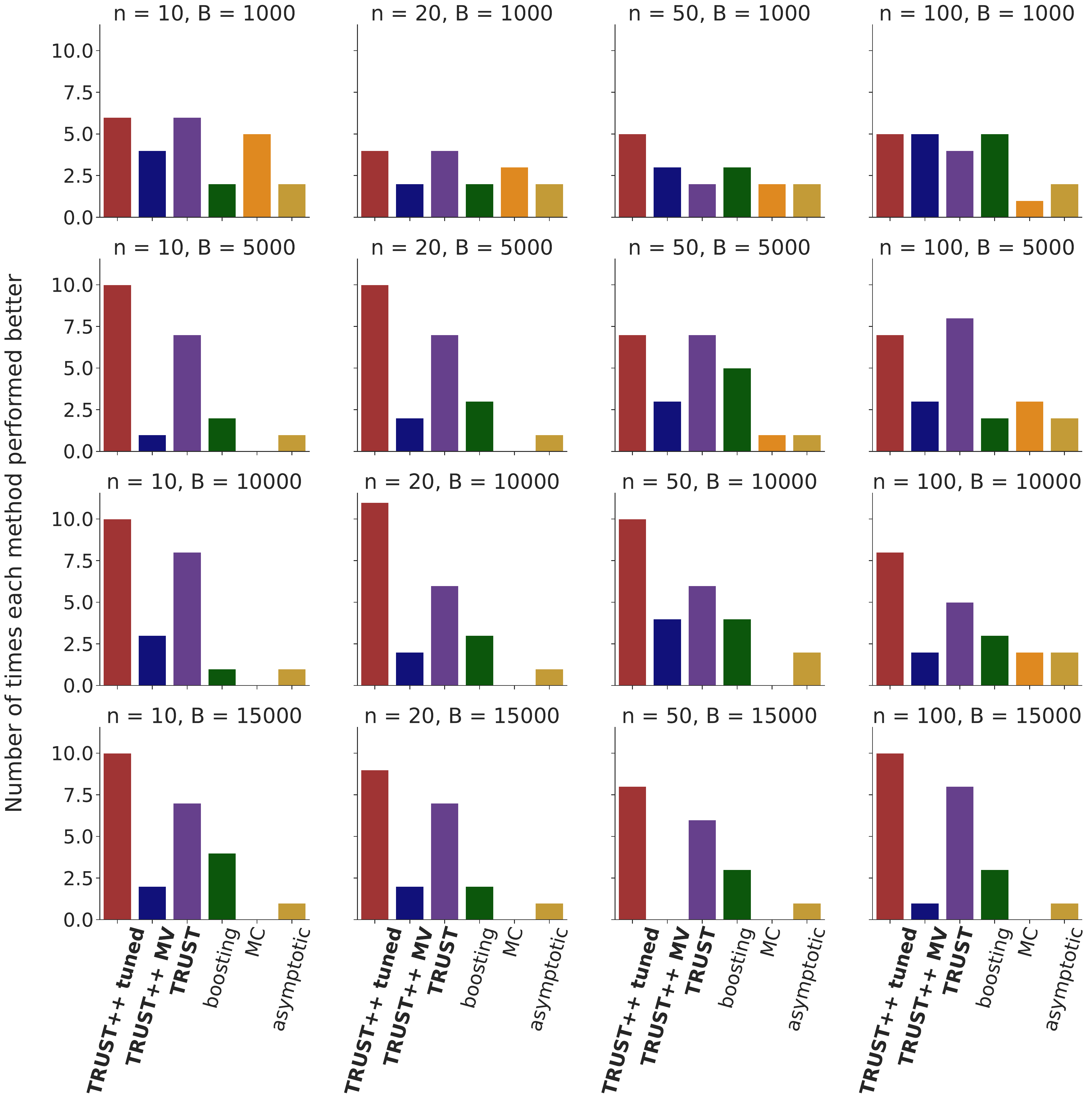}
    \caption{Tractable Problems: Frequency with which each method achieved the best performance (lowest significant MAE) across all statistic–likelihood combinations and varying $n$ and $B$. Our methods (in bold) consistently outperform competitors, with a clear advantage in smaller-sample settings ($n=10, 20$), where asymptotic approaches tend to underperform. For disaggregated results, see Appendix \ref{appendix:trac_results}.}
    \label{fig:all_comparissons}
\end{figure}

\begin{itemize}
    \item At $B = 1000$, there are no clear advantage among the methods for different sample sizes $n$, with \ourmethod{} and tuned \ourmethodpp{} showing competitive results in these scenarios. This balance may be attributed to the limited number of simulation samples, which reduces the ability to differentiate among methods regarding cutoff estimation and overall performance.
    \item Tuned \ourmethodpp{} consistently outperforms all competing methods across $B$ from $5000$ to $15000$ and for all $n$. 
    \item \ourmethod{} exhibits good performances for all $n$, showing a higher count of scenarios with superior outcomes across each combination of $n$ and $B$. 
    
    \item Both Monte-Carlo and Asymptotic methods present comparatively poor performance when the simulation budget exceeds $1000$.
    \item Although the boosting method demonstrates strong performance in certain scenarios and specific combinations of $n$ and $B$,  this method is generally outperformed by \ourmethod{} and tuned \ourmethodpp{} in the majority of cases. The detailed case-by-case analysis in Appendix \ref{appendix:trac_results} corroborates these findings.
    \item Examining the individual statistics (Appendix \ref{appendix:trac_results}), our methods show superior performance for the KS statistic, most notably \ourmethodpp{} (tuned) and \ourmethod{}. For the LR statistic, both the asymptotic approach and our methods achieve satisfactory results. Furthermore, our methods exhibit good performance for BFF and E-value, with boosting representing a competitive alternative in these settings.
    \end{itemize}

Based on all comparisons, we conclude that our framework consistently outperforms competing approaches in terms of conditional coverage for several settings.

\subsection{Likelihood-free Inference Problems}
\label{sec:lfi_problems}

We now apply our methods to construct confidence sets across five well-studied LFI benchmarks, where the likelihood is intractable or no analytical formula exists for certain statistics, requiring them to be estimated. In cases of intractable likelihood, we simulate the data $\X = (X_1, \dots, X_n)$ using a high-fidelity simulator $F_{\theta}$. Given a fixed $\theta$, we also assume that each $X_i$ is i.i.d. The simulators we use are the following: SLCP (Simple likelihood complex posterior), M/G/1, Weinberg \citep{hermans2021trust}; Two Moons \citep{greenberg2019automatic}; SIR \citep{lueckmann2021benchmarking}. We consider the folllowing choices of $\tau$: the E-Value, BFF, and Waldo \citep{masserano2023simulator}. A detailed description of them can be found in the Appendix \ref{appendix:intractable_ll}.

Since an exact posterior $f(\theta|\x)$ is unavailable for all statistics in the likelihood-free inference  context, we estimate posterior quantities using a Neural Posterior Estimator based on normalizing flows \citep{rezende2015variational,hermans2021trust}. This method approximates the posterior $f(\theta|\x)$ through an amortized estimator $\hat{f}_{\psi}(\theta|\x)$, constructed with neural network-based bijective transformations parameterized by $\psi$ \citep{rezende2015variational}. The estimator is trained on a simulated sample \(\{(\theta_1, \X_1), \dots, (\theta_B, \X_B)\}\) generated from each model or simulator. Details of normalizing flow architecture, implementation, and simulation budgets are given in Appendix \ref{appendix:experiment_details}.

For each statistic and simulator, we evaluate all methods across combinations of $n \in \{ 1, 5, 10, 20 \}$ and $B\in \{ 10, 15, 20, 30\} \times 10^3$, increasing the simulation budgets relative to Section \ref{sec:tractable_likelihood} to account for the greater complexity of these problems. Figure \ref{fig:all_real_comparisons} presents a summary of the overall performance of each method, while Appendix \ref{appendix:intrac_results} provides a detailed comparison of coverage results, disaggregated by statistic and data-generating process.
The results again reveal several performance distinctions among the methods:
\begin{figure}[ht]
    \centering
    \includegraphics[scale = 0.165]{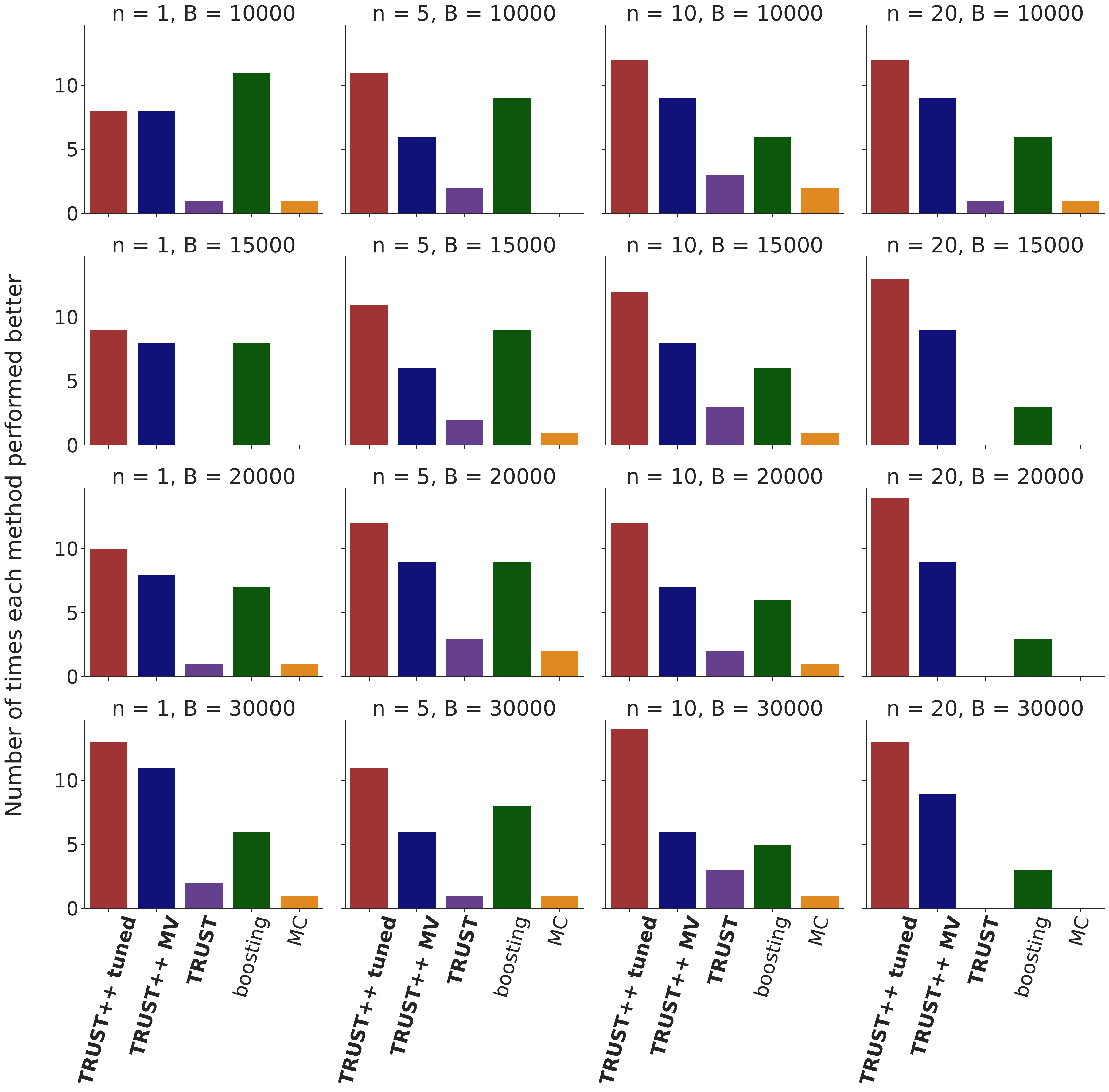}
    \caption{Likelihood-free Inference Problems: Frequency with which each method achieved the best result for each statistic–simulator combination across varying sample sizes $n$ and simulation budgets $B$. Our methods (in bold) show consistent superiority when $B > 10{,}000$ and remain competitive even at $B = 10{,}000$. For disaggregated results, see Appendix \ref{appendix:intrac_results}.}
    \label{fig:all_real_comparisons}
\end{figure}

\begin{itemize}
\item For $n = 1$ and $5$ with simulation budgets below $30000$, Figure \ref{fig:all_real_comparisons} shows a relatively balanced performance between tuned \ourmethodpp{}, \ourmethodpp{} MV, and boosting. However, in all scenarios with $n = 10$ or $20$, tuned \ourmethodpp{} consistently achieves superior results, with \ourmethodpp{} MV performing strongly and both outperforming boosting. This suggests that our Breiman distance approach for computing local cutoffs adapts effectively to likelihood-free settings as sample sizes and simulation budgets increase. 
\item Figure \ref{fig:all_real_comparisons} indicates that tuned \ourmethodpp{} outperforms competing methods in nearly all combinations (15 out of 16). 
Again, tuned \ourmethodpp{}  exhibits a high count of combinations compared to all other approaches, mirroring its  performance observed in tractable experiments.
\item The Monte Carlo method performs poorly across nearly all scenarios, showing weaker coverage control than other methods. 
\item The boosting method demonstrates competitive coverage control and consistent performance in several scenarios, notably outperforming in the specific case of $n = 1$ and $B = 10000$. However, in most cases, it is outperformed by both tuned \ourmethodpp{} and \ourmethodpp{} MV. 


\item Despite exhibiting poorer coverage control than boosting and tuned \ourmethodpp{}, \ourmethodpp{} MV still shows good performance across all scenarios. 
\item In contrast to the results seen in tractable experiments, \ourmethod{} demonstrates subpar performance and coverage control compared to both boosting and \ourmethodpp{} in likelihood-free scenarios. This indicates that the \ourmethod{} partition requires enhancements for likelihood-free contexts, a need effectively addressed through the ensembling approach used in \ourmethodpp{}.

\item Appendix \ref{appendix:intrac_results} further corroborates the effectiveness of our methods, showing that both \ourmethodpp{} tuned and \ourmethodpp{} MV perform consistently well across all statistics. Boosting also remains a good competitor, particularly for Waldo and E-value. Notably, in most scenarios where either \ourmethodpp{} tuned or \ourmethodpp{} MV attains the best performance, the other method performs within a narrow margin, effectively placing both among the top-performing approaches.
\end{itemize}

Figure \ref{fig:CI_e_value_two_moons} further illustrates the application of each method, showcasing the confidence regions alongside \ourmethodpp{}'s uncertainty quantification for a specific realization of the two moons example. Both \ourmethodpp{} and boosting closely approximate the oracle region, whereas the Monte Carlo method significantly underestimates the region's size. While \ourmethodpp{} and boosting produce similar confidence regions, \ourmethodpp{} distinguishes itself by offering an additional layer of uncertainty quantification. This uncertainty layer reveals insights into oracle information that may not be fully captured by the region estimators, likely due to constraints in the simulation budget. 

\begin{figure}[h]
    \centering
    \includegraphics[width=0.65\linewidth]{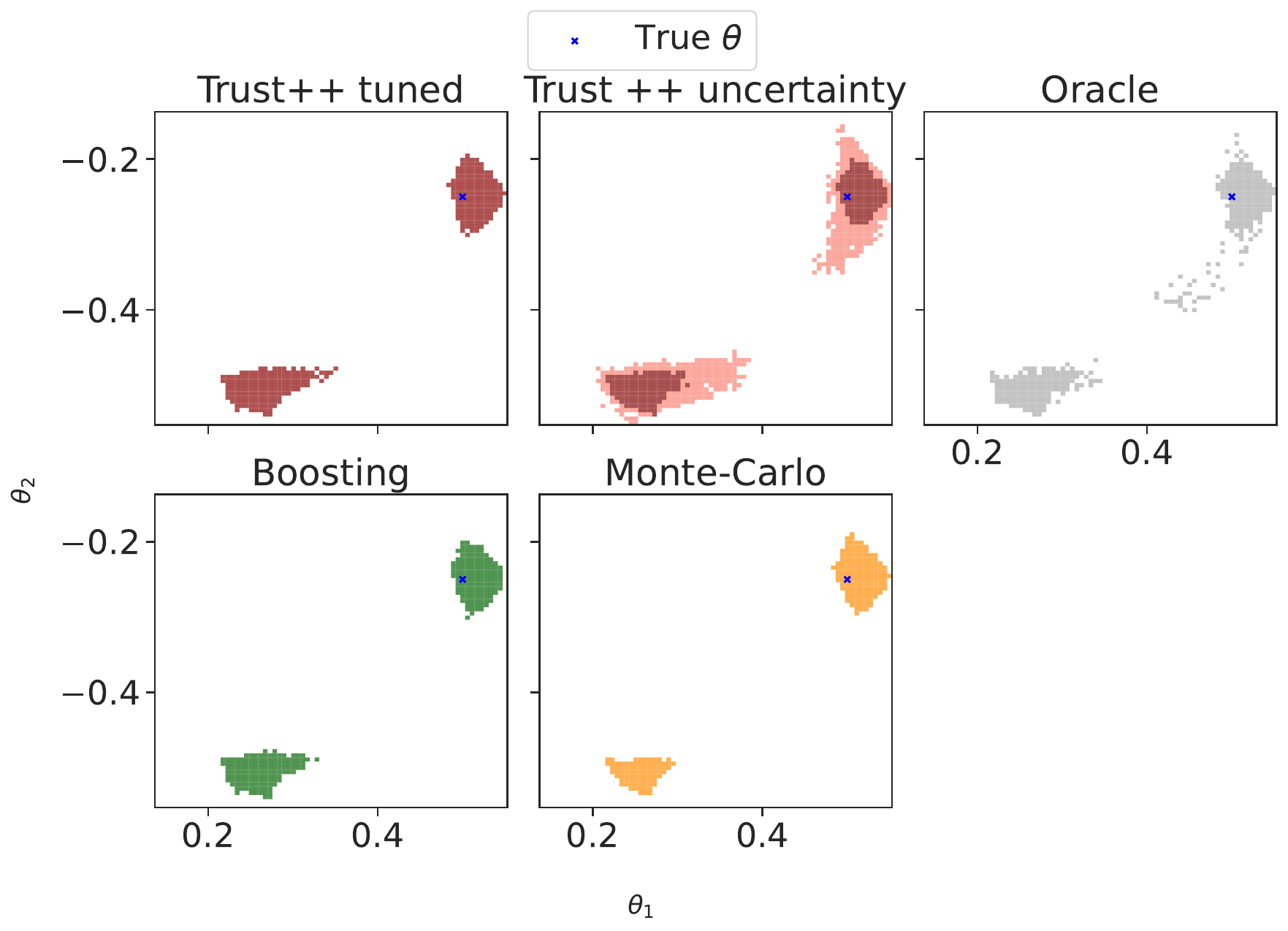}
    \caption{95\% confidence regions and \ourmethodpp{} uncertainty for the two-moons example using the e-value statistic ($n=20$, $B=15{,}000$). Maroon areas denote high confidence, while salmon areas indicate uncertainty. \ourmethodpp{} closely matches the oracle set and highlights where additional simulations could reduce uncertainty.}
    \label{fig:CI_e_value_two_moons}
\end{figure}

\subsection{Examples of nuisance-parameter problems}
\label{sec:nuisanceExamples}

In this section, we demonstrate the application of both of our methods in scenarios involving nuisance parameters. We begin with a Poisson counting experiment, originally introduced by \cite{dalmasso2021likelihood} as an example from high-energy physics, where the BFF statistic is non-invariant, requiring both estimation and marginalization. Next, we tackle a classic inference problem: estimating parameter intervals within a gamma generalized linear model (GLM) under the presence of nuisance parameters and limited sample sizes. We compare our methods to the commonly used asymptotic approach, employing a likelihood ratio statistic that is marginalized over the nuisance parameter \citep{mccullagh2019generalized}.

\subsubsection{Poisson Counting Experiment}
We consider the observed data $\X = (N_b, N_s)$  which consists of two counts where $N_{b} \sim \text{Pois}(\nu \tau b)$ and $N_{s} \sim \text{Pois}(\nu b + \mu s)$. In this example, $b$, $s$, and $\tau$ are fixed hyperparameters, while $\mu$ is the parameter of interest and $\nu$ is the nuisance parameter. We define $\boldsymbol{\theta} = (\mu, \nu) \in \Theta = (0, 5) \times (0,1.5)$ and fix the hyper-parameters as $s = 15$, $b = 70$ and $\tau = 1$  to avoid the Gaussian limiting regime for Poisson distributions. We also take the uniform distribution over the parameter space as prior. To derive a statistic $\tau(\X, \mu)$ in this case, in this context, we use a marginalized BFF statistic based on the posterior $f(\mu|\x)$, estimated through a normalizing flows approach (details in Appendix \ref{appendix:experiment_details}). We set the simulation budget to $B = 10 \times 10^3$ and $n = 1$ to fit all methods.

Although the BFF statistic is marginalized over nuisance parameters, it may still depend on their values. Therefore, we fit all methods considering the full parameter $(\mu, \nu)$ and compute the cutoff for $\mu$ through Eq. \ref{eq:cNuisance}. Our approaches facilitate this computation by using the tree structure and achieve better results, as shown in Table \ref{tab:comparisson_poisson_nuisance}.
\begin{table}[!ht]
\centering
\caption{Mean absolute coverage deviation from the oracle for each method in the Poisson counting example. The average across 15 runs is reported along with twice its standard error. \ourmethodpp{} demonstrates exceptional performance.}
\label{tab:comparisson_poisson_nuisance}
\begin{tabular}{ccc}
\hline
Methods   & $d_{\alpha}$    & $SE \cdot 2$      \\
\hline
\ourmethod{}     & 0.0369 & $0.14 \cdot 10^{-4}$  \\
\ourmethodpp{}   & 0.0041 & $0.15 \cdot 10^{-4}$  \\
Boosting  & 0.0084 & $0.12 \cdot 10^{-4}$ \\
MC        & 0.1460 & $0.11 \cdot 10^{-4}$ \\
\hline
\end{tabular}
\end{table}

Table \ref{tab:comparisson_poisson_nuisance} highlights the strong performance of \ourmethodpp{}, with coverage closely aligning with the oracle, emphasizing its adaptability to challenges involving nuisance parameters. Also, we observe that the Monte Carlo method shows a substantial deviation from the oracle, further illustrating its limitations in adapting effectively to different inference scenarios. 

While \ourmethodpp{} achieves the best results, boosting also performs well, providing coverage that is similarly close to the oracle's. Figure \ref{fig:oracle_diff_poisson} details the differences between the methods. We observe that \ourmethodpp{} effectively controls coverage deviation across all combinations of $(\mu, \nu)$. In contrast, boosting exhibits areas of higher deviation, particularly around $\mu = 2$ for all values of $\nu$ and around $\mu = 3$ when $\nu$ is low. This demonstrates \ourmethodpp{}'s accuracy in approximating the oracle region throughout the parameter space, indicating its effectiveness in estimating the cutoff $C_\mu$ defined in Eq. \ref{eq:cNuisance}.

\begin{figure}[!ht]
    \centering
    \includegraphics[width=0.8\linewidth]{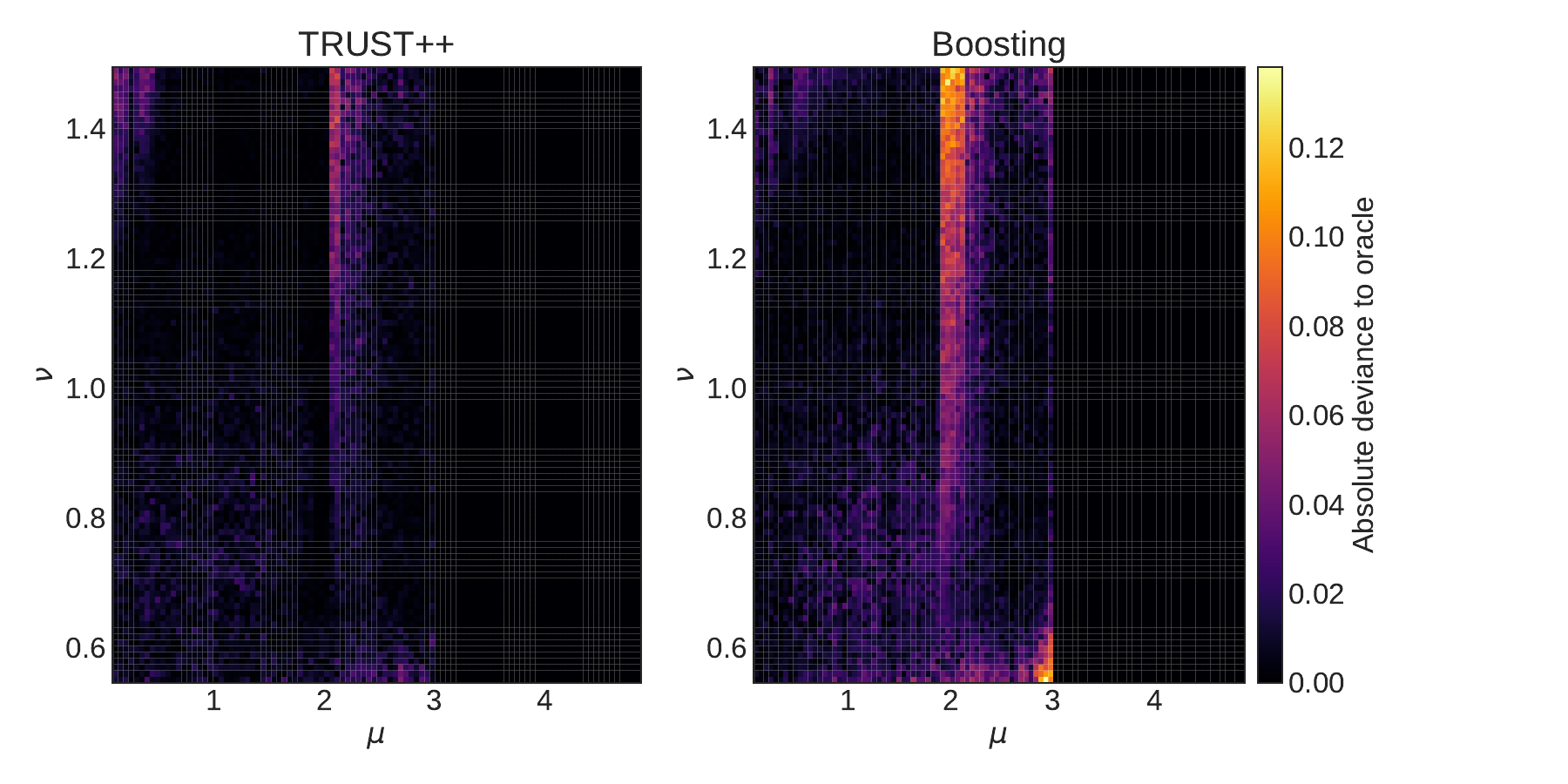}
    \caption{Comparison of absolute deviations from the oracle for each pair of parameters between \ourmethodpp{} and boosting. Notably, around $\mu = 2$, boosting exhibits significant deviations from the oracle, whereas \ourmethodpp{} effectively controls these deviations, keeping them below $0.1$.}
    \label{fig:oracle_diff_poisson}
\end{figure}

\subsubsection{Gamma GLM experiment} Consider i.i.d data $Y_i \in \mathbb{R^{+}}$ generated from a gamma generalized linear model (GLM) with fixed covariates $\X_i = (X_{i,1}, X_{i,2}) \in (-1, 1)^2$:
\begin{align}
\label{eq:likelihood_GLM}
    Y_i \sim \text{Gamma}\left(1/\phi, \phi \cdot \exp{\{ \beta_0 + \beta_1 X_{i, 1} + \beta_2 X_{i, 2} \}}  \right) \; ,
\end{align}
 where $\phi$ is the dispersion parameter and $\beta_0, \beta_1$ and $\beta_2$ are coefficients for the linear predictor, with $\phi \in (0,1.75)$ and $\boldsymbol{\beta} = (\beta_0, \beta_1, \beta_2) \in \R^{3}$. The expected value of each $Y_i$ is given by:
\begin{align*}
\E[Y_i] = \exp{\{ \beta_0 + \beta_1 X_1 + \beta_2 X_2\}} \;.
\end{align*}
A standard problem in this setting is to derive a valid confidence interval for a single parameter $\beta_j$. A widely used approach is to construct an asymptotic confidence interval based on the marginalized likelihood ratio statistic:
$
    LR(\y, \beta_j) = \log  \mathcal{L}(\y; \boldsymbol{\widehat{\beta}}_{-j}, \beta_j) - \log \mathcal{L}(\y;\boldsymbol{\widehat{\beta}})$,
where $\boldsymbol{\widehat{\beta}}$ is the MLE, $\boldsymbol{\widehat{\beta}}_{-j}$ the MLE restricted under a fixed $\beta_j$ and $\mathcal{L}(.;\boldsymbol{\beta})$ denotes the GLM likelihood as defined in Eq. \ref{eq:likelihood_GLM}. Using this statistic, an asymptotic confidence interval can be easily derived by referencing the $\chi^2_1$ distribution to compute $C_{\beta_j}$ in a invariant way. 

The limitation of this approach is that it relies on large sample sizes to be effective; for small samples, it may fail to produce a valid confidence interval, as illustrated in Figure \ref{fig:example_CI_and_prob_coverage_GLM}(b). This is because the marginalized likelihood ratio statistic still depends on the remaining nuisance parameters,  $\boldsymbol{\beta}_{-j}$ and $\phi$ in low-sample regimes. This limitation can be addressed with our approach by computing $C_{\beta_j}$ using Eq. \ref{eq:cNuisance} in a scalable manner. To apply all simulation-based approaches, we assume the priors $\boldsymbol{\beta} \sim N(0, 4) \cdot N(0,1)^2$ and $\phi \sim \text{Truncated Exponential}(1, 1.75)$. For method comparison, we set the parameter of interest to $j = 1$, the sample size to $n = 50$ and the simulation budget to $B = 10 \cdot 10^3$, with the covariate matrix $\X$  fixed according to values generated independently from $U(-1,1)^2$. Table \ref{tab:comparisson_glm_nuisance} shows the mean deviation from the oracle coverage for each method, while Figure \ref{fig:example_CI_and_prob_coverage_GLM}  illustrates the probability of coverage and confidence intervals for specific values of $\beta_j$ and $\Y$ samples.
\begin{table}[h]
\centering
\caption{Mean absolute coverage deviation from the oracle for each method in the GLM example. The average across 15 runs is reported along with twice its standard error. \ourmethodpp{} demonstrates superior performance compared to all competing methods.}
\label{tab:comparisson_glm_nuisance}
\begin{tabular}{ccc}
\hline
Methods      & $d_{\alpha}$      & $SE \cdot 2$      \\ \hline
\ourmethod       & $0.0312$  & $0.176 \cdot 10^{-3}$  \\ 
\ourmethodpp{}      & $0.0157$  & $0.163 \cdot 10^{-3}$  \\
Boosting     & $0.0234$  & $0.171 \cdot 10^{-3}$ \\ 
MC           & $0.0207$  & $0.157 \cdot 10^{-3}$ \\
Asymptotic   & $0.0327$  & $0.171 \cdot 10^{-3}$  \\ \hline
\end{tabular}
\end{table}

By Table \ref{tab:comparisson_glm_nuisance} we notice that \ourmethodpp{} achieves the best performance, with coverage closely matching the oracle and outperforming Monte Carlo and boosting by a large margin. Figure \ref{fig:example_CI_and_prob_coverage_GLM} illustrate this proximity for specific values of $\beta_j$ and observed sample values $\Y$. Although \ourmethodpp{} exhibits overcoverage relative to the nominal level $1-\alpha$ and produces larger confidence intervals compared to other methods, it closely emulates the oracle’s behavior, making it more valid than the competing approaches. Additionally, \ourmethodpp{}'s uncertainty bar indicates that for nearly all parameters within the confidence interval, we can be confident they truly lie within the interval, reflecting a well-estimated range.  Furthermore, both Table \ref{tab:comparisson_glm_nuisance} and Figure \ref{fig:example_CI_and_prob_coverage_GLM} reveal that the asymptotic approach not only deviates more from the oracle but also undercovers and underestimates the confidence interval length. This underscores the limitations of asymptotic methods for small sample sizes and nuisance parameter, even in problems with tractable likelihoods.

\section{Final Remarks}
\label{sec:final_remarks}

This paper introduces novel methods, \ourmethod \ and \ourmethodpp, for the distribution-free calibration of statistical confidence sets, ensuring both finite-sample local coverage and asymptotic conditional coverage. By leveraging tools from conformal inference, we create robust confidence sets that extend the applicability of traditional conformal techniques, typically used for generating prediction intervals, to the domain of statistical inference. This adaptation allows us to maintain the desired coverage properties even in challenging settings. In practice, this translates to superior performances across a variety of scenarios, particularly in cases involving small sample sizes $n$ and larger simulation budgets $B$. 

Furthermore, our methods effectively address inference settings with nuisance parameters, a notable challenge for other approaches, especially in likelihood-free inference. Unlike existing techniques, \ourmethod \ and \ourmethodpp \ also enable robust uncertainty quantification about the oracle confidence sets, providing valuable insights into whether additional simulated data is needed to achieve more reliable estimates. Additionally, we can leverage the third branch of LF2I \citep{dalmasso2020confidence,dalmasso2021likelihood} to test for the exact conditional coverage of our confidence sets.

Theorem \ref{thm:asympt_conditional} demonstrates that both \ourmethod\ and \ourmethodpp\ are consistent estimators for quantile regression. While we applied these methods specifically to quantile-regress the conformal score on \(\theta\), their utility extends far beyond this particular setting. They can be applied in a variety of contexts, such as constructing predictive intervals in a prediction setting, including for multivariate \(Y\). This capability allows for the generation of label-conditional predictive sets that aim to control \( \P(Y \in \R(\X) \mid Y = y) \), ensuring coverage conditional on the label.

To our knowledge, no other conformal methods aim to control label-conditional coverage for continuous outcomes \(Y\), making this a novel contribution to the field. This work paves the way for future research to explore label-conditional inference in complex, continuous and multivariate-label settings, potentially improving the  reliability of prediction intervals in domains where conditional control is critical.


\subsubsection*{Acknowledgments}
L.M.C.C is grateful for the fellowship provided by São Paulo Research Foundation (FAPESP), grant 2022/08579-7. R. I. is grateful for the financial support of FAPESP (grants 2019/11321-9 and 2023/07068-1) and 
CNPq (grants 422705/2021-7 and 305065/2023-8). R. B. S. produced this work as part of the activities of Fundação de Amparo
à Pesquisa do Estado de São Paulo Research, Innovation and Dissemination Center for Neuromathematics (grant 2013/07699-0). The authors are also grateful to Rodrigo F. L. Lassance and Ann B. Lee for their suggestions and insightful discussions.

\bibliography{main}

\begin{thebibliography}{63}
\providecommand{\natexlab}[1]{#1}
\providecommand{\url}[1]{\texttt{#1}}
\expandafter\ifx\csname urlstyle\endcsname\relax
  \providecommand{\doi}[1]{doi: #1}\else
  \providecommand{\doi}{doi: \begingroup \urlstyle{rm}\Url}\fi

\bibitem[Algeri et~al.(2019)Algeri, Aalbers, Mor{\aa}, and Conrad]{algeri2019searching}
Sara Algeri, Jelle Aalbers, Knut~Dundas Mor{\aa}, and Jan Conrad.
\newblock Searching for new physics with profile likelihoods: Wilks and beyond.
\newblock \emph{arXiv preprint arXiv:1911.10237}, 2019.

\bibitem[Angelopoulos et~al.(2023)Angelopoulos, Bates, et~al.]{angelopoulos2023conformal}
Anastasios~N Angelopoulos, Stephen Bates, et~al.
\newblock Conformal prediction: A gentle introduction.
\newblock \emph{Foundations and Trends{\textregistered} in Machine Learning}, 16\penalty0 (4):\penalty0 494--591, 2023.

\bibitem[Baragatti et~al.(2024)Baragatti, Cloez, M{\'e}tivier, and Sanchez]{baragatti2024approximate}
Meili Baragatti, Bertrand Cloez, David M{\'e}tivier, and Isabelle Sanchez.
\newblock Approximate bayesian computation with deep learning and conformal prediction.
\newblock \emph{arXiv preprint arXiv:2406.04874}, 2024.

\bibitem[Berg(2004)]{berg2004no}
Nathan Berg.
\newblock No-decision classification: an alternative to testing for statistical significance.
\newblock \emph{the Journal of socio-Economics}, 33\penalty0 (5):\penalty0 631--650, 2004.

\bibitem[Biau et~al.(2008)Biau, Devroye, and Lugosi]{consistency_gabor}
G\'{e}rard Biau, Luc Devroye, and G\'{a}bor Lugosi.
\newblock Consistency of random forests and other averaging classifiers.
\newblock \emph{J. Mach. Learn. Res.}, 9:\penalty0 2015–2033, June 2008.
\newblock ISSN 1532-4435.

\bibitem[Blum and Fran{\c{c}}ois(2010)]{blum2010non}
Michael~GB Blum and Olivier Fran{\c{c}}ois.
\newblock Non-linear regression models for approximate bayesian computation.
\newblock \emph{Statistics and computing}, 20:\penalty0 63--73, 2010.

\bibitem[Bostr{\"o}m and Johansson(2020)]{bostrom2020mondrian}
Henrik Bostr{\"o}m and Ulf Johansson.
\newblock Mondrian conformal regressors.
\newblock In \emph{Conformal and Probabilistic Prediction and Applications}, pages 114--133. PMLR, 2020.

\bibitem[Bostr{\"{o}}m et~al.(2021)Bostr{\"{o}}m, Johansson, and L{\"{o}}fstr{\"{o}}m]{Bostroem2021}
Henrik Bostr{\"{o}}m, Ulf Johansson, and Tuwe L{\"{o}}fstr{\"{o}}m.
\newblock Mondrian conformal predictive distributions.
\newblock In Lars Carlsson, Zhiyuan Luo, Giovanni Cherubin, and Khuong~An Nguyen, editors, \emph{Conformal and Probabilistic Prediction and Applications, 8-10 September 2021, Virtual Event}, volume 152 of \emph{Proceedings of Machine Learning Research}, pages 24--38. {PMLR}, 2021.
\newblock URL \url{https://proceedings.mlr.press/v152/bostrom21a.html}.

\bibitem[Brehmer et~al.(2020)Brehmer, Louppe, Pavez, and Cranmer]{brehmer2020mining}
Johann Brehmer, Gilles Louppe, Juan Pavez, and Kyle Cranmer.
\newblock Mining gold from implicit models to improve likelihood-free inference.
\newblock \emph{Proceedings of the National Academy of Sciences}, 117\penalty0 (10):\penalty0 5242--5249, 2020.

\bibitem[Breiman(2001)]{breiman2001random}
Leo Breiman.
\newblock Random forests.
\newblock \emph{Machine learning}, 45:\penalty0 5--32, 2001.

\bibitem[Cabezas et~al.(2025)Cabezas, Otto, Izbicki, and Stern]{cabezas2025regression}
Luben~MC Cabezas, Mateus~P Otto, Rafael Izbicki, and Rafael~B Stern.
\newblock Regression trees for fast and adaptive prediction intervals.
\newblock \emph{Information Sciences}, 686:\penalty0 121369, 2025.

\bibitem[Casella and Berger(2024)]{casella2024statistical}
George Casella and Roger Berger.
\newblock \emph{Statistical inference}.
\newblock CRC Press, 2024.

\bibitem[Chen and Li(2009)]{chen2009hypothesis}
Jiahua Chen and Pengfei Li.
\newblock Hypothesis test for normal mixture models: The em approach.
\newblock 2009.

\bibitem[Cranmer et~al.(2020)Cranmer, Brehmer, and Louppe]{cranmer2020frontier}
Kyle Cranmer, Johann Brehmer, and Gilles Louppe.
\newblock The frontier of simulation-based inference.
\newblock \emph{Proceedings of the National Academy of Sciences}, 117\penalty0 (48):\penalty0 30055--30062, 2020.

\bibitem[Dalmasso et~al.(2020)Dalmasso, Izbicki, and Lee]{dalmasso2020confidence}
Niccolo Dalmasso, Rafael Izbicki, and Ann Lee.
\newblock Confidence sets and hypothesis testing in a likelihood-free inference setting.
\newblock In \emph{International Conference on Machine Learning}, pages 2323--2334. PMLR, 2020.

\bibitem[Dalmasso et~al.(2021)Dalmasso, Masserano, Zhao, Izbicki, and Lee]{dalmasso2021likelihood}
Niccol{\`o} Dalmasso, Luca Masserano, David Zhao, Rafael Izbicki, and Ann~B Lee.
\newblock Likelihood-free frequentist inference: Bridging classical statistics and machine learning in simulator-based inference.
\newblock \emph{arXiv preprint arXiv:2107.03920}, 2021.

\bibitem[de~B~Pereira et~al.(2008)de~B~Pereira, Stern, and Wechsler]{pereira2008can}
Carlos~A de~B~Pereira, Julio~Michael Stern, and Sergio Wechsler.
\newblock Can a significance test be genuinely bayesian?
\newblock \emph{Bayesian Analysis}, 3\penalty0 (1):\penalty0 79--100, 2008.

\bibitem[DeGroot and Schervish(2012)]{degroot2012probability}
Morris~H DeGroot and Mark~J Schervish.
\newblock Probability and statistics.[sl].
\newblock \emph{Pearson Education}, 19:\penalty0 33, 2012.

\bibitem[Dheur et~al.(2024)Dheur, Bosser, Izbicki, and Taieb]{dheur2024distribution}
Victor Dheur, Tanguy Bosser, Rafael Izbicki, and Souhaib~Ben Taieb.
\newblock Distribution-free conformal joint prediction regions for neural marked temporal point processes.
\newblock \emph{arXiv preprint arXiv:2401.04612}, 2024.

\bibitem[Ding et~al.(2023)Ding, Angelopoulos, Bates, Jordan, and Tibshirani]{ding2023class}
Tiffany Ding, Anastasios Angelopoulos, Stephen Bates, Michael Jordan, and Ryan~J Tibshirani.
\newblock Class-conditional conformal prediction with many classes.
\newblock \emph{Advances in neural information processing systems}, 36:\penalty0 64555--64576, 2023.

\bibitem[Diniz et~al.(2012)Diniz, Pereira, Polpo, Stern, and Wechsler]{diniz2012relationship}
Marcio Diniz, Carlos~AB Pereira, Adriano Polpo, Julio~M Stern, and Sergio Wechsler.
\newblock Relationship between bayesian and frequentist significance indices.
\newblock \emph{International Journal for Uncertainty Quantification}, 2\penalty0 (2), 2012.

\bibitem[Drton(2009)]{drton2009likelihood}
Mathias Drton.
\newblock Likelihood ratio tests and singularities.
\newblock 2009.

\bibitem[Durkan et~al.(2019)Durkan, Bekasov, Murray, and Papamakarios]{durkan2019neural}
Conor Durkan, Artur Bekasov, Iain Murray, and George Papamakarios.
\newblock Neural spline flows.
\newblock \emph{Advances in neural information processing systems}, 32, 2019.

\bibitem[Esteves et~al.(2016)Esteves, Izbicki, Stern, and Stern]{esteves2016logical}
Lu{\'\i}s~G Esteves, Rafael Izbicki, Julio~M Stern, and Rafael~B Stern.
\newblock The logical consistency of simultaneous agnostic hypothesis tests.
\newblock \emph{Entropy}, 18\penalty0 (7):\penalty0 256, 2016.

\bibitem[Freund and Schapire(1997)]{freund1997decision}
Yoav Freund and Robert~E Schapire.
\newblock A decision-theoretic generalization of on-line learning and an application to boosting.
\newblock \emph{Journal of computer and system sciences}, 55\penalty0 (1):\penalty0 119--139, 1997.

\bibitem[Greenberg et~al.(2019)Greenberg, Nonnenmacher, and Macke]{greenberg2019automatic}
David Greenberg, Marcel Nonnenmacher, and Jakob Macke.
\newblock Automatic posterior transformation for likelihood-free inference.
\newblock In \emph{International Conference on Machine Learning}, pages 2404--2414. PMLR, 2019.

\bibitem[Guan(2023)]{guan2023localized}
Leying Guan.
\newblock Localized conformal prediction: A generalized inference framework for conformal prediction.
\newblock \emph{Biometrika}, 110\penalty0 (1):\penalty0 33--50, 2023.

\bibitem[Hahn and Meeker(2011)]{hahn2011statistical}
Gerald~J Hahn and William~Q Meeker.
\newblock \emph{Statistical intervals: a guide for practitioners}, volume~92.
\newblock John Wiley \& Sons, 2011.

\bibitem[Hastie et~al.(2009)Hastie, Tibshirani, Friedman, and Friedman]{hastie2009elements}
Trevor Hastie, Robert Tibshirani, Jerome~H Friedman, and Jerome~H Friedman.
\newblock \emph{The elements of statistical learning: data mining, inference, and prediction}, volume~2.
\newblock Springer, 2009.

\bibitem[Hermans et~al.(2021)Hermans, Delaunoy, Rozet, Wehenkel, Begy, and Louppe]{hermans2021trust}
Joeri Hermans, Arnaud Delaunoy, Fran{\c{c}}ois Rozet, Antoine Wehenkel, Volodimir Begy, and Gilles Louppe.
\newblock A trust crisis in simulation-based inference? your posterior approximations can be unfaithful.
\newblock \emph{arXiv preprint arXiv:2110.06581}, 2021.

\bibitem[Hore and Barber(2023)]{hore2023conformal}
Rohan Hore and Rina~Foygel Barber.
\newblock Conformal prediction with local weights: randomization enables local guarantees.
\newblock \emph{arXiv preprint arXiv:2310.07850}, 2023.

\bibitem[Izbicki et~al.(2014)Izbicki, Lee, and Schafer]{izbickiLeeSchafer}
R.~Izbicki, A.B. Lee, and C.M. Schafer.
\newblock High-dimensional density ratio estimation with extensions to approximate likelihood computation.
\newblock \emph{Journal of Machine Learning Research (AISTATS Track)}, pages 420--429, 2014.

\bibitem[Izbicki et~al.(2019)Izbicki, Lee, and Pospisil]{izbicki2019abc}
Rafael Izbicki, Ann~B Lee, and Taylor Pospisil.
\newblock {ABC--CDE}: Toward approximate bayesian computation with complex high-dimensional data and limited simulations.
\newblock \emph{Journal of Computational and Graphical Statistics}, 28\penalty0 (3):\penalty0 481--492, 2019.

\bibitem[Izbicki et~al.(2020)Izbicki, Shimizu, and Stern]{izbicki2020flexible}
Rafael Izbicki, Gilson Shimizu, and Rafael Stern.
\newblock Flexible distribution-free conditional predictive bands using density estimators.
\newblock In \emph{International Conference on Artificial Intelligence and Statistics}, pages 3068--3077. PMLR, 2020.

\bibitem[Izbicki et~al.(2022)Izbicki, Shimizu, and Stern]{izbicki2022cd}
Rafael Izbicki, Gilson Shimizu, and Rafael~B Stern.
\newblock Cd-split and hpd-split: Efficient conformal regions in high dimensions.
\newblock \emph{Journal of Machine Learning Research}, 23\penalty0 (87):\penalty0 1--32, 2022.

\bibitem[Izbicki et~al.(2025)Izbicki, Cabezas, Colugnatti, Lassance, de~Souza, and Stern]{Izbicki2025REACT}
Rafael Izbicki, Luben M.~C. Cabezas, Fernando A.~B. Colugnatti, Rodrigo F.~L. Lassance, Altay A.~L. de~Souza, and Rafael~B. Stern.
\newblock React to nhst: Sensible conclusions for meaningful hypotheses.
\newblock \emph{The Quantitative Methods for Psychology}, 21\penalty0 (2):\penalty0 43--66, 2025.
\newblock \doi{10.20982/tqmp.21.2.p043}.
\newblock URL \url{http://www.tqmp.org/RegularArticles/vol21-2/p043/p043.pdf}.

\bibitem[Jonkers et~al.(2024)Jonkers, Van~Wallendael, Duchateau, and Van~Hoecke]{jonkers2024cpd}
Jef Jonkers, Glenn Van~Wallendael, Luc Duchateau, and Sofie Van~Hoecke.
\newblock Conformal predictive systems under covariate shift.
\newblock \emph{arXiv preprint arXiv:2404.15018}, 2024.

\bibitem[Kass and Raftery(1995)]{kass1995bayes}
Robert~E Kass and Adrian~E Raftery.
\newblock Bayes factors.
\newblock \emph{Journal of the american statistical association}, 90\penalty0 (430):\penalty0 773--795, 1995.

\bibitem[Lehmann et~al.(1986)Lehmann, Romano, and Casella]{lehmann1986testing}
Erich~Leo Lehmann, Joseph~P Romano, and George Casella.
\newblock \emph{Testing statistical hypotheses}, volume~3.
\newblock Springer, 1986.

\bibitem[Lei and Wasserman(2014)]{lei2014distribution}
Jing Lei and Larry Wasserman.
\newblock Distribution-free prediction bands for non-parametric regression.
\newblock \emph{Journal of the Royal Statistical Society Series B: Statistical Methodology}, 76\penalty0 (1):\penalty0 71--96, 2014.

\bibitem[Lueckmann et~al.(2019)Lueckmann, Bassetto, Karaletsos, and Macke]{lueckmann2019likelihood}
Jan-Matthis Lueckmann, Giacomo Bassetto, Theofanis Karaletsos, and Jakob~H Macke.
\newblock Likelihood-free inference with emulator networks.
\newblock In \emph{Symposium on Advances in Approximate Bayesian Inference}, pages 32--53, 2019.

\bibitem[Lueckmann et~al.(2021)Lueckmann, Boelts, Greenberg, Goncalves, and Macke]{lueckmann2021benchmarking}
Jan-Matthis Lueckmann, Jan Boelts, David Greenberg, Pedro Goncalves, and Jakob Macke.
\newblock Benchmarking simulation-based inference.
\newblock In \emph{International conference on artificial intelligence and statistics}, pages 343--351. PMLR, 2021.

\bibitem[Marsaglia et~al.(2003)Marsaglia, Tsang, and Wang]{marsaglia2003evaluating}
George Marsaglia, Wai~Wan Tsang, and Jingbo Wang.
\newblock Evaluating kolmogorov's distribution.
\newblock \emph{Journal of statistical software}, 8:\penalty0 1--4, 2003.

\bibitem[Masserano et~al.(2023)Masserano, Dorigo, Izbicki, Kuusela, and Lee]{masserano2023simulator}
Luca Masserano, Tommaso Dorigo, Rafael Izbicki, Mikael Kuusela, and Ann Lee.
\newblock Simulator-based inference with waldo: Confidence regions by leveraging prediction algorithms and posterior estimators for inverse problems.
\newblock In \emph{International Conference on Artificial Intelligence and Statistics}, pages 2960--2974. PMLR, 2023.

\bibitem[McCullagh(2019)]{mccullagh2019generalized}
Peter McCullagh.
\newblock \emph{Generalized linear models}.
\newblock Routledge, 2019.

\bibitem[Meinshausen and Ridgeway(2006)]{meinshausen2006quantile}
Nicolai Meinshausen and Greg Ridgeway.
\newblock Quantile regression forests.
\newblock \emph{Journal of machine learning research}, 7\penalty0 (6), 2006.

\bibitem[Papamakarios et~al.(2019)Papamakarios, Sterratt, and Murray]{papamakarios2019likelihood}
George Papamakarios, David Sterratt, and Iain Murray.
\newblock Sequential neural likelihood: Fast likelihood-free inference with autoregressive flows.
\newblock In \emph{The 22nd {I}nternational {C}onference on {A}rtificial {I}ntelligence and {S}tatistics}, pages 837--848, 2019.

\bibitem[Patel et~al.(2024)Patel, McNamara, Loper, Regier, and Tewari]{patelvariational}
Yash Patel, Declan McNamara, Jackson Loper, Jeffrey Regier, and Ambuj Tewari.
\newblock Variational inference with coverage guarantees in simulation-based inference.
\newblock In \emph{Forty-first International Conference on Machine Learning}, 2024.

\bibitem[Pedregosa et~al.(2011)Pedregosa, Varoquaux, Gramfort, Michel, Thirion, Grisel, Blondel, Prettenhofer, Weiss, Dubourg, et~al.]{pedregosa2011scikit}
Fabian Pedregosa, Ga{\"e}l Varoquaux, Alexandre Gramfort, Vincent Michel, Bertrand Thirion, Olivier Grisel, Mathieu Blondel, Peter Prettenhofer, Ron Weiss, Vincent Dubourg, et~al.
\newblock Scikit-learn: Machine learning in python.
\newblock \emph{the Journal of machine Learning research}, 12:\penalty0 2825--2830, 2011.

\bibitem[Pereira and Stern(1999)]{pereira1999evidence}
Carlos Alberto de~Bragan{\c{c}}a Pereira and Julio~Michael Stern.
\newblock Evidence and credibility: full bayesian significance test for precise hypotheses.
\newblock \emph{Entropy}, 1\penalty0 (4):\penalty0 99--110, 1999.

\bibitem[Rezende and Mohamed(2015)]{rezende2015variational}
Danilo Rezende and Shakir Mohamed.
\newblock Variational inference with normalizing flows.
\newblock In \emph{International conference on machine learning}, pages 1530--1538. PMLR, 2015.

\bibitem[Sadinle et~al.(2019)Sadinle, Lei, and Wasserman]{sadinle2019least}
Mauricio Sadinle, Jing Lei, and Larry Wasserman.
\newblock Least ambiguous set-valued classifiers with bounded error levels.
\newblock \emph{Journal of the American Statistical Association}, 114\penalty0 (525):\penalty0 223--234, 2019.

\bibitem[Schervish(2012)]{schervish2012theory}
Mark~J Schervish.
\newblock \emph{Theory of statistics}.
\newblock Springer Science \& Business Media, 2012.

\bibitem[Shafer and Vovk(2008)]{shafer2008tutorial}
Glenn Shafer and Vladimir Vovk.
\newblock A tutorial on conformal prediction.
\newblock \emph{Journal of Machine Learning Research}, 9\penalty0 (3), 2008.

\bibitem[Stimper et~al.(2023)Stimper, Liu, Campbell, Berenz, Ryll, Schölkopf, and Hernández-Lobato]{Stimper2023}
Vincent Stimper, David Liu, Andrew Campbell, Vincent Berenz, Lukas Ryll, Bernhard Schölkopf, and José~Miguel Hernández-Lobato.
\newblock normflows: A pytorch package for normalizing flows.
\newblock \emph{Journal of Open Source Software}, 8\penalty0 (86):\penalty0 5361, 2023.
\newblock \doi{10.21105/joss.05361}.
\newblock URL \url{https://doi.org/10.21105/joss.05361}.

\bibitem[Van~der Vaart(2000)]{van2000asymptotic}
Aad~W Van~der Vaart.
\newblock \emph{Asymptotic statistics}, volume~3.
\newblock Cambridge university press, 2000.

\bibitem[Vovk et~al.(2005)Vovk, Gammerman, and Shafer]{vovk2005algorithmic}
Vladimir Vovk, Alexander Gammerman, and Glenn Shafer.
\newblock \emph{Algorithmic learning in a random world}, volume~29.
\newblock Springer, 2005.

\bibitem[Vovk et~al.(2014)Vovk, Petej, and Fedorova]{vovk2014conformal}
Vladimir Vovk, Ivan Petej, and Valentina Fedorova.
\newblock From conformal to probabilistic prediction.
\newblock In \emph{Artificial Intelligence Applications and Innovations: AIAI 2014 Workshops: CoPA, MHDW, IIVC, and MT4BD, Rhodes, Greece, September 19-21, 2014. Proceedings 10}, pages 221--230. Springer, 2014.

\bibitem[Vovk et~al.(2016)Vovk, Fedorova, Nouretdinov, and Gammerman]{vovk2016criteria}
Vladimir Vovk, Valentina Fedorova, Ilia Nouretdinov, and Alexander Gammerman.
\newblock Criteria of efficiency for conformal prediction.
\newblock In \emph{Conformal and Probabilistic Prediction with Applications: 5th International Symposium, COPA 2016, Madrid, Spain, April 20-22, 2016, Proceedings 5}, pages 23--39. Springer, 2016.

\bibitem[Vovk et~al.(2019)Vovk, Shen, Manokhin, and Xie]{Vovk2019}
Vladimir Vovk, Jieli Shen, Valery Manokhin, and Min{-}ge Xie.
\newblock Nonparametric predictive distributions based on conformal prediction.
\newblock \emph{Mach. Learn.}, 108\penalty0 (3):\penalty0 445--474, 2019.
\newblock \doi{10.1007/S10994-018-5755-8}.

\bibitem[Vovk et~al.(2020)Vovk, Petej, Nouretdinov, Manokhin, and Gammerman]{Vovk2020}
Vladimir Vovk, Ivan Petej, Ilia Nouretdinov, Valery Manokhin, and Alexander Gammerman.
\newblock Computationally efficient versions of conformal predictive distributions.
\newblock \emph{Neurocomputing}, 397:\penalty0 292--308, 2020.
\newblock \doi{10.1016/J.NEUCOM.2019.10.110}.

\bibitem[Vovk et~al.(2022)Vovk, Gammerman, and Shafer]{vovk2022algorithmic}
Vladimir Vovk, Alexander Gammerman, and Glenn Shafer.
\newblock \emph{Algorithmic Learning in a Random World}.
\newblock Springer Nature, 2022.

\bibitem[Wichitchan et~al.(2019)Wichitchan, Yao, and Yang]{wichitchan2019hypothesis}
Supawadee Wichitchan, Weixin Yao, and Guangren Yang.
\newblock Hypothesis testing for finite mixture models.
\newblock \emph{Computational statistics \& data analysis}, 132:\penalty0 180--189, 2019.

\end{thebibliography}

\appendix
\section{Experiment details}
\label{appendix:experiment_details}
In this section, we provide supplementary details regarding our experimental methodology and present additional results. We specify the statistics and models/simulators used for both the tractable and intractable likelihood comparisons and describe the architectures for the LFI posterior estimator and the estimator used in the Poisson Counting experiment.

For implementation, all experiments involving normalizing flows for posterior estimation were conducted using the \textit{normflows} Python package \citep{Stimper2023}. We employed an autoregressive rational quadratic spline architecture for each flow \citep{durkan2019neural}, and the models were trained on an Acer Nitro 5 AN515-54 using a GPU (CUDA).

\subsection{Statistics and models details}
\label{appendix:stats_details}

\subsubsection{Tractable likelihood}
\label{appendix:tractable_ll}

The data-generating processes are:

\begin{itemize}
    \item \textbf{Normal model with fixed variance}: $X_i \overset{iid}{\sim} N(\theta, 1)$, with $\theta \in \Theta = [-5, 5]$. When a prior is needed to build $\tau$, we use $\theta \sim N(0, 0.25)$.
    \item \textbf{Gaussian mixture model (GMM)}: $X_i \overset{iid}{\sim} 0.5N(\theta, 1) + 0.5N(-\theta, 1)$, with $\theta \in \Theta = [0,5]$. When a prior is needed to build $\tau$, we use $\theta \sim N(0.25, 1)$ as the prior.
    \item \textbf{Lognormal model with both mean and scale as parameters}: $X_i \overset{iid}{\sim} \text{lognormal}(\mu, \sigma^2)$, with $\boldsymbol{\theta} = \mu \times \sigma^2 \in [-2.5, 2.5] \times [0.15, 1.25] = \boldsymbol{\Theta}$. When a prior is needed to build $\tau$, we use $(\mu, \sigma^2) \sim \text{NIG}(0, 2, 2, 1)$.
\end{itemize}
To choose $\tau$, we consider:
\begin{itemize}
    \item \textbf{Likelihood ratio}: 
    Considering that $\mathcal{L}(.; \theta_0)$ represents the likelihood function, the likelihood ratio statistic is given by:
    \begin{align}
    \label{eq:likelihood_ratio}
    LR(\x, \theta_0) = \log{\frac{\mathcal{L}(\x; \theta_0)}{\sup_{\theta \neq \theta_0} \mathcal{L}(\x;\theta)}} \;.
    \end{align}
    Under regularity conditions,
    Wilks' theorem \citep{drton2009likelihood} implies that
    $-2LR(\X, \theta_0)|\theta=\theta_0$
    has an asymptotic  $\chi_1^2$ distribution, which is typically used to approximate $C_\theta$. However, regularity conditions are not always met, as in the case of the Gaussian Mixture Model \citep{chen2009hypothesis}.
    \item \textbf{Kolmogorov-Smirnov statistic}:
    Let $F_{\theta_0}(\cdot)$ be the theoretical CDF under $\theta_0$ and  $\widehat{F}_n(\cdot)$ be the empirical CDF estimated using data $\X$ with size $n$. The Kolmogorov-Smirnov statistic is defined as:
    \begin{align*}
        KS(\x, \theta_0) = \sup_{x \in \mathcal{X}} |F_{\theta_0}(x) - \widehat{F}_n(x)| = D_n \; .
    \end{align*}
    In this case, a classical result used to approximate $C_\theta$ is that, under the null hypothesis:
    \begin{align*}
        \sqrt{N} D_n \overset{n \rightarrow \infty}{\longrightarrow} \mathcal{K} \; ,
    \end{align*}
    with $\mathcal{K}$ being the Kolmogorov distribution \citep{marsaglia2003evaluating}.
    \item \textbf{Bayes Factor}:
    Considering a prior probability $\pi$ over $\Theta$ and the comparison of the hypothesis $H_0:\theta \in \Theta_0$ to its complement $H_1:\theta \in \Theta_1$, the Bayes factor (BF) is given by the ratio of the marginal likelihood of both hypothesis \citep{kass1995bayes}:
\begin{align}
\label{eq:og_bayes_factor}
    BF(\x, \Theta_0) = \frac{\P(\x|H_0)}{\P(\x|H_1)} = \frac{\int_{\Theta_0} \mathcal{L}(\x; \theta) d\pi_{0}(\theta)}{\int_{\Theta_1} \mathcal{L}(\x; \theta) d\pi_{1}(\theta)} \; ,
\end{align}
where $\pi_0$ and $\pi_1$ represent the restrictions of $\pi$ to $\Theta_0$ and $\Theta_1$, respectively. Since our focus is on the precise hypothesis $H_0: \theta = \theta_0$ without altering the joint distribution of the data $\X$ and $\theta$, we define the restrictions as follows:
$$\pi_0(\theta) = \begin{cases}
                        1 \text{ if $\theta = \theta_0$} \\
                        0 \text{ otherwise}
                    \end{cases} \quad \pi_1 = \begin{cases}
                        \pi(\theta) \text{ if $\theta \neq \theta_0$} \\
                        0 \text{ otherwise}
                    \end{cases} \, $$
and then we compute the Bayes Factor statistic as:
\begin{align}
\label{eq:bayes_frequentist_factor}
    BF(\x, \theta_0) = \frac{\int_{\Theta_0} \mathcal{L}(\x; \theta) d\pi_{0}(\theta)}{\int_{\Theta_1} \mathcal{L}(\x; \theta) d\pi_{1}(\theta)} = \frac{\mathcal{L}(\x;\theta_0)}{f(\x)} = \frac{f(\theta_0, \x)}{\pi(\theta_0) \cdot f(\x)} = \frac{f(\theta_0|\x)}{\pi(\theta_0)} \; .
\end{align}
Following \cite{dalmasso2021likelihood}, we use the Bayes Factor as a frequentist statistic to construct confidence sets. We adopt the term  ``Bayes Frequentist Statistic'', as introduced by \cite{dalmasso2021likelihood}.
    \item \textbf{E-value:} Another statistic used to test hypotheses in a Bayesian context is the Full Bayesian Significance Testing \citep{pereira1999evidence, pereira2008can}. This procedure assigns an evidence measure based on the posterior distribution to the hypothesis \( H_0 \) and rejects it if the evidence measure is small. Specifically, let \( T_{\theta_0} = \{\theta \in \Theta \mid f(\theta|\X) \geq f(\theta_0|\X)\} \). The Bayesian evidence measure (e-value) in favor of \( H_0 \) is defined as:
\begin{align}
\label{eq:e_value}
    ev(\x, \theta_0) = 1 - \P(\theta \in T_{\theta_0}|\x) = 1 - \int_{T_{\theta_0}} f(\theta|\x) d\theta.
\end{align}
Although the e-value is originally considered a Bayesian measure, we use it as a frequentist statistic to construct confidence sets. Therefore, we use the term "Frequentist e-value" to refer to this statistic in our context. \cite{diniz2012relationship} establishes an asymptotic connection between the e-value and the likelihood ratio statistic, providing an asymptotic approximation for the e-value under contour restrictions. The resulting asymptotic cutoff corresponds to the significance level \( \alpha \) for the simulation settings used here. 

\end{itemize}

\subsubsection{Intractable likelihood}
\label{appendix:intractable_ll}

We consider the following simulators:

\begin{itemize}
\item \textbf{SLCP (Simple likelihood complex posterior)}: in this model, we consider that $\x = (\x_1, \x_2, \x_3, \x_4) \in \mathbb{R}^8$ represents 2d-coordinates of 4 points. The coordinates of each point are sampled independently from a multivariate Gaussian whose mean and covariance matrix are parametrized by a 5-dimensional parameter $\boldsymbol{\theta}$ \citep{hermans2021trust}. Despite the likelihood's simplicity, the posterior of this model is complex \citep{papamakarios2019likelihood}, making it difficult to use the BFF and E-value statistics in the analytical formula. We let $\Theta = (-3,3)^5$ and use $\theta_i \sim U(-3, 3)$ as a prior for the model and to estimate $\tau$.
\item \textbf{Two Moons}:  here we consider a likelihood model with $\x = (x_1, x_2) \in \mathbb{R}^2$ and a two-dimensional $\boldsymbol{\theta}$ such that the posterior exhibits a bimodal moon shape-like structure \citep{greenberg2019automatic}. In this case, we must model and approximate the posterior to estimate the BFF and E-value statistics. We set $\Theta = (-1,1)^2$ and use $\theta_i \sim U(-1,1)$ as a prior for the model and to estimate $\tau$. 
\item \textbf{M/G/1}:  this model describes a processing and arrival queue system, where a 3-dimensional parameter $\theta$ influences both the service time per customer and the intervals between arrivals \citep{hermans2021trust}. Here, the sample $\x = (x_1,\dots, x_5) \in \mathbb{R}^5$  consists of 5 equally spaced quantiles of inter-departure times. In this case, the likelihood $\mathcal{L}(\x; \theta)$ is intractable, and computing the BFF and E-value statistics through likelihood-based inference would require high-dimensional integration \citep{blum2010non}. We set $\Theta = (0,10)^2\times(0,1/3)$ and use $(\theta_1, \theta_2, \theta_3) \sim U(0, 10)^2 \times U(0, 1/3)$ as prior for the model and to estimate $\tau$.
\item \textbf{Weinberg}: this simulator consists of a high-energy particle collision physics model \citep{hermans2021trust}. Here, our sample $\x \in \mathbb{R}$ is a measure of the Weinberg angle and we are interested in inferring the Fermi's constant $\theta \in \mathbb{R}$. As the likelihood is intractable, we must estimate all kinds of statistics for this problem. We set  $\Theta = (0.5,1.5)$, and use the prior distribution $\theta \sim U(0.5, 1.5)$ for estimating $\tau$.
\item \textbf{SIR}: In this example, we examine an epidemiological model that tracks the dynamics of individuals across three states: susceptible (S), infectious (I), and recovered or deceased (R) \citep{lueckmann2021benchmarking}.  The sample $\x = (x_1, x_2, x_3) \in \mathbb{R}^3$ represents the count of individuals in each state within a population of 1000, observed after 10 iterations of the model. The parameter $\boldsymbol{\theta}$, a 2-dimensional vector, denotes the contact rate and the mean recovery rate of the model. This case also involves an intractable likelihood, requiring estimation of all relevant statistics. We set $\Theta = (0, 0.5)^2$ and apply the prior $\theta_i \sim U(0, 0.5)$ for estimating $\tau$.
\end{itemize}
For choices of $\tau$, we consider both the E-value and BFF introduced in Appendix \ref{appendix:tractable_ll} along with the Waldo statistic \citep{masserano2023simulator}:
\begin{align}
    \text{Waldo}(\x, \theta_0) = (\E[\theta|\x] - \theta_0)^{T} \V[\theta|\x]^{-1} (\E[\theta|\x] - \theta_0) \;,
\end{align}
where $\E[\theta|\x]$ and $\V[\theta|\x]$ replace the MLE estimator $\hat{\theta}$ and its variance by the conditional mean and covariance matrix of $\theta$ given $\x$. As detailed by \cite{masserano2023simulator}, under Bayes estimator assumptions, the Waldo statistic retains the same asymptotic properties as the Wald statistic. However, for smaller $n$, Waldo can leverage consistent priors on $\theta$ to produce tighter confidence sets. This makes Waldo a compelling choice in scenarios where the likelihood is intractable.

\subsection{LFI posterior estimator}
In the LFI setting, we employed a total of six flows, each consisting of two hidden layers with 128 units per layer. A dropout probability of 0.35 was applied to each layer to prevent overfitting. For optimization, we trained the model for a maximum of 1000 epochs, with early stopping triggered after 30 epochs of no improvement. The Adam optimizer was used with a learning rate of $3 \times 10^{-4}$ and a weight decay of $1 \times 10^{-5}$. To train the neural network, we simulated $B = 30,000$ samples for all sample sizes $n$.

\subsection{Poisson Counting Experiment posterior estimator}
For the nuisance parameter experiment, we used a total of four flows with the same configuration for hidden layers, hidden units, dropout probability, and optimizer as in the LFI neural posterior model. In this case, training was conducted for up to 2000 epochs, with early stopping triggered after 200 epochs without improvement. The neural network was trained using $B = 25,000$ simulated samples for $n = 1$.

\subsection{Experiment results}
\label{appendix:exp_results}
This subsection presents heatmaps comparing the conditional coverage performance of each method for each statistic separately. In each heatmap, rows correspond to the methods and columns represent specific scenarios, defined by a model and budget combination. The heatmaps are divided by each sample size. For each scenario, we highlight the top-performing methods in green, which are those with the lowest MAE or an MAE within two standard errors of the minimum.

\subsubsection{Tractable likelihood detailed results}
\label{appendix:trac_results}

Figures \ref{fig:heatmap_bff_sim}, \ref{fig:heatmap_LR_sim}, \ref{fig:heatmap_KS_sim}, and \ref{fig:heatmap_e_value_sim} show the performance of the BFF, LR, KS, and E-value statistics, respectively, for sample sizes of $n=10, 20, 50, 100$.

Overall, \ourmethodpp{} tuned and \ourmethod{} are strong competitors for tractable models. Nonetheless, certain methods are preferable in specific scenarios. For instance, the asymptotic approach performs particularly well for the LR statistic under the normal model with fixed variance (Figure \ref{fig:heatmap_LR_sim}). Similarly, boosting shows competitive performance for the E-value: while not consistently superior across all setups, its performance improves with larger sample sizes. For the E-value, our methods perform well for $n=10, 20$, but their advantage diminishes as the sample size increases.

\begin{figure}[!ht]
    \centering
    \includegraphics[width=1\linewidth]{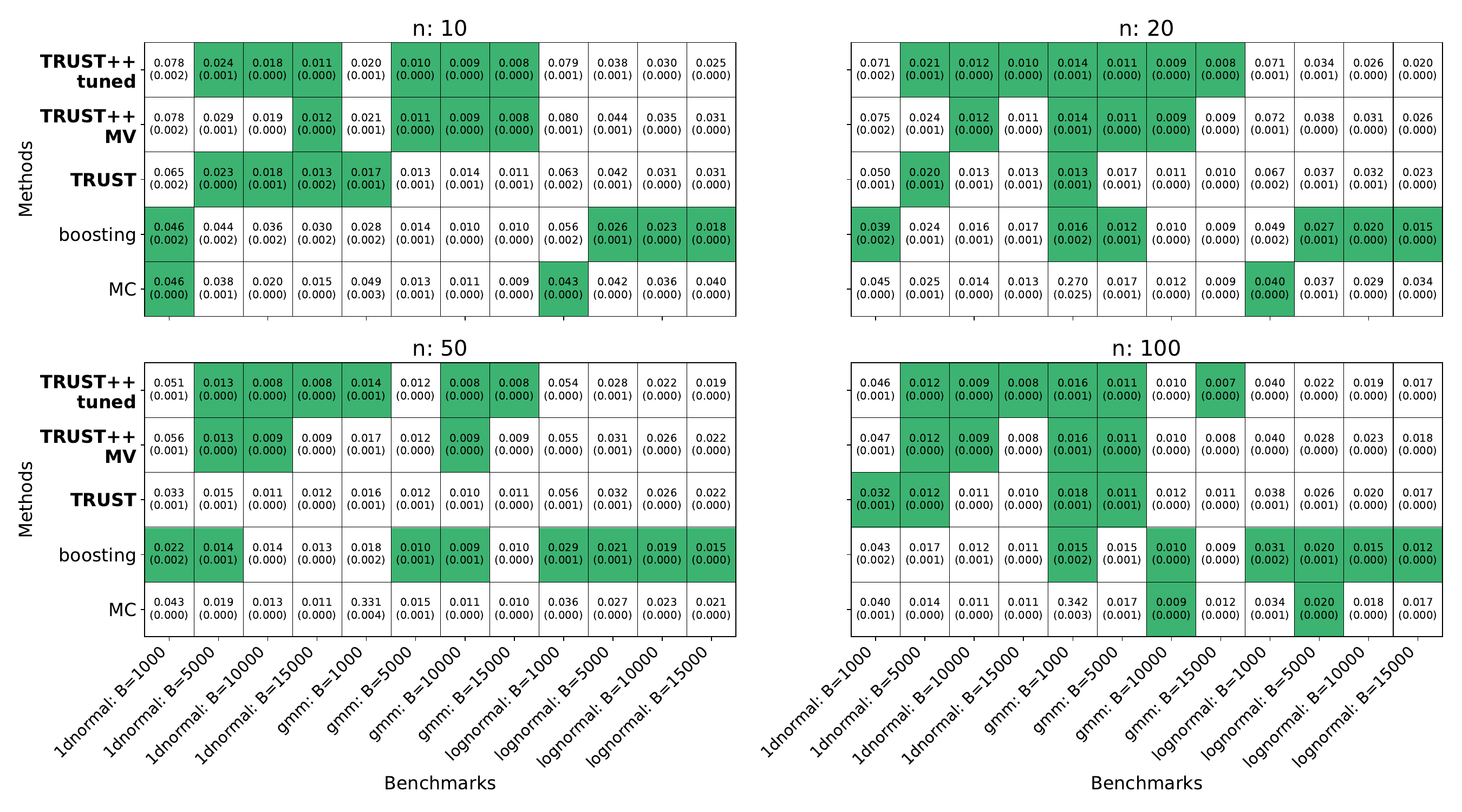}
    \caption{Best methods (lowest MAE) for BFF on tractable models. \ourmethodpp{} tuned, \ourmethodpp{}, and boosting are best performers in general. For $n=10, 100$, \ourmethod{} is also a decent competitor, and MC is among the best in a few setups, proving to be the least robust choice.}
    \label{fig:heatmap_bff_sim}
\end{figure}

\begin{figure}[!ht]
    \centering
    \includegraphics[width=1\linewidth]{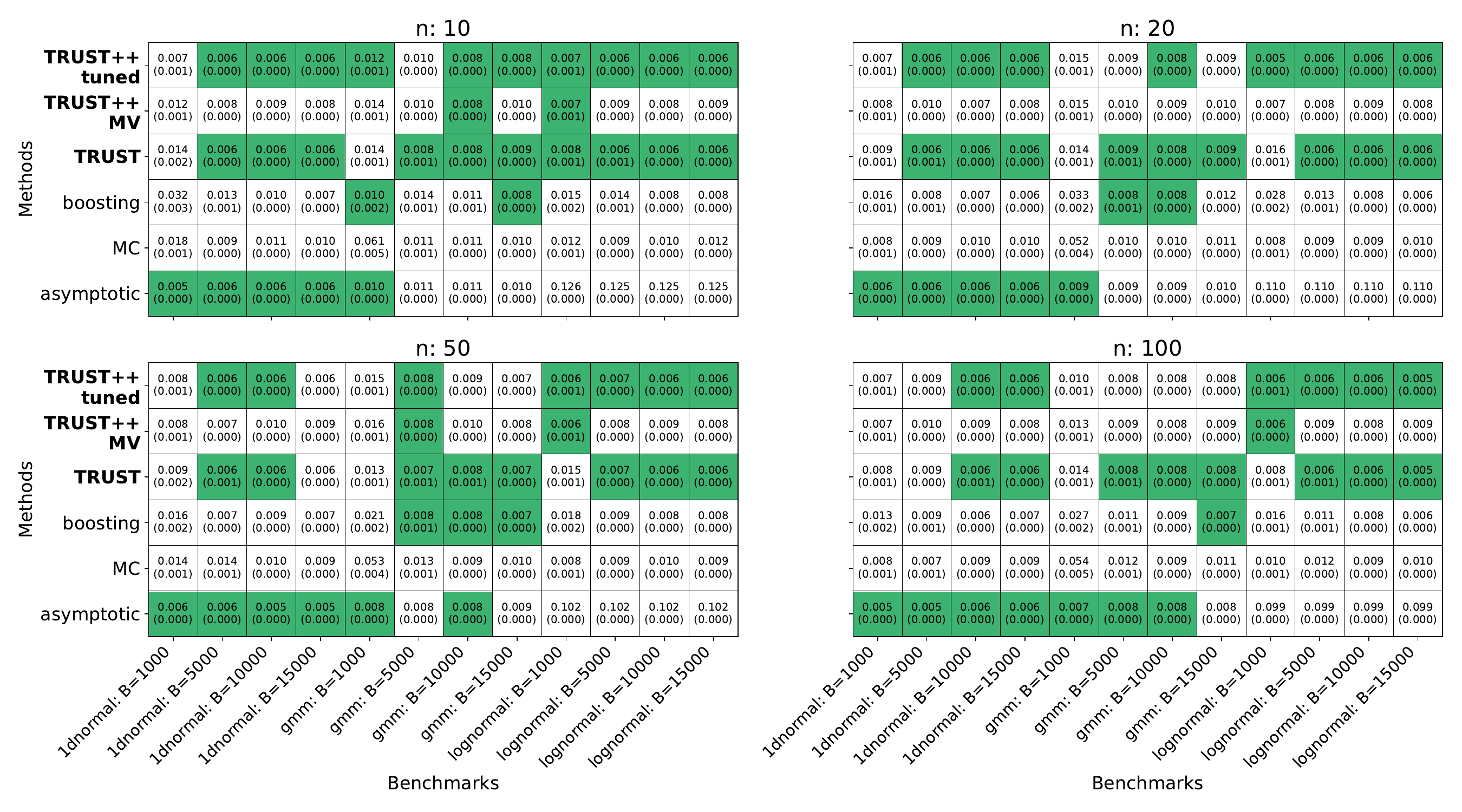}
    \caption{Best methods (lowest MAE) for LR on tractable models. The asymptotic is always among the best competitors for the Normal model with fixed variance D.G.P., and also appears in some setups for the GMM. Again, our methods are consistently good performers, especially \ourmethodpp{} tuned and \ourmethod{}. }
    \label{fig:heatmap_LR_sim}
\end{figure}

\begin{figure}[!ht]
    \centering
    \includegraphics[width=1\linewidth]{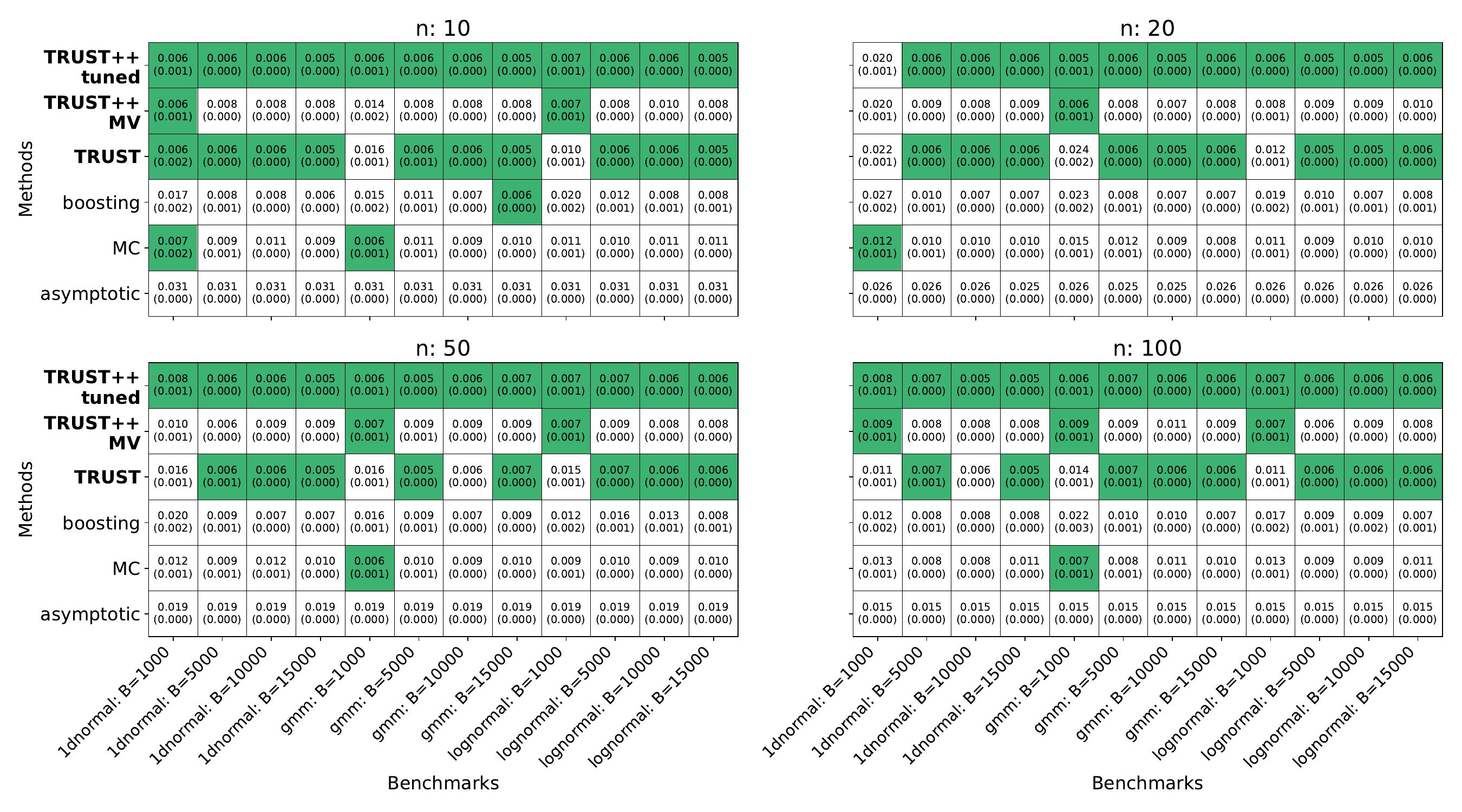}
    \caption{Best methods (lowest MAE) for KS on tractable models. \ourmethodpp{} tuned is among the best methods in 47 out of 48 setups, proving to be a powerful competitor. \ourmethod{} is also very consistent, appearing in the majority os scenarios. MC and \ourmethodpp{} MV have appeared less than the other two, but are considerably more present than boosting and asymptotic, which are not a good choice for this statistic, as shown in the graph.}
    \label{fig:heatmap_KS_sim}
\end{figure}

\begin{figure}[!ht]
    \centering
    \includegraphics[width=1\linewidth]{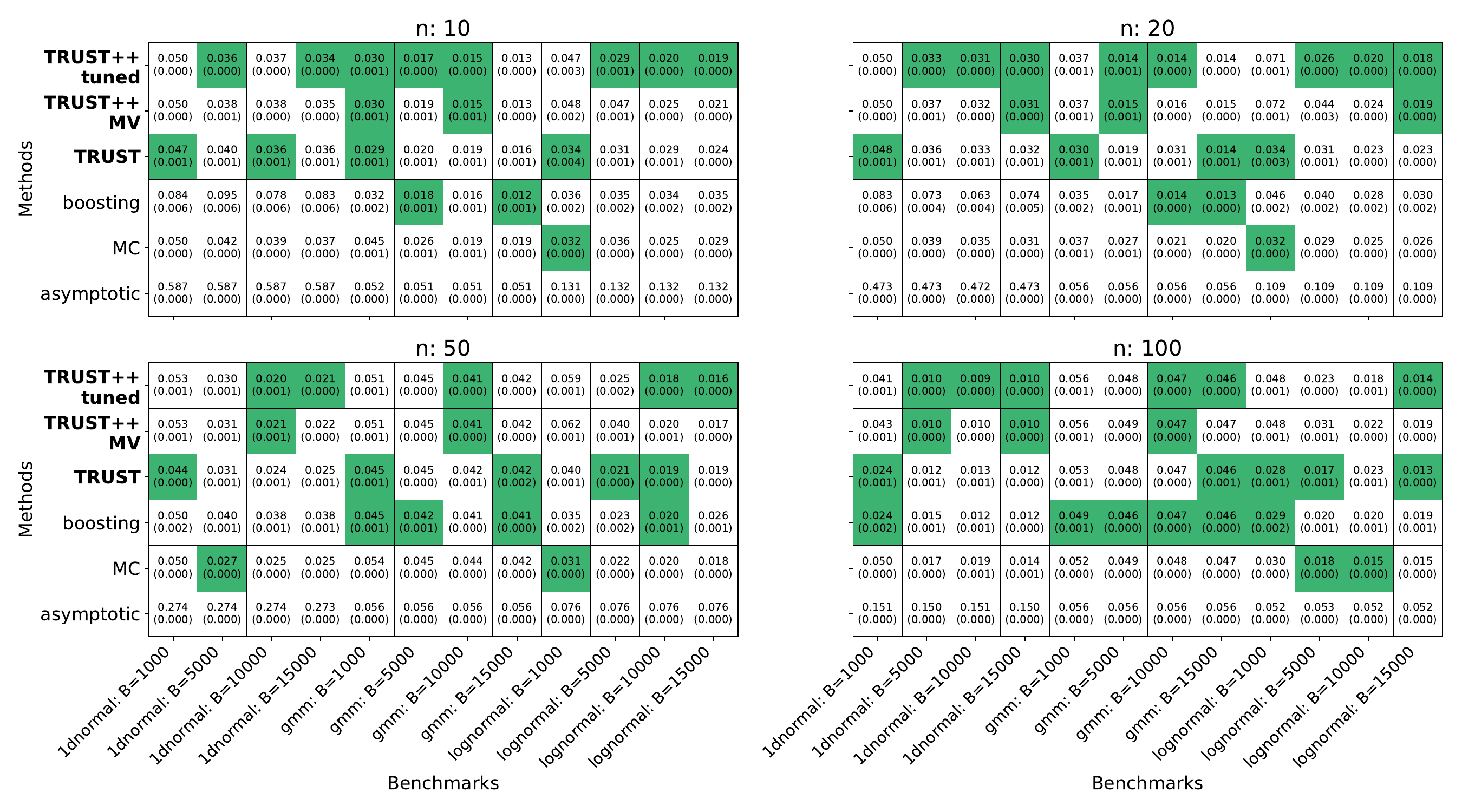}
    \caption{Best methods (lowest MAE) for E-value on tractable models. In this case, we do not see a specific method that shows a clear dominance as in Figure \ref{fig:heatmap_KS_sim}. However, it is possible to notice the poor performance of asymptotic, which is never among the best methods. Our  methods and the boosting are consistent across the setups.}
    \label{fig:heatmap_e_value_sim}
\end{figure}

\subsubsection{Intractable likelihood detailed results}
\label{appendix:intrac_results}

Figures \ref{fig:heatmap_bff_real}, \ref{fig:heatmap_e_value_real}, and \ref{fig:heatmap_waldo_real} present the models’ performance for the BFF, E-value, and Waldo statistics, respectively, in the LFI experiments with sample sizes $n=1, 5, 10, 20$.

\ourmethodpp{} tuned, \ourmethodpp{} MV, and boosting emerge as the overall best performers. \ourmethod{} achieves reasonable results in certain scenarios (e.g., Waldo on the SIR benchmark with $n=10$) but rarely ranks among the top methods. MC, in contrast, consistently performs poorly for intractable likelihoods.

For $n=1$, \ourmethodpp{} tuned, \ourmethodpp{} MV, and boosting achieve strong results across all statistics. Interestingly, \ourmethod{} performs well for the E-value in the tractable generator, showing good coverage control for $B = 15{,}000, 20{,}000, 30{,}000$.

\begin{figure}[!ht]
    \centering
    \includegraphics[width=1\linewidth]{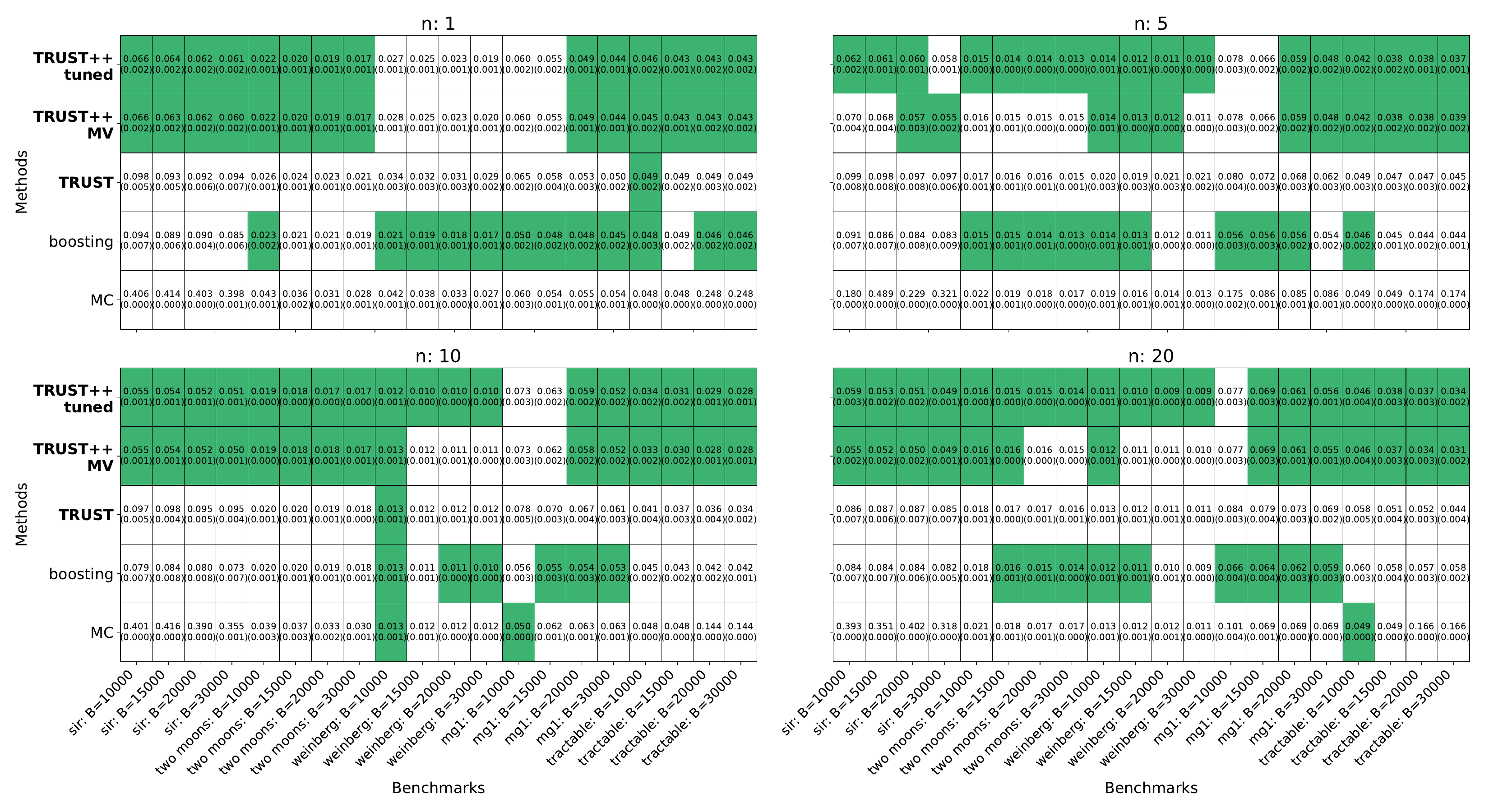}
    \caption{Best methods (lowest MAE) for BFF on intractable models. \ourmethodpp{} tuned, \ourmethodpp{} MV and boosting are the strongest competitors in this scenario. In 79 out of 80 experiments, at least one of our methods is highlighted as the best performer. MC and \ourmethod{} performances are consistently poorer than their competitors.}
    \label{fig:heatmap_bff_real}
\end{figure}

\begin{figure}[!ht]
    \centering
    \includegraphics[width=1\linewidth]{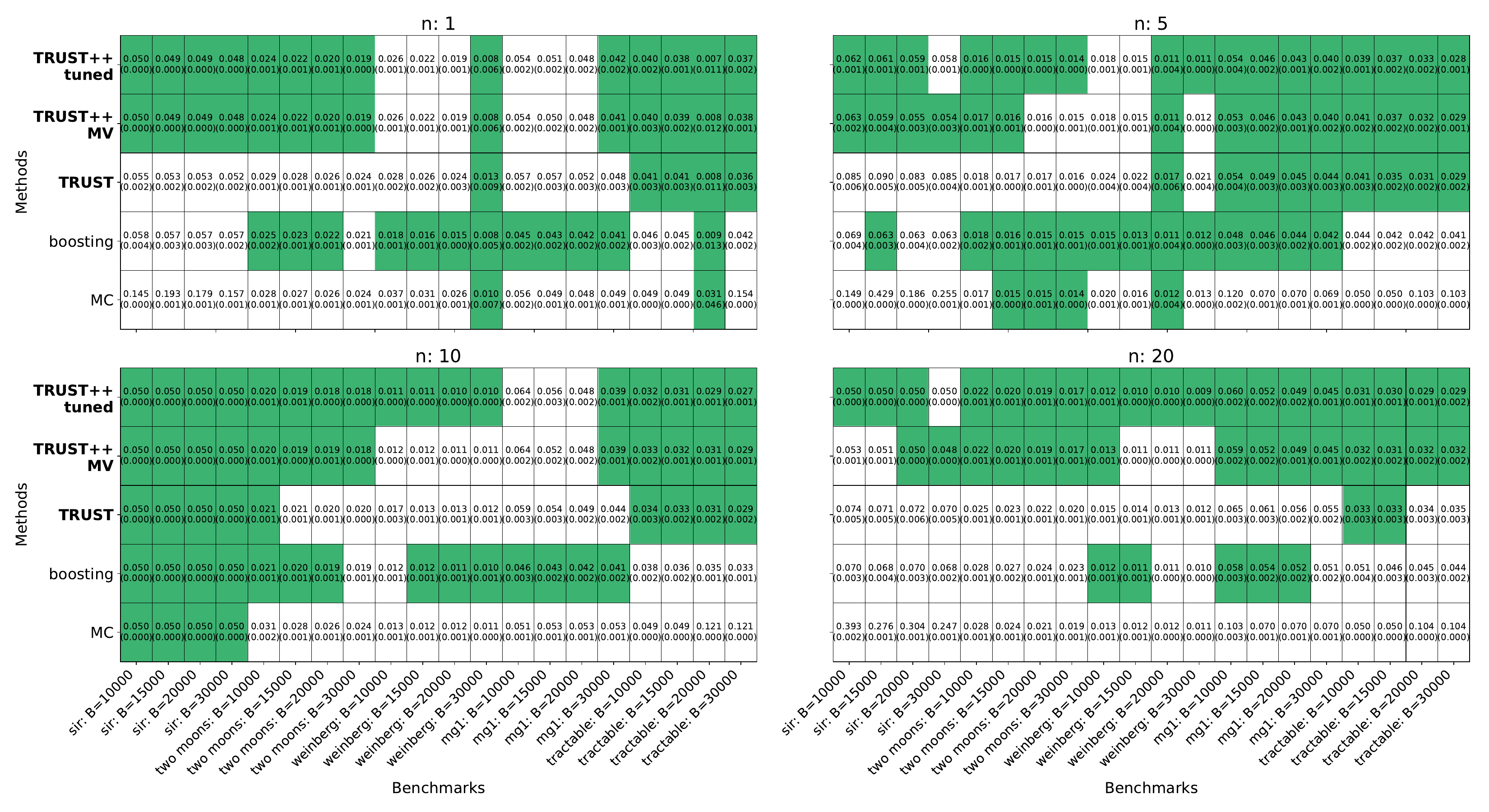}
    \caption{Best methods (lowest MAE) for E-value on intractable models. As on BFF, \ourmethodpp{} tuned, \ourmethodpp{} MV and boosting are the most consistent competitors. However, for this statistic, \ourmethod{} stands out in some scenarios, especially when $n=5, 10$ (i.e., not in the extreme cases).}
    \label{fig:heatmap_e_value_real}
\end{figure}

\begin{figure}[!ht]
    \centering
    \includegraphics[width=1\linewidth]{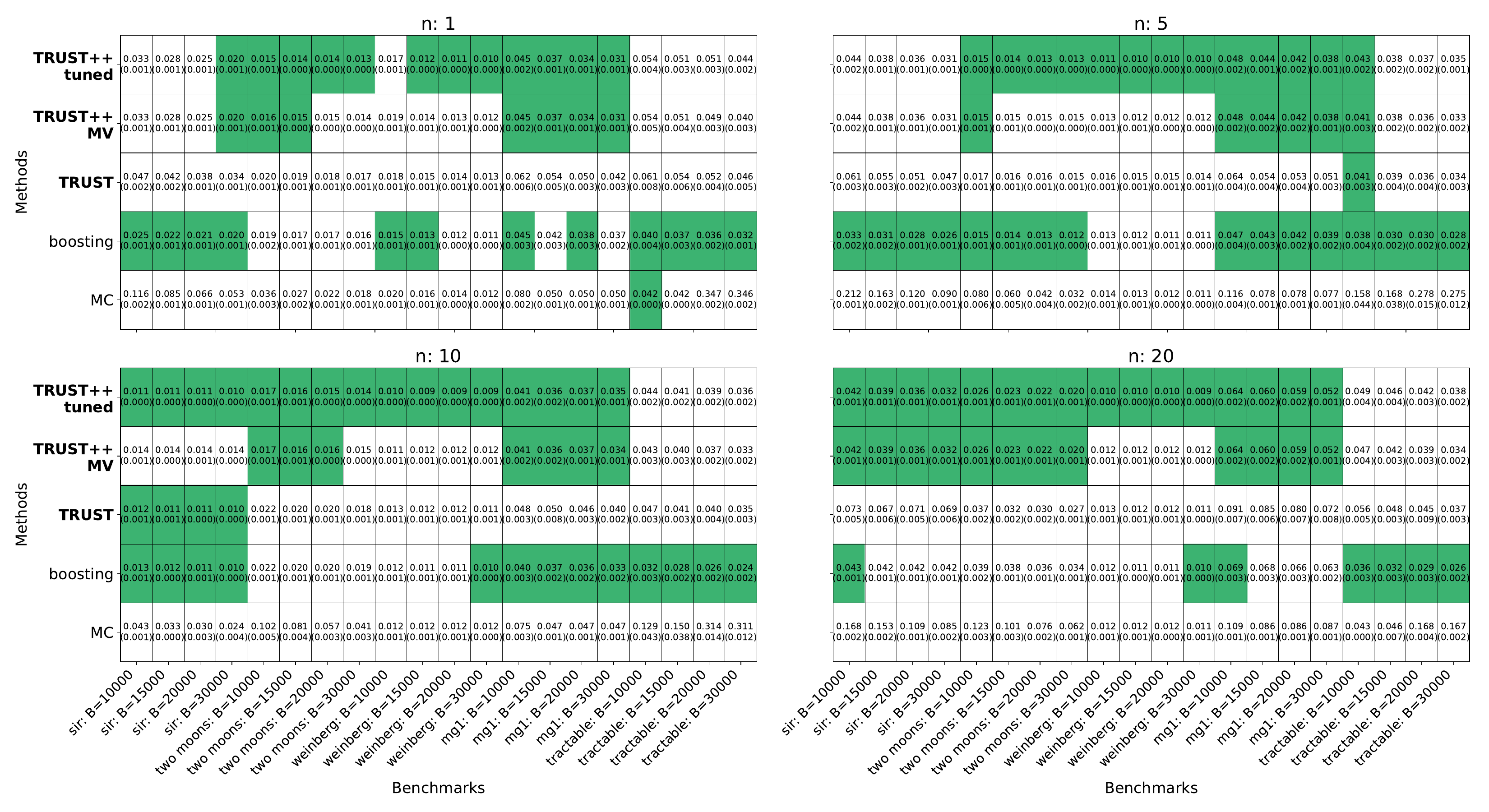}
    \caption{Best methods (lowest MAE) for Waldo on intractable models. Similarly to BFF, \ourmethodpp{} tuned, \ourmethodpp{} MV and boosting are the methods that stand out the most. MC only appears as the winner one time out of 80 experiments, and \ourmethod{} has a modest performance, with good results for the SIR benchmark with $n=10$.}
    \label{fig:heatmap_waldo_real}
\end{figure}

\section{Algorithm for tuning $M$}
\label{appendix:alg_tune_M}
\begin{algorithm}[hbt!]
\caption{Tuning algorithm for $M$}\label{alg:tune_M}
\KwData{Validation grid size $B_{\text{tune}}$; grid of $M$'s between $0$ to $K$ $M_{\text{grid}}$; fitted \ourmethodpp{}; number of simulated statistics $n_{\text{sim}}$ in each grid point}
\KwResult{Tuned value of $M$, $M_{\text{tuned}}$}
simulate $\Theta_{\text{grid}} = (\theta_1, \dots, \theta_{B_{\text{tune}}})$ from the prior \;
\For{$M_{\text{cand}} \in M_{\text{grid}}$}{
compute $\widehat{C}_{\theta}$ in \ourmethodpp{} with $M$ fixed as $M_{\text{cand}}$ for each $\theta \in \Theta_{\text{grid}}$ \;
compute $\text{cover}_{\alpha}(\widehat{C}, \theta)$ for each $\theta \in \Theta_{\text{grid}}$ (as in Eq. \ref{eq:cover_cutoff})\;
$MAE_{M_{\text{cand}}} \gets MAE(\widehat{C}, \alpha)$ (as in Eq. \ref{eq:MAE_cutoff}) \;
}
$M_{\text{tuned}} \gets \mathop{\arg \min}\limits_{M_{\text{cand}} \in M_{\text{grid}}} \left\{ MAE_{M_{\text{cand}}} \right\}$ \;
\Return{$M_{\text{tuned}}$}
\end{algorithm}

\section{Proofs}
\label{appendix:proofs}

\subsection{Section \ref{sec:methodology} - Methodology}

\begin{proof}[Proof of Theorem \ref{thm:uncertainty_coverage}]
    Notice that 
    \begin{align*}
        \P\left(\theta \in \mathcal{I}(\x)|\theta \notin R(\x) \right)
    &=\P\left(\tau(\x,\theta) \geq  \widehat{C}^U_\theta| \tau(\x,\theta) \leq  C_\theta \right) \\
    &=\P\left(\widehat{C}^U_\theta \leq C_\theta | \tau(\x,\theta) \leq  C_\theta \right) \leq \beta/2,
    \end{align*}
    which proves the first statement. The proof for the second statement is analogous.
\end{proof}

\subsection{Section \ref{sec:theory} - Theoretical Results}


To summarize, in \ourmethod{}, the partition is created by building a regression tree that uses \( \theta \) as the input and \( \tau(\x, \theta) \) as the output. This tree, trained on the simulated data 
\[(\theta_1, \tau(\X_1, \theta_1)), \ldots, (\theta_B, \tau(\X_B, \theta_B)),\] 
naturally induces a partition on \( \Theta \).

For \ourmethodpp{}, we extend this approach by generating \( K \) regression trees. We then use Breiman's proximity measure \(\rho(\theta', \theta)\), which counts the number of times \(\theta'\) and \(\theta\) appear together in the same leaf across these \( K \) trees. Finally, we define the partition \(\mathcal{A}\) based on this proximity measure, using the equivalence relation \(\theta \sim \theta' \iff \rho(\theta', \theta) = K\); in other words, \(\theta\) and \(\theta'\) must appear together in the same leaves across all \( K \) trees.

This framework ensures that the partition \(\mathcal{A}\) meets the necessary structural conditions to apply conformal prediction as stated in Theorem \ref{thm:partition-based-coverage}.

\begin{proof}[Proof of Theorem \ref{thm:partition-based-coverage}]
This is a straightforward application of conformal prediction (see \cite{angelopoulos2023conformal}). The complete proof can be found in detail in \cite[Theorem 2]{cabezas2025regression}.
\end{proof}

To prove Theorem \ref{thm:asympt_conditional} we need the following lemma. 

\begin{Lemma}\label{lemma:expectation_to_conditional_prob}
Let \( H(t|\theta) = \P(\tau(\X, \theta) \leq t | \theta) \) and \( A \) be a measurable set. Suppose \( p(x, \theta) \) is the joint PDF of \((\X, \theta)\), satisfying \(\int p(x, \theta) \, dx = r(\theta) > 0\). Then we have
\[
\E[H(t|\theta) | \theta \in A] = \P(\tau(\X, \theta) \leq t | \theta \in A).
\]
\end{Lemma}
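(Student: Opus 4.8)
The plan is to recognize the statement as an instance of the law of iterated expectations, in which the inner conditioning is on the full value of $\theta$ and the outer conditioning is on the coarser event $\{\theta \in A\}$. First I would rewrite the conditional CDF as a conditional expectation, $H(t\mid\theta) = \E[\mathbf{1}(\tau(\X,\theta)\le t)\mid\theta]$, so that the left-hand side becomes a nested conditional expectation. Since $\{\theta \in A\}$ is $\sigma(\theta)$-measurable, the tower property collapses the nesting, yielding $\E[\mathbf{1}(\tau(\X,\theta)\le t)\mid\theta\in A] = \P(\tau(\X,\theta)\le t\mid\theta\in A)$, which is exactly the right-hand side. This is the conceptual content of the lemma; the remaining work is to make it rigorous using the density $p(x,\theta)$ supplied in the hypothesis.

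To carry out the argument by direct integration, I would write the conditional expectation given the event as a ratio,
\[
\E[H(t\mid\theta)\mid\theta\in A] = \frac{\int_A H(t\mid\theta)\,r(\theta)\,d\theta}{\int_A r(\theta)\,d\theta},
\]
using that the marginal density of $\theta$ is $r(\theta) = \int p(x,\theta)\,dx > 0$. The next step is to expand $H(t\mid\theta)$ via the conditional density of $\X$ given $\theta$, namely $p(x,\theta)/r(\theta)$, so that $H(t\mid\theta)\,r(\theta) = \int \mathbf{1}(\tau(x,\theta)\le t)\,p(x,\theta)\,dx$. Substituting this into the numerator and swapping the order of integration gives
\[
\int_A\!\int \mathbf{1}(\tau(x,\theta)\le t)\,p(x,\theta)\,dx\,d\theta = \P\big(\tau(\X,\theta)\le t,\ \theta\in A\big),
\]
while the denominator is $\int_A r(\theta)\,d\theta = \P(\theta\in A)$. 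The ratio is then precisely $\P(\tau(\X,\theta)\le t\mid\theta\in A)$, completing the identity.

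The only technical point requiring care is the interchange of integration order in the numerator; since the integrand $\mathbf{1}(\tau(x,\theta)\le t)\,p(x,\theta)$ is nonnegative, Tonelli's theorem justifies the swap with no additional integrability hypotheses. The positivity assumption $r(\theta)>0$ guarantees that the conditional density $p(x,\theta)/r(\theta)$ is well defined, and it ensures $\int_A r(\theta)\,d\theta = \P(\theta\in A) > 0$ whenever $A$ has positive reference measure, so the conditioning is meaningful. I therefore expect no substantive obstacle: the result is essentially bookkeeping around the definition of conditional expectation, with the main content being the correct identification of the conditional density of $\X$ given $\theta$ and the Tonelli-justified exchange of integrals.
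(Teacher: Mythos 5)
Your proposal is correct and follows essentially the same route as the paper's proof: writing the conditional expectation as $\P(\theta\in A)^{-1}\int_A H(t|\theta)\,r(\theta)\,d\theta$, expanding $H(t|\theta)$ via the conditional density $p(x,\theta)/r(\theta)$, and exchanging the order of integration to recover the joint probability $\P(\tau(\X,\theta)\le t,\ \theta\in A)$. The only additions are the explicit tower-property framing and the Tonelli justification, which the paper leaves implicit.
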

\begin{proof}
Define \(c \coloneq \P(\theta \in A)\), the result follows from a straightforward calculation, as shown next.
\begin{align*}
     \E[H(t|\theta)|\theta \in A] &= \frac{1}{\P(\theta \in A)}\E[H(t|\theta) \mathbb{I}[\theta \in A]]\\
     &= c^{-1}\int H(t|\theta) \mathbb{I}[\theta \in A]r(\theta)d\theta\\
     &= c^{-1}\int \P(\tau(\X,\theta)\leq t | \theta) \mathbb{I}[\theta \in A]r(\theta)d\theta\\
     &= c^{-1}\int \left(\frac{1}{r(\theta)}\int \mathbb{I}[\tau(x,\theta)\leq t] p(x,\theta) dx \right) \mathbb{I}[\theta \in A]r(\theta)d\theta\\
     &= c^{-1}\int \int \mathbb{I}[\tau(x,\theta)\leq t] \mathbb{I}[\theta \in A] p(x,\theta)dx d\theta\\
     &= c^{-1} \P(\tau(\X,\theta)\leq t, \theta \in A)\\
     &= \P(\tau(\X,\theta)\leq t | \theta \in A).
\end{align*}

\end{proof}

Under the Assumption \ref{assumption:strong_consistency}, for a sufficiently large \( B \), we achieve \( H(t | \theta') \approx \widehat{H}_B(t | \theta') \), for any $\theta'\in\Theta$ and $t\in\R$.  Based on this approximation, we outline here the intuition for how Theorem \ref{thm:asympt_conditional} will be established in detail later. 

Specifically, for a fixed \( \theta \) of interest, by Lemma \ref{lemma:expectation_to_conditional_prob}:
\begin{align*}
    \P(\tau(\X, \theta') \leq t \mid \theta' \in A(\theta))
    &= \E[H(t | \theta') \mid \theta' \in A(\theta)] \\
    &\approx \E[\widehat{H}_B(t | \theta') \mid \theta' \in A(\theta)],
\end{align*}
where the first equality results from a straightforward calculation, with its proof provided in the appendix.

By the construction of the partitions in \ourmethod{} and \ourmethodpp{}, if \( \theta' \in A(\theta) \), then \( \widehat{H}_B(t | \theta') = \widehat{H}_B(t | \theta) \). Therefore,
\begin{align*}
    \P(\tau(\X, \theta') \leq t \,|\, \theta' \in A(\theta)) &\approx \E[\widehat{H}_B(t | \theta') \,|\, \theta' \in A(\theta)] \\
    &= \E[\widehat{H}_B(t | \theta) \,|\, \theta' \in A(\theta)] \\
    &\approx \E[H(t | \theta) \,|\, \theta' \in A(\theta)] \\
    &= H(t | \theta).
\end{align*}
This  implies that, as long as our approximation \( \widehat{H}_B\) closely matches \( H \), our partition-based estimate will serve as a reliable approximation of the test statistics distribution. In particular, as discussed in Section \ref{sec:partitioned_based_estimate}, let \( \widehat{C}_{\theta,B} = \widehat{H}^{-1}_B(\alpha | \theta) \) denote the adjusted \( \alpha \)-quantile of the values \( \{\tau(\X_b, \theta_b) : b \in I_{A(\theta)}\} \). Then,
\begin{align*}
 \P\left( \theta \not\in \widehat R_B(\X)|\theta\right) = H(\widehat{C}_{\theta,B} | \theta) \approx \P(\tau(\X, \theta') \leq \widehat{C}_{\theta,B} \,|\, \theta' \in A(\theta)) = \alpha,
\end{align*}
suggesting that our methods indeed should achieve  optimal coverage.

\begin{proof}[Proof of Theorem \ref{thm:asympt_conditional}]

    Fix $\theta$. By Assumption \ref{assumption:strong_consistency}, for any $\varepsilon,\delta>0$, there exists $B_0$ and a subset $\Gamma \subset (\Theta\times \mathcal{X})^B$ such that $\P(\Gamma)\geq 1-\delta$ and, conditionally on $\Gamma$,
\[\sup_{\tilde{t}\in\R, \tilde{\theta}\in\Theta}\left|\widehat{H}_B(\tilde{t}|\tilde{\theta})-H(\tilde{t}|\tilde{\theta})\right|\leq\varepsilon.\]
By Lemma \ref{lemma:expectation_to_conditional_prob},
\begin{align*}
    \P(\tau(\X,\theta')\leq t | \theta' \in A(\theta)) &= \E[H(t|\theta')\,|\, \theta' \in A(\theta)].
\end{align*}
Conditionally on $\Gamma$, we know that 
\[H(t|\theta') \geq \widehat{H}_B(t|\theta') - \varepsilon\]
holds uniformly for any possible value of $\theta' \in \Theta, t \in \R$. Thus,
\begin{align*}
    \P(\tau(\X,\theta')\leq t | \theta' \in A(\theta),\Gamma) &= \E[H(t|\theta')\,|\, \theta' \in A(\theta),\Gamma] \\
    &\geq \E[\widehat{H}_B(t|\theta') - \varepsilon\,|\, \theta' \in A(\theta),\Gamma] \\
    &= \E[\widehat{H}_B(t|\theta')\,|\, \theta' \in A(\theta),\Gamma] - \varepsilon.
\end{align*}
Note that if $\theta' \in A(\theta)$, then \( A(\theta') = A(\theta) \), hence \( \widehat{H}_B(t|\theta') = \widehat{H}_B(t|\theta) \). Therefore,
\begin{align*}
    \P(\tau(\X,\theta')\leq t | \theta' \in A(\theta),\Gamma) &\geq \E[\widehat{H}_B(t|\theta)\,|\, \theta' \in A(\theta),\Gamma] - \varepsilon \\
    &= \E[\widehat{H}_B(t|\theta)\,|\, \theta' \in A(\theta),\Gamma] - \varepsilon.
\end{align*}
Using the fact that, conditionally on $\Gamma$, 
\[\widehat{H}_B(t|\theta) \geq H(t|\theta) - \varepsilon,\]
we obtain
\begin{align*}
    \P(\tau(\X,\theta')\leq t | \theta' \in A(\theta),\Gamma) &\geq \E[\widehat{H}_B(t|\theta)\,|\, \theta' \in A(\theta),\Gamma] - \varepsilon \\
    &\geq \E[H(t|\theta)\,|\, \theta' \in A(\theta),\Gamma] - 2\varepsilon.
\end{align*}
Since \( H(t|\theta) \) is constant, it follows that
\begin{align*}
    \P(\tau(\X,\theta')\leq t | \theta' \in A(\theta),\Gamma) &\geq \E[H(t|\theta)\,|\, \theta' \in A(\theta),\Gamma] - 2\varepsilon \\
    &= H(t|\theta)\E[1\,|\, \theta' \in A(\theta),\Gamma] - 2\varepsilon \\
    &= H(t|\theta) - 2\varepsilon.
\end{align*}
Thus, for any $t \in \R$:
\begin{align*}
    \P(\tau(\X,\theta')\leq t | \theta' \in A(\theta)) &\geq \P(\tau(\X,\theta')\leq t | \theta' \in A(\theta),\Gamma)\P(\Gamma) \\
    &\geq (1-\delta)(H(t|\theta) - 2\varepsilon) \\
    &= H(t|\theta) - \delta H(t|\theta) - 2(1-\delta)\varepsilon \\
    &\geq H(t|\theta) - \delta - 2\varepsilon + 2\delta\varepsilon \\
    &\geq H(t|\theta) - \delta - 2\varepsilon.
\end{align*}
In a similar manner, we can show that 
\begin{align*}
    \P(\tau(\X,\theta')\leq t | \theta' \in A(\theta),\Gamma) &\leq H(t|\theta) + 2\varepsilon.
\end{align*}
Following \cite[Assumption 3]{meinshausen2006quantile}, we assume there exists \(0 < \gamma < 0.5\) such that \(\P(\theta' \in A(\theta)) \geq \gamma B\). This assumption is reasonable, as it implies that the partitions (i.e., the leaves) generated by the tree-based estimators are uniformly well-populated. In other words, each leaf contains a sufficient number of observations, ensuring that the coverage properties hold consistently across the partitions. Therefore,
\begin{align*}
    \P(\tau(\X,\theta')\leq t | \theta' \in A(\theta)) &\leq \P(\tau(\X,\theta')\leq t | \theta' \in A(\theta),\Gamma)\P(\Gamma) + \frac{\delta}{\gamma B} \\
    &\leq \P(\tau(\X,\theta')\leq t | \theta' \in A(\theta),\Gamma) + \delta \\
    &\leq H(t|\theta) + 2\varepsilon + \delta.
\end{align*}
Thus, for any $\varepsilon,\delta > 0$ and any $t \in \R$:
\[
| \P(\tau(\X,\theta')\leq t | \theta' \in A(\theta)) - H(t|\theta) | \leq 2\varepsilon + \delta.
\]
Taking \( t = \widehat{C}_{\theta,B}(1-\alpha) \), by the construction of the empirical quantile over $A(\theta)$, 
\[
|(1-\alpha) - H(\widehat{C}_{\theta,B}(1-\alpha) | \theta)| \leq 2\varepsilon + \delta,
\]
which is equivalent to
\[
|(1-\alpha) - \P\left(\theta \in \widehat{R}_B(\X) | \theta \right)| \leq 2\varepsilon + \delta.
\]
Since we can make $\varepsilon$ and $\delta$ arbitrarily small by choosing $B$ large enough, we conclude that
\[
\lim_{B\to \infty} \P\left(\theta \in \widehat{R}_B(\X) | \theta \right) = 1 - \alpha.
\]

\end{proof}

\end{document}